\newtheorem{theorem}{Theorem}
\newtheorem{lemma}{Lemma}
\newtheorem{proposition}{Proposition}
\newtheorem{definition}{Definition}
\newtheorem{corollary}{Corollary}
\newcommand{\nn}{\nonumber}
\newcommand{\dd}{{\rm d}}
\newcommand{\w}{\wedge}
\newcommand{\End}{{\rm End}}
\newcommand{\cd}{\check\dd}
\newcommand{\tr}{{\rm tr}}
\newcommand{\dth}{{{\dd}}_\theta}
\newcommand{\cdth}{{{\check\dd}}_\theta}
\newcommand{\be}{\begin{equation}}
\newcommand{\ee}{\end{equation}}
\def\bea#1\eea{\begin{align}#1\end{align}}
\title{The infinitesimal moduli space of heterotic $G_2$ systems
}
\author[a]{Xenia de la Ossa,}
\author[b]{Magdalena Larfors,}
\author[c,d,e]{Eirik E.~Svanes}
\affiliation[a]{Mathematical Institute, Oxford University\\Andrew Wiles Building, Woodstock Road\\Oxford OX2 6GG, UK }
\affiliation[b]{Department of Physics and Astronomy,Uppsala University\\ SE-751 20 Uppsala, Sweden}
\affiliation[c]{Sorbonne Universit\'es, CNRS, LPTHE, UPMC Paris 06, UMR 7589, 75005 Paris, France}
\affiliation[d]{Sorbonne Universit\'es, Institut Lagrange de Paris, 98 bis Bd Arago, 75014 Paris, France}
\emailAdd{delaossa@maths.ox.ac.uk, magdalena.larfors@physics.uu.se, esvanes@lpthe.jussieu.fr}
\abstract{
Heterotic string compactifications on integrable $G_2$ structure manifolds $Y$ with instanton bundles $(V,A), (TY,\tilde{\theta})$ yield supersymmetric three-dimensional vacua that are of interest in physics. In this paper, we define a covariant exterior derivative $\cal D$ and show that it  is equivalent to a heterotic $G_2$ system encoding the geometry of the heterotic string compactifications. This operator $\cal D$ acts on a bundle ${\cal Q}=T^*Y\oplus{\rm End}(V)\oplus{\rm End}(TY)$ and satisfies a nilpotency condition  $\check{\cal D}^2=0$, for an appropriate projection of $\cal D$. Furthermore,  we determine the infinitesimal moduli space of these systems and show that it corresponds to the finite-dimensional cohomology group $\check H^1_{\check{\cal D}}(\cal Q)$. We comment on the similarities and differences of our result with Atiyah's well-known analysis of deformations of holomorphic vector bundles over complex manifolds. Our analysis leads to results that are of relevance  to all orders in the $\alpha'$ expansion.
}
\begin{document}

\maketitle
\flushbottom

\newpage
\section{Introduction}

A {\it heterotic $G_2$ system} is a quadruple $([Y,\varphi], [V, A], [TY, \tilde\theta], H)$ where $Y$ is a seven dimensional manifold with an integrable $G_2$ structure $\varphi$, $V$ is a bundle on $Y$ with connection $A$, $TY$ is the tangent bundle of $Y$ with connection $\tilde\theta$, and $H$ is a three form on $Y$ determined uniquely by the $G_2$ structure.  Both connections are instanton connections, that is, they satisfy 
\[
F\wedge\psi=0\:,\qquad \tilde R\wedge\psi = 0~,
\]
where $\psi = *\varphi$, $F$ is the curvature two form of the connection $A$ on the bundle $V$, and $\tilde R$ is the curvature two form of the connection $\tilde\theta$ on $TY$.   The three form $H$ must satisfy a constraint
\[
H = \dd B + \frac{\alpha'}{4}\, (CS(A) - CS(\tilde\theta))~,\]
where $CS(A)$ and $CS(\tilde\theta)$ are the Chern-Simons forms for the connections $A$ and $\tilde\theta$ respectively, and $B$ is a two-form \footnote{Note that even though the $B$ field is called a ``two form'' is not a well defined tensor as it transforms under gauge transformations of the bundles. However, $B$ transforms in such a way that the three form $H$ is in fact well defined.}. This constraint, called the anomaly cancelation condition, mixes the geometry of $Y$ with that of the bundles. These structures have significant mathematical and physical interest. The main goal of this paper is to describe the tangent space to the moduli space of these systems.  

Determining the structure of the moduli space of supersymmetric heterotic string vacua has been an open problem since the work of Strominger and Hull \cite{Strominger:1986uh,Hull:1986kz} in 1986, in which the geometry was first described for the case of compactifications on six dimensional manifolds  with $H$-flux (Calabi--Yau compactifications without flux were first constructed by Candelas {\it et.al.} \cite{Candelas:1985en}). The geometry for the seven dimensional case was later discussed in \cite{Gunaydin:1995ku,Gauntlett:2001ur, 2001math......2142F, Firedrich:2003, Gauntlett:2003cy,Ivanov:2003nd}. Over the last 30 years very good efforts have been made to understand various aspects of the moduli of these heterotic systems. The geometric moduli space for heterotic Calabi--Yau compactifications  was determined early on \cite{Candelas:1990pi}. More recently, the infinitesimal moduli space has been determined  for heterotic Calabi--Yau compactifications with holomorphic vector bundles \cite{Anderson:2009nt,Anderson:2011ty}, and subsequently for the full Strominger--Hull system \cite{Anderson:2014xha, delaOssa:2014cia,Garcia-Fernandez:2015hja,Candelas:2016usb}.  Furthermore, the geometric moduli for $G_2$ holonomy manifolds have been determined by Joyce \cite{joyce1996:1,joyce1996:2}, and explored further in the references \cite{joyce2000,Hitchin:2000jd,Gutowski:2001fm,Beasley:2002db,Dai2003,deBoer:2005pt,2007arXiv0709.2987K,Grigorian:2008tc}. Finally, deformations of $G_2$ instanton bundles have been studied \cite{donaldson1998gauge, 2009arXiv0902.3239D, sa2009instantons, 2014SIGMA..10..083S,Ball:2016xte}.

Integrable $G_2$ geometry has features in common with even dimensional complex geometry. One can define a {\it canonical differential complex}  $\check\Lambda^*(Y)$ as a sub complex of the de Rham complex \cite{fernandez1998dolbeault}, and the associated cohomologies $\check H^*(Y)$ have similarities with the Dolbeault complex of complex geometry.  Heterotic vacua on seven dimensional non-compact manifolds with an integrable $G_2$ structure lead to four-dimensional domain wall solution that are of interest in physics \cite{Gurrieri:2004dt,deCarlos:2005kh,Gurrieri:2007jg,Held:2010az, Lukas:2010mf, Klaput:2011mz,Gray:2012md,Klaput:2012vv,Klaput:2013nla,Gemmer:2013ica,Grana:2014vxa,Maxfield:2014wea,Haupt:2014ufa,Minasian:2017eur}, and whose moduli determine the massless sector of the four-dimensional theory. Furthermore, families of $SU(3)$ structure manifolds can be studied through an embedding in integrable $G_2$ geometry. Through such embeddings, variations of complex and hermitian structures of six dimensional manifolds are put on equal footing. The $G_2$ embeddings can also be used to study flows of $SU(3)$ structure manifolds \cite{Hitchin:2000jd,chiossi,delaOssa:2014lma}.

These results from physics and mathematics prompts and paves the way for our research on the combined  infinitesimal moduli space ${\cal TM}$ of heterotic $G_2$ systems $([Y,\varphi], [V, A], [TY, \tilde\theta], H)$. This study is an extension of our work  \cite{delaOssa:2016ivz}, where we determined the combined infinitesimal moduli space ${\cal TM}_{(Y,[V,A],[TY,\tilde\theta])}$ of heterotic $G_2$ systems with $H=0$, where $Y$ is a $G_2$ holonomy manifold. The canonical cohomology for manifolds with an integrable $G_2$ structure mentioned above can be extended to bundle valued cohomologies for bundles $(V,A)$ on $Y$, as long as the connection $A$ is an instanton \cite{reyes1993some, carrion1998generalization}. As the instanton condition is the heterotic supersymmetry condition for the gauge bundle, the corresponding canonical cohomologies feature prominently in the moduli problems of heterotic compactifications.  
We find in particular, a  $G_2$ analogue of Atiyah's deformation space for holomorphic systems \cite{atiyah1957complex}. We restrict ourselves in the current paper to scenarios where the internal geometry $Y$ is compact, though we are confident that the analysis can also be applied in non-compact scenarios such as the domain wall solutions \cite{Gurrieri:2004dt,deCarlos:2005kh,Gurrieri:2007jg,Held:2010az, Lukas:2010mf, Klaput:2011mz,Gray:2012md,Klaput:2012vv,Klaput:2013nla,Gemmer:2013ica,Grana:2014vxa,Maxfield:2014wea,Haupt:2014ufa,Minasian:2017eur}, provided suitable boundary conditions are imposed.  

As a first step, we describe the infinitesimal moduli space of manifolds with an integrable $G_2$ structure.  We do this in terms of one forms with values in $TY$. On manifolds with $G_2$ holonomy, the infinitesimal moduli space of compact manifolds $Y$ \cite{joyce1996:1,joyce1996:2} is contained in $\check H^1(Y,TY)$ \cite{deBoer:2005pt,delaOssa:2016ivz} which is finite-dimensional  \cite{reyes1993some, carrion1998generalization}. For manifolds with integrable $G_2$ structure, the differential constraints on the geometric moduli are much weaker, and the infinitesimal moduli space of $Y$ need not be a finite dimensional space.  This is analogous to the infinite dimensional hermitian moduli space of the $SU(3)$ structure manifolds of the Strominger--Hull systems \cite{Becker:2005nb,Becker:2006xp}. Expressing the geometric deformations in terms of $TY$-valued one forms has another important consequence: using this formalism makes it easier to describe {\it finite deformations} of the geometry. We will use the full power of this mathematical framework  in a future publication \cite{delaOssa17} to study the finite deformation complex of integrable $G_2$ manifolds.

We then extend our work to a description of the deformations of $([Y,\varphi],[V,A])$ requiring that the instanton constraint is preserved. As mentioned above, we find a structure that resembles Atiyah's analysis of deformations of holomorphic bundles. Specifically, we find that the infinitesimal moduli space  ${\cal TM}_{([Y,\varphi],[V,A])}$ is contained in
\begin{equation*}
\check H^1(Y,{\rm End}(V))\oplus\ker(\check{\cal F})\:,
\end{equation*}
where we define a $G_2$ Atiyah map $\cal F$ by \cite{delaOssa:2016ivz}
\begin{equation*}
\check{\cal F}\::\;\;\; {\cal TM}_Y\rightarrow\check H^2(Y,{\rm End}(V))\:,
\end{equation*}
which a linear map  given in terms of the curvature $F$.  
The space ${\cal TM}_Y$ denotes the infinitesimal geometric moduli of $Y$ which, as noted above, can be infinite dimensional but reduces to $\check H^1(Y,TY)$ in the case where $Y$ has $G_2$ holonomy as showed in \cite{delaOssa:2016ivz}.

Finally we consider the full heterotic $G_2$ system, including  the heterotic anomaly cancellation equation. When combined with the instanton conditions on the bundles, we show that the constraints on the heterotic $G_2$ system $([Y,\varphi], [V,A], [TY,\tilde\theta],H)$ can be rephrased in terms of a nilpotency condition $\check{\cal D}^2=0$ on the operator $\cal D$ acting on a bundle 
\begin{equation*}
{\cal Q}=T^*Y\oplus{\rm End}(V)\oplus{\rm End}(TY)\:.
\end{equation*}
It should be noted that, in contrast to compactifications of   six dimensional complex manifolds studied in \cite{Anderson:2009nt,Anderson:2011ty,Anderson:2014xha, delaOssa:2014cia, Garcia-Fernandez:2015hja}, the operator $\check{\cal D}$ does not define ${\cal Q}$ as an extension bundle as, we will see, it is {\it not upper triangular}. 
We proceed to show that the infinitesimal heterotic moduli are elements in the cohomology group
\begin{equation*}
{\cal TM}=\check H^1_{\check{\cal D}}(\cal Q)\:.
\end{equation*}
Consequently, the infinitesimal moduli space of heterotic $G_2$ systems is of finite dimension. Our analysis complements the findings of \cite{Clarke:2016qtg}, where methods of elliptic operator theory was used to show that the infinitesimal moduli space of heterotic $G_2$ compactifications is finite dimensional when the $G_2$ geometry is compact.

The rest of this paper is organised as follows: Section \ref{sec:bg} reviews  $G_2$ structures and introduces mathematical tools we need in our analysis. Section \ref{sec:finitemod} discusses infinitesimal deformations of manifolds $Y$ with integrable $G_2$ structure. In section \ref{sec:instantons} we discuss the infinitesimal deformations of $([Y,\varphi],[V,A])$, and in section \ref{sec:InfDefHet} we deform the full heterotic $G_2$ system $([Y,\varphi],[V,A], [TY, \tilde\theta], H)$. We conclude and point out directions for further studies in section \ref{sec:concl}. Three appendices with useful formulas, curvature identities and a summary of heterotic supergravity complement the main discussion.

\section{Background material} \label{sec:bg}
This section summarises the mathematical formalism that we will need to analyse the deformations of heterotic string vacua on manifolds with $G_2$  structure. While we intend for this paper to be self-contained, we will only discuss the tools of need for the present analysis. More complete treatments can be found in  the references stated below.

\subsection{Manifolds with a $G_2$ structure} \label{sec:g2prel}

A  manifold with a $G_2$ structure is a seven dimensional manifold $Y$ which admits a non-degenerate positive associative 3-form $\varphi$ \cite{joyce2000}.  Any seven dimensional manifold which is spin and orientable, that is, its first and second Stiefel-Whitney classes are trivial,  admits a $G_2$ structure.  The 3-form $\varphi$ determines a Riemannian metric $g_\varphi$ on $Y$ given by
\begin{equation}
6 g_\varphi(x, y)\, \dd {\rm vol}_\varphi
= (x\lrcorner\varphi)\wedge(y\lrcorner\varphi)\wedge\varphi~,
\label{eq:g2metric}
\end{equation}
where $x$ and $y$ are any vectors in $\Gamma(TY)$. 
The Hodge-dual of $\varphi$ with respect to this metric is a co-associative 4-form
\[ \psi = *\varphi~.\]
The components of the metric $g_\varphi$ are
\begin{equation}
g_{\varphi\, ab} = \frac{\sqrt{\det g_\varphi}}{3!\, 4!}\, 
\varphi_{a c_1 c_2}\, \varphi_{b c_3 c_4}\, \varphi_{c_5 c_6 c_7}\,
\epsilon^{c_1\cdots c_7}
=  \frac{1}{4!}\, 
\varphi_{a c_1 c_2}\, \varphi_{b c_3 c_4}\, 
\psi^{c_1 c_2 c_3 c_4}~,
\label{eq:g2metricab}
\end{equation}
where
\[ \dd x^{a_1\cdots a_7} = \sqrt{\det g_\varphi}
\ \epsilon^{a_1\cdots a_7}\, \dd {\rm vol}_\varphi~.\]
Note that with respect to this metric, the 3-form $\varphi$, and hence its Hodge dual $\psi$, are  normalised so that
\[ \varphi\wedge *\varphi = ||\varphi||^2\, \dd{\rm vol}_\varphi
~, \qquad ||\varphi||^2 = \varphi\lrcorner\varphi= 7~.\]
We refer the reader to  \cite{MR916718,bonan66,FerGray82,Hitchin:2000jd,joyce2000,Bryant:2005mz}, and our paper \cite{delaOssa:2016ivz},  for more details on $G_2$ stuctures.

\subsubsection{Decomposition of forms}\label{sssec:formdecomp}
The existence of a $G_2$ structure $\varphi$ on $Y$ determines a decomposition of differential forms on $Y$ into irreducible representations of $G_2$.  This decomposition changes when one deforms the $G_2$ structure. 

Let $\Lambda^k(Y)$ be the space of $k$-forms on $Y$ and $\Lambda_p^k(Y)$ be the subspace of $\Lambda^k(Y)$ of $k$-forms which transform in the $p$-dimensional irreducible representation of $G_2$.   We have the following decomposition for each $k= 0,1, 2, 3$:
\begin{align*}
\Lambda^0 &= \Lambda_1^0
~,\\
\Lambda^1 &= \Lambda_7^1 = T^*Y \cong TY
~,\\
\Lambda^2 &= \Lambda_7^2\oplus \Lambda_{14}^2
~,\\
\Lambda^3 &= \Lambda_1^3\oplus\Lambda_7^3\oplus\Lambda_{27}^3
~.
\end{align*}
The decomposition for $k = 4, 5, 6, 7$ follows from the Hodge dual for $k = 3, 2, 1, 0$ respectively.

Any two form $\beta$ can be decomposed as
\[
\beta = \alpha\lrcorner\varphi + \gamma~,\]
for some $\alpha\in\Lambda^1$ and two form $\gamma\in\Lambda_{14}^2$ which satisfies $\gamma\lrcorner\varphi = 0$ (or equivalently
$\gamma\wedge\psi = 0$) where, by equations \eqref{eq:onephiphi} and \eqref{eq:twophiphi}, we have
\begin{align}
\pi_7(\beta) &= \frac{1}{3}\, (\beta\lrcorner\varphi)\lrcorner\varphi= \frac{1}{3}\, (\beta + \beta\lrcorner\psi)\label{eq:proj2to7}
~,\\
\pi_{14}(\beta) &= \frac{1}{3}\, (2\beta - \beta\lrcorner\psi)\label{eq:proj2to14}
~.
\end{align}
That is, we can characterise the decomposition of $\Lambda^2$ as follows:
\begin{align}
\Lambda_7^2 &= 
 \{ \alpha\lrcorner\varphi: \alpha\in \Lambda^1\}
 = \{\beta\in\Lambda^2: (\beta\lrcorner\varphi)\lrcorner\varphi = 3\,\beta\}
= \{\beta\in \Lambda^2: \beta\lrcorner\psi = 2\, \beta\}
~,\label{eq:2F7}\\
\Lambda_{14}^2 &= 
\{\beta\in \Lambda^2: \beta\lrcorner\varphi = 0\} 
= \{\beta\in \Lambda^2: \beta\wedge\psi = 0\}
= \{\beta\in \Lambda^2: \beta\lrcorner\psi = -\, \beta\}
~.\label{eq:2F14}
\end{align}
 The decomposition of $\Lambda^5$ is easily obtained by taking the Hodge dual of the decomposition of $\Lambda^2$, and we can write any five-form as
\[ \beta = \alpha\wedge\psi + \gamma~,\]
where $\alpha\in \Lambda^1$,  and $\gamma\in\Lambda_{14}^5$ satisfies $\psi\lrcorner\gamma = 0$.  The decomposition of $\Lambda^5$ are then analogous to \eqref{eq:2F7}-\eqref{eq:2F14}, and can be found in \cite{delaOssa:2016ivz}. An alternative representation of five-forms is
\[ \beta = \alpha\wedge\psi + \varphi\wedge\sigma~,\]
where $\sigma\in\Lambda_{14}^2$ and $*\gamma = -\sigma$.    The components $\alpha$ and $\sigma$ can be obtained by performing the appropriate contractions with $\psi$ or $\varphi$ respectively
\begin{equation*}
\alpha = \frac{1}{3}\, \psi\lrcorner\beta~,\qquad
\sigma = \varphi\lrcorner\beta - \frac{2}{3}\, (\psi\lrcorner\beta)\lrcorner\varphi~.
\end{equation*}

Any three form $\lambda$ can be decomposed into
\begin{equation}
 \lambda = f\, \varphi + \alpha\lrcorner\psi + \chi~,
 \label{eq:decomp3}
 \end{equation}
for some function $f$, some $\alpha\in\Lambda^1$, and some three form $\chi\in\Lambda^3_{27}$ which satisfies
\[ \chi\lrcorner\varphi = 0~,\qquad
{\rm and}\qquad \chi\lrcorner\psi = 0~.\]
Another way to characterise and decompose a three form is in terms of a one form $M$ with values in the tangent bundle. Given such form $M\in\Lambda^1(TY)$, there is a unique three form
\begin{equation}
\lambda = \frac{1}{2}\, M^a\wedge\varphi_{abc}\, \dd x^{bc}~.\label{eq:matrix3}
\end{equation}
Conversely, a three form $\lambda$ determines a unique one from $M\in \Lambda^1(TY)$ 
\begin{align}
\frac{1}{4}\,  \varphi^{cd}{}_a\, \lambda_{bcd} &= \frac{1}{2}\,  g_{ab}\, {\rm tr}M +  M_{ab}
+ \frac{1}{2}\,  M_{cd}\, \psi^{cd}{}_{ab}
\nn\\[5pt]
& = \frac{9}{14}\, g_{ab}\, {\rm tr}M + h_{ab} + 3\, (\pi_7 (m))_{ab}~, \label{eq:decomp3bis}
\end{align}
where the matrix $M_{ab}$ is defined as 
\[ M_{ab} = g_{ac}\, (M^c)_b~,\]
and we have set
\be
 h_{ab} = M_{(ab)} - \frac{1}{7}\, g_{ab}\, {\rm tr} M~,
 \qquad\qquad m = \frac{1}{2}\, M_{[ab]} \, \dd x^{ab}~.\label{eq:handm}
 \ee
Comparing the decompositions \eqref{eq:matrix3} and  \eqref{eq:decomp3} we have
 \begin{align}
 f &=  \frac{3}{7}\, {\rm tr}M = \frac{1}{7}\, \varphi\lrcorner\lambda~,
 \label{eq:three1}\\[5pt]
 \alpha &=  - m\lrcorner\varphi~,
 \qquad\qquad \pi_7(m) = - \frac{1}{3}\, \alpha\lrcorner\varphi
 =  \frac{1}{4!}\, \varphi^{cd}{}_a\, \lambda_{bcd}\, \dd x^{ab}~,
 \label{eq:three7}\\[5pt]
 \chi &= \frac{1}{2}\, h_a^d\, \varphi_{bcd}\, \dd x^{abc}~,
 \qquad\qquad h_{ab} = \frac{1}{4}\, \varphi^{cd}{}_{(a}\, \chi_{b)cd}~.
 \label{eq:three27}
 \end{align}
 In other words, regarding $M$ as a matrix, $\pi_1(\lambda)$ corresponds to the trace of $M$, $\pi_7(\lambda)$
corresponds to $\pi_7(m)$ where $m$ is the antisymmetric part of $M$, and the elements in $\Lambda^3_{27}$ to the traceless symmetric 2-tensor $h_{ab}$.  It is in fact easy to check that $\chi\in\Lambda^3_{27}$ as  $\chi\lrcorner\psi= 0$ due to the symmetric property of $h$, and $\varphi\lrcorner\chi = 0$ due to $h$ being traceless.

 The decomposition of four forms can be obtained similarly.   Any four form $\Lambda$ decomposes into 
\begin{equation}
 \Lambda = \tilde f\, \psi + \tilde\alpha\wedge\varphi + \gamma~.
 \label{eq:fourdecomp}
 \end{equation}
where $\tilde f$ is a smooth function on $Y$, $\tilde\alpha$ is a one-form, and $\gamma\in\Lambda^4_{27}$ which means 
 $\varphi\lrcorner\gamma = 0$ and  $\psi\lrcorner\gamma = 0$.  We can also characterise and decompose four forms in terms of a one form $N$ with values in the tangent bundle
\begin{equation}
 \Lambda = \frac{1}{3!}\, N^a\wedge\psi_{abcd}\, \dd x^{bcd}~.
 \label{eq:fourtoM}
 \end{equation}
In this case
\[- \frac{1}{12}\, \psi^{cde}{}_a\,\Lambda_{bcde}
= \frac{8}{7}\, g_{ab}\, {\rm tr}N + S_{ab}  + 3 (\pi_7(n))_{ab}~,\]
where
\[
 S_{ab} = N_{(ab)} - \frac{1}{7}\, g_{ab}\, {\rm tr} N~,
 \qquad\qquad n = \frac{1}{2}\, N_{[ab]} \, \dd x^{ab}~.\]
The decomposition of the four form $\Lambda$ into irreducible representations of $G_2$,  is given  in terms of $N$ by
 \begin{align}
 \tilde f &=  \frac{4}{7}\, {\rm tr}N = \frac{1}{7}\, \psi\lrcorner\Lambda
 \label{eq:four1}\\[5pt]
 \tilde\alpha &=   n\lrcorner\varphi~,
 \qquad\qquad \pi_7(n) =   \frac{1}{3}\, \tilde\alpha\lrcorner\varphi
 = -  \frac{1}{3\cdot 4!}\, \psi^{cde}{}_a\, \Lambda_{bcde}\, \dd x^{ab}
 \label{eq:four7}\\[5pt]
 \gamma &= \frac{1}{3!}\, h_a^e\, \psi_{ebcd}\, \dd x^{abcd}~,
 \qquad\qquad h_{ab} = - \frac{1}{12}\, \psi^{cde}{}_{(a}\, \gamma_{b)cde}~.
 \label{eq:four27}
 \end{align}
It is easy to check that, in fact, $\gamma\in\Lambda^4_{27}$, as $\varphi\lrcorner\gamma = 0$ due to the symmetric property of $h$, and $\psi\lrcorner\gamma = 0$ due to $h$ being traceless.  Of course, this characterisation and decomposition of four forms can also be obtained using Hodge duality.  
Note also that if $\gamma\in\Lambda^4_{27}$ is given by $\gamma = *\chi$ where $\chi\in\Lambda^3_{27}$, then for
\[ \chi = \frac{1}{2}\, h_a^d\, \varphi_{bcd}\, \dd x^{abc}~,\]
we have
\[ \gamma = *\chi = - \frac{1}{3!}\, h_a^e\, \psi_{ebcd}\, \dd x^{abcd}~.\]

We will use these characterisations of three and four forms in terms of one forms with values in $TY$ to describe deformations of the $G_2$ structure, in particular, the deformations of the $G_2$ forms $\varphi$ and $\psi$.  It is important to keep in mind that only $\pi_7(m)$ and $\pi_7(n)$ appear in these decompositions.  In fact, we have not set $\pi_{14}(m)$ or $\pi_{14}(n)$ to zero as these automatically drop out.    Later, when extending our discussion of the moduli space of heterotic string compactifications, the components
$\pi_{14}(m)$ or $\pi_{14}(n)$ will enter in relation to deformations of the $B$-field.

\subsubsection{The intrinsic torsion}
\label{sec:inttor}

Decomposing into representations of $G_2$ the exterior derivatives of $\varphi$ and $\psi$ we have
\begin{align}
\dd\varphi &= \tau_0\psi + 3\, \tau_1\wedge\varphi + *\tau_3~,
\label{eq:Intphi}\\
\dd\psi &= 4\, \tau_1\wedge\psi + *\tau_2~,\label{eq:Intpsi}
\end{align}
where the forms $\tau_i\in \Lambda^i(Y)$ are called the {\it torsion classes}.  These forms are {\it uniquely} determined by the $G_2$-structure 
$\varphi$ on $Y$ \cite{FerGray82}. We note that $\tau_2\in \Lambda^2_{14}$ and that $\tau_3\in \Lambda^3_{27}$. 
A $G_2$ structure for which 
\[ 
\tau_2 = 0~,
\]
will be called an {\it integrable} $G_2$ structure following Fern\'andez-Ugarte \cite{fernandez1998dolbeault}. In this paper we will derive some results for manifolds with a general $G_2$ structure, however we will be primarily interested in integrable $G_2$ structures which are particularily relevant for heterotic strings compactifications.

We can write equations \eqref{eq:Intphi} and \eqref{eq:Intpsi} in terms of $\tau_2$ and a three form $H$ defined as
\begin{equation}
H = \frac{1}{6}\, \tau_0\, \varphi - \tau_1\lrcorner\psi - \tau_3~.\label{eq:H}
\end{equation}
In fact, one can prove that
\begin{align}
\dd \varphi &= \frac{1}{4}\, H_{ab}{}^e\, \varphi_{ecd}\, \dd x^{abcd}~,
\label{eq:dphitorsion}
\\[5pt]
\dd\psi &=  \frac{1}{12}\, H_{ab}{}^f\, \psi_{fcde}\, \dd x^{abcde} + *\tau_2~.
\label{eq:dpsitorsion}
\end{align}
The proof is straightforward using identities \eqref{eq:g2phiep}, \eqref{eq:star327one}, \eqref{eq:onephipsi} and \eqref{eq:star327two}.

Let us end this discussion with a remark on the connections on $Y$.  Let $Y$ be a manifold which has a $G_2$ structure $\varphi$, and let  $\nabla$ be a metric connection on $Y$ compatible with the $G_2$ structure, that is
\[ \nabla g_\varphi=0~, \qquad \nabla\varphi = 0~.\]
We say that the connection $\nabla$ has $G_2$ holonomy.   The conditions $\nabla\varphi=0$ and $\nabla\psi = 0$ imply equations  \eqref{eq:dphitorsion} and \eqref{eq:dpsitorsion} respectively, and the three form  $H$ corresponds to the torsion of the unique connection which is totally antisymmetric which exists {\it only if} $\tau_2=0$ \cite{Bryant:2005mz}.

\subsubsection{The canonical cohomology}\label{sec:cancom}
Before we go on, we need to introduce the concept of a ``Dolbeault complex'' for manifolds with an integrable $G_2$ structure. This complex is appears naturally in the analysis of  infinitesimal and finite deformations of integrable $G_2$ manifolds and heterotic compactifications. It was first considered in \cite{reyes1993some,fernandez1998dolbeault}, and discussed extensively in \cite{delaOssa:2016ivz}, so we will limit our discussion to  the necessary definitions and theorems. In the ensuing sections, we will use and generalise these results.

To construct a sub-complex of the de Rham complex of $Y$, we define the analogue of a Dolbeault operator on a complex manifold
\begin{definition} The differential operator $\cd$ is defined by 
the maps 
\begin{align*}
\cd_0&: \Lambda^0(Y)\rightarrow\Lambda^1(Y)~,\qquad\qquad  
\cd_0f = \dd f~, \qquad\quad f\in\Lambda^0(Y)
~,\\
\cd_1&: \Lambda^1(Y)\rightarrow\Lambda_7^2(Y)~,\qquad\qquad 
\cd_1\alpha = \pi_7(\dd \alpha)
~, \quad \alpha\in\Lambda^1(Y)
~,\\
\cd_2 &: \Lambda_7^2(Y)\rightarrow\Lambda_1^3(Y)~,\qquad\qquad
\cd_2\beta = \pi_1(\dd\beta)~, \quad \beta\in\Lambda_7^2(Y)~.
\end{align*}
That is, 
\begin{equation*}
\cd_0=\dd~,\quad\cd_1=\pi_7\circ\dd~,\quad\cd_2=\pi_1\circ\dd ~.
\end{equation*}
\end{definition}
Then we have the following theorem \cite{reyes1993some, fernandez1998dolbeault}
\begin{theorem}\label{prop:dolbcomplex}
Let $Y$ be a manifold with a $G_2$ structure. Then
\begin{equation}
\label{eq:dolb}
0\rightarrow\Lambda^0(Y)\xrightarrow{\cd}\Lambda^1(Y)\xrightarrow{\cd}\Lambda^2_7(Y)\xrightarrow{\cd}\Lambda^3_1(Y)\rightarrow0
\end{equation}
is a differential complex, i.e. $\cd^2=0$ if and only if the $G_2$ structure is integrable, that is, $\tau_2 = 0$ .
\end{theorem}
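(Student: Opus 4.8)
The plan is to verify the two composite maps $\cd_1\circ\cd_0$ and $\cd_2\circ\cd_1$ and show that each vanishes precisely when $\tau_2=0$, after noting that $\cd^2=0$ on the complex means exactly that these two compositions are zero. I would begin with the first one, $\cd_1\cd_0 f = \pi_7(\dd\,\dd f)=0$, which is automatic since $\dd^2=0$ and is therefore independent of the torsion. So the whole content of the theorem lives in the second composition, $\cd_2\cd_1\alpha = \pi_1\big(\dd\,\pi_7(\dd\alpha)\big)$ for $\alpha\in\Lambda^1(Y)$.

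The key manoeuvre is to split $\dd\alpha\in\Lambda^2$ into its $G_2$-irreducible pieces using the decomposition \eqref{eq:proj2to7}--\eqref{eq:proj2to14}, writing $\dd\alpha = \pi_7(\dd\alpha)+\pi_{14}(\dd\alpha)$. Applying $\dd$ and using $\dd^2\alpha=0$ gives $\dd\,\pi_7(\dd\alpha) = -\,\dd\,\pi_{14}(\dd\alpha)$, so that
\begin{equation*}
\cd_2\cd_1\alpha = \pi_1\big(\dd\,\pi_7(\dd\alpha)\big) = -\,\pi_1\big(\dd\,\pi_{14}(\dd\alpha)\big)~.
\end{equation*}
This is the crucial reformulation: the obstruction is controlled entirely by how the exterior derivative of a $\Lambda^2_{14}$-form projects onto $\Lambda^3_1$. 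I would then set $\gamma=\pi_{14}(\dd\alpha)\in\Lambda^2_{14}$ and compute $\pi_1(\dd\gamma)=\tfrac17(\varphi\lrcorner\dd\gamma)\,\varphi$. The natural route to a closed-form answer is to integrate by parts against $\varphi$: since $\pi_1$ extracts the $\varphi$-component via contraction, I would evaluate $\varphi\lrcorner\dd\gamma$ by rewriting it through the characterisation \eqref{eq:2F14} of $\Lambda^2_{14}$, namely $\gamma\wedge\psi=0$ equivalently $\gamma\lrcorner\varphi=0$, and differentiating this defining relation so that derivatives of $\psi$ (hence the torsion classes via \eqref{eq:dpsitorsion}) appear.

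The main obstacle, and the technical heart of the proof, is this last step: showing that the $\Lambda^3_1$-component of $\dd\gamma$ for $\gamma\in\Lambda^2_{14}$ is proportional to $\langle\tau_2,\gamma\rangle$ (or equivalently to a contraction of $\gamma$ with $*\tau_2$), so that it vanishes identically iff $\tau_2=0$. Concretely, from $\gamma\wedge\psi=0$ one gets $\dd\gamma\wedge\psi - \gamma\wedge\dd\psi=0$, and substituting \eqref{eq:Intpsi} or \eqref{eq:dpsitorsion} isolates a term $\gamma\wedge *\tau_2$. Since $\dd\gamma\wedge\psi$ measures precisely the $\pi_1$-part of $\dd\gamma$ (as $\varphi\wedge\psi$ spans the relevant top-degree piece up to the normalisation $\|\varphi\|^2=7$ from the excerpt), this produces the identity $\pi_1(\dd\gamma)\propto (\gamma\lrcorner\tau_2)\,\varphi$ up to an explicit constant. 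For the forward direction I conclude $\cd^2=0$ when $\tau_2=0$; for the converse I must exhibit some $\alpha$ (equivalently some $\gamma$ arising as $\pi_{14}(\dd\alpha)$) for which $\gamma\lrcorner\tau_2\neq0$ whenever $\tau_2\neq0$ — this nondegeneracy of the pairing between $\Lambda^2_{14}$ and $\tau_2$, together with the fact that locally every $\gamma\in\Lambda^2_{14}$ is realisable as such a projection, is the delicate point I would handle with most care, likely by a pointwise argument choosing $\alpha$ so that $\pi_{14}(\dd\alpha)$ aligns with $\tau_2$.
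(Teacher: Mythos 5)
The paper does not actually prove this theorem — it is quoted from the references \cite{reyes1993some,fernandez1998dolbeault} — so there is no in-paper argument to compare against; judged on its own, your proposal is correct and is essentially the standard proof. The reduction $\cd_2\cd_1\alpha = -\,\pi_1\big(\dd\,\pi_{14}(\dd\alpha)\big)$ is right, and the key identity comes out exactly as you predict: for a section $\gamma$ of $\Lambda^2_{14}$ one has $\dd(\gamma\wedge\psi)=0$, hence $\dd\gamma\wedge\psi=-\gamma\wedge\dd\psi=-\gamma\wedge *\tau_2$ (the $4\tau_1\wedge\psi$ term of \eqref{eq:Intpsi} is killed by $\gamma\wedge\psi=0$), and since wedging a three-form with $\psi$ extracts precisely its $\Lambda^3_1$ component, this gives the pointwise identity $\pi_1(\dd\gamma)=-\tfrac17\langle\gamma,\tau_2\rangle\,\varphi$, whence $\cd_2\cd_1\alpha=\tfrac17\langle\pi_{14}(\dd\alpha),\tau_2\rangle\,\varphi$. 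Two small remarks. First, a sign slip: $\gamma$ has even degree, so the Leibniz rule gives $\dd\gamma\wedge\psi+\gamma\wedge\dd\psi=0$, not the minus sign you wrote; this does not affect the argument. Second, the converse direction that you flag as the delicate point is in fact immediate from the pointwise identity and needs no statement about realising arbitrary \emph{sections} of $\Lambda^2_{14}$: if $\tau_2(p)\neq0$ at some point $p$, choose $\alpha$ with linear coefficient functions in local coordinates so that $\dd\alpha|_p$ is any prescribed two-form, in particular so that $\pi_{14}(\dd\alpha)|_p=\tau_2(p)$; then $\cd_2\cd_1\alpha|_p=\tfrac17\,\|\tau_2(p)\|^2\,\varphi\neq0$, contradicting $\cd^2=0$. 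With those two touch-ups the proof is complete.
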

We denote the complex \eqref{eq:dolb} by $\check\Lambda^*(Y)$. This complex \eqref{eq:dolb} is, in fact, an elliptic complex \cite{reyes1993some}. The corresponding cohomology ring, $\check H^*(Y)$, is referred to as the canonical $G_2$-cohomology of $Y$ \cite{fernandez1998dolbeault}.

This complex can naturally be extended to forms with values in bundles, just as for holomorphic bundles over a complex manifold. Let $E$ be a bundle over the manifold $Y$ with a one-form connection $A$ whose curvature is $F$. We are interested in instanton connections $A$ on $E$, that is, connections with curvature $F$ which satisfies
\begin{equation}
\label{eq:holbund}
\psi\wedge F=0\:,
\end{equation}
or equivalently, $F\in\Lambda^2_{14}(Y,\End(E))$. We can now define the differential operator
\begin{definition}\label{def:checkdA}
The maps $\cd_{iA}, i=0,1,2$  are given by
\begin{align*}
\cd_{0A}&: \Lambda^0(Y,E)\rightarrow\Lambda^1(Y,E)~,\qquad  
\qquad 
\cd_{0A}f = \dd_A f~, \qquad\quad f\in\Lambda^0(Y,E)
~,\\
\cd_{1A}&: \Lambda^1(Y,E)\rightarrow\Lambda^2_7(Y,E)~,\qquad\qquad 
\cd_{1A}\alpha = \pi_7(\dd_A \alpha)
~, \quad \alpha\in\Lambda^1(Y,E)
~,\\
\cd_{2A} &: \Lambda^2(Y,E)\rightarrow\Lambda^3_1(Y,E)~,\qquad\qquad
\cd_{2A}\beta = \pi_1(\dd_A\beta)~, \quad \beta\in\Lambda_7^2(Y,E)~.
\end{align*}
where the $\pi_i$'s denote projections onto the corresponding subspace.
\end{definition}
It is easy to see that these operators are well-defined under gauge transformations. Theorem \ref{prop:dolbcomplex} can then be generalised to \cite{reyes1993some}:
\begin{theorem}
Let $Y$ be a seven dimensional manifold with a $G_2$-structure. The complex
\begin{equation}
\label{eq:dolbV}
0\rightarrow\Lambda^0(Y,E)\xrightarrow{\cd_A}\Lambda^1(Y,E)\xrightarrow{\cd_A}\Lambda^2_7(Y,E)\xrightarrow{\cd_A}\Lambda^3_1(Y,E)\rightarrow0
\end{equation}
is a differential complex, i.e. $\cd_A^2=0$, if and only if the connection $A$ on $V$ is an instanton and the manifold has an integrable $G_2$ structure. We shall denote the complex \eqref{eq:dolbV} $\check\Lambda^*(Y,E)$.
\end{theorem}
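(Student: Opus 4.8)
The plan is to verify the nilpotency $\cd_A^2=0$ stagewise, checking the two nontrivial compositions $\cd_{1A}\circ\cd_{0A}$ and $\cd_{2A}\circ\cd_{1A}$ and reading off from each the corresponding geometric condition. Throughout I would use that on $E$-valued forms the square of the covariant derivative is $\dd_A^2 = F\wedge(\cdot)$, the curvature acting through $\End(E)$, together with the characterisation $\Lambda^2_{14} = \{\beta:\beta\lrcorner\varphi=0\} = \{\beta:\beta\wedge\psi=0\}$ from \eqref{eq:2F14}, and the fact (following from \eqref{eq:three1}) that for any $E$-valued three-form $\lambda$ one has $\lambda\wedge\psi = (\varphi\lrcorner\lambda)\,\dd{\rm vol}_\varphi$, so that $\pi_1(\lambda)=0 \iff \lambda\wedge\psi=0$.

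First I would treat $\cd_{1A}\cd_{0A}f = \pi_7(\dd_A^2 f) = \pi_7(F\,f)$ for $f\in\Lambda^0(Y,E)$. Since $\pi_7$ acts only on the form part, $\pi_7(F\,f)=(\pi_7 F)\,f$, which vanishes for every section $f$ (equivalently for $f$ ranging over a local frame of $E$) if and only if $\pi_7(F)=0$, i.e.\ $F\in\Lambda^2_{14}(Y,\End(E))$. This is exactly the instanton condition \eqref{eq:holbund}, and this step is routine.

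Next, for $\alpha\in\Lambda^1(Y,E)$ I would split $\dd_A\alpha = \pi_7(\dd_A\alpha)+\pi_{14}(\dd_A\alpha)$ and apply $\dd_A$ again, using $\dd_A^2\alpha = F\wedge\alpha$, to obtain
\begin{equation*}
\pi_1\big(\dd_A\pi_7(\dd_A\alpha)\big) = \pi_1(F\wedge\alpha) - \pi_1\big(\dd_A\pi_{14}(\dd_A\alpha)\big)\,.
\end{equation*}
The first term vanishes once $A$ is an instanton, since $(F\wedge\alpha)\wedge\psi = (F\wedge\psi)\wedge\alpha = 0$. For the second term I would set $\beta=\pi_{14}(\dd_A\alpha)\in\Lambda^2_{14}(Y,E)$, so that $\beta\wedge\psi=0$; differentiating this and using \eqref{eq:Intpsi} gives $(\dd_A\beta)\wedge\psi = -\beta\wedge\dd\psi = -\beta\wedge*\tau_2$, the $\tau_1$ contribution dropping out because $\beta\wedge\psi=0$. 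Hence, component by component in $E$, $\pi_1(\dd_A\beta)$ is proportional to $\beta\wedge*\tau_2 = \langle\beta,\tau_2\rangle\,\dd{\rm vol}_\varphi$, and the whole composition reduces to this pairing with $\tau_2$.

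This reduction settles the two directions. For the ``if'' direction, integrability ($\tau_2=0$) makes the pairing vanish identically, so both compositions are zero. For the ``only if'' direction, having already deduced the instanton condition, $\cd_{2A}\cd_{1A}=0$ forces $\langle\pi_{14}(\dd_A\alpha),\tau_2\rangle=0$ for all $\alpha$. The hard part will be to conclude $\tau_2=0$ from this: I would show that $\alpha\mapsto\pi_{14}(\dd_A\alpha)$ is pointwise surjective onto $\Lambda^2_{14}\otimes E$ — choosing a local frame of $E$ parallel at a point reduces this to the scalar surjectivity of $\alpha\mapsto\pi_{14}(\dd\alpha)$, which holds since $\dd\alpha$ may be prescribed arbitrarily at a point — and then take $\beta$ proportional to $\tau_2$ tensored with a frame element to obtain $\|\tau_2\|^2=0$, whence $\tau_2=0$. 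This nondegeneracy argument, rather than any of the form identities, is the delicate point, and it is precisely where the bundle-valued statement reduces to the scalar Theorem \ref{prop:dolbcomplex}.
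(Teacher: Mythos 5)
Your proof is correct, and in fact the paper does not prove this theorem at all --- it simply cites the result (to Reyes Carri\'on and, for the scalar case, Fern\'andez--Ugarte), so yours is the only argument on the table. Your two-stage reduction is sound: $\cd_{1A}\cd_{0A}f=\pi_7(F)\,f$ isolates the instanton condition cleanly, and the identity $(\dd_A\beta)\wedge\psi=-\beta\wedge\dd\psi=-\langle\beta,\tau_2\rangle\,\dd\mathrm{vol}_\varphi$ for $\beta\in\Lambda^2_{14}(Y,E)$ (with the $\tau_1$ term dropping out because $\beta\wedge\psi=0$) correctly isolates $\tau_2$ in the second composition. You are also right to flag the pointwise surjectivity of $\alpha\mapsto\pi_{14}(\dd_A\alpha)$ as the one step needing care in the ``only if'' direction; your frame argument works, and it can be made even more gauge-free by taking $\alpha=f\otimes e$ with the one-form $f$ vanishing at the chosen point but with $\dd f$ prescribed there, so that the connection term $f\wedge(A\cdot e)$ drops out at that point regardless of the gauge. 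One cosmetic remark: the identity $\lambda\wedge\psi=(\varphi\lrcorner\lambda)\,\dd\mathrm{vol}_\varphi$ is the standard inner-product identity $\lambda\wedge *\varphi=\langle\lambda,\varphi\rangle\,\dd\mathrm{vol}_\varphi$ rather than a consequence of \eqref{eq:three1}; the latter only tells you that $\varphi\lrcorner\lambda$ computes $\pi_1(\lambda)$ up to the factor $7$. Neither point affects the validity of the argument.
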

Note that that the complex \eqref{eq:dolbV} is elliptic, as was shown in \cite{carrion1998generalization}.

\subsection{Useful tools for deformation problems}

In this section, we review and develop tools for the study of the moduli space of (integrable) $G_2$ structures. 
While the ulterior motive to introduce this mathematical machinery is to investigate whether the moduli space of heterotic string compactifications is given by a differential graded Lie Algebra (DGLA), we limit ourselves in this paper to infinitesimal deformations. A more thorough discussion about DGLAs and finite deformations will appear elsewhere \cite{delaOssa17}. For more discussion about the graded derivations, insertion operators and derivatives introduced below, the reader is referred to e.g.~\cite{huybrechts,manetti,kont-soib}.

\subsubsection{Graded derivations and insertion operators}

Let $Y$ be a manifold of arbitrary dimension.

\begin{definition}
A graded derivation $D$ of degree $p$ on a manifold $Y$  is a linear map
\begin{equation*}
D \, : \qquad \Lambda^k(Y) \longrightarrow \Lambda^{p+ k}(Y)~,
\end{equation*}
which satisfies the Leibnitz rule
\begin{equation}
D(\alpha\wedge\beta) = D(\alpha)\wedge\beta + (-1)^{kp}\, \alpha\wedge D(\beta)~.
\label{eq:leibnitz}
\end{equation}
for all $k$-forms $\alpha$ and any form $\beta$.

\end{definition}

\begin{definition}
Let M be a $p$-form with values in $TY$ and let $\alpha$ be a $k$-form.  The  insertion operator $i_M$ is defined by the linear map
\begin{align}
i_M \, : \quad \Lambda^k(Y) &\longrightarrow \Lambda^{p+ k -1}(Y)~,
\nn\\
\alpha &\longmapsto
i_M(\alpha) = \frac{1}{ (k-1)!}~ M^a\wedge\alpha_{ab_1\cdots b_{k-1}}\, \dd x^{b_1\cdots b_{k-1}} = M^a\wedge\alpha_a~,\label{eq:insop}
\end{align}
where we have defined a $(k-1)$ form $\alpha_a$ with values in $T^*Y$ from the $k$-form $\alpha$ by
\[ \alpha_a = \frac{1}{ (k-1)!}~ \alpha_{ab_1\cdots b_{k-1}}\, \dd x^{b_1\cdots b_{k-1}}~.\] 

\end{definition}

\noindent It is not too hard to prove that the insertion operator $i_M$ defines a graded derivation of degree $p-1$, and we leave this as an exercise for the reader.
 
One can extend the definition of the insertion operator to act on the space of forms with values in $\Lambda^n\,TY$, or $\Lambda^n\,T^*Y$, or indeed in $\Lambda^n\,V\times\Lambda^m V^*$, for any bundle $V$ on $Y$. 
For forms with values in any bundle $E$ on $Y$, the insertion operator $i_M$ is the linear map
\begin{align}
i_M \, : \quad \Lambda^k(Y, E) &\longrightarrow \Lambda^{p+ k -1}(Y, E)~,
\label{eq:insopE}
\end{align}
with $i_M(\alpha)$ given by the same formula
\eqref{eq:insop} for any $\alpha\in\Lambda^k(E)$.  Again, it is not too hard to see that this formula defines a graded derivation of degree $p-1$. For example, for every $M\in\Lambda^p(Y, TY)$ and $N\in\Lambda^q(T, TY)$ we define $i_M(N)\in\Lambda^{p+q - 1}(Y,TY)$ by 
\begin{equation}
i_M(N^a) = \frac{1}{(q-1)!}\, M^b\wedge (N^a)_{bc_1\cdots c_{q-1}}
\, \dd x^{c_1\cdots c_{q-1}}
~.\label{eq:extins}
\end{equation}
A further generalisation can be achieved by letting the form $M$ which is being inserted take values in $\Lambda^p(\Lambda^m TY)$ for $m\ge 1$.  For example, the insertion operator $i_M$ for the action of $M\in\Lambda^p(\Lambda^m TY)$ on $N\in\Lambda^q(Y, TY)$ is given by
\[ i_M(N) = M^{a_1\cdots a_m}\wedge N_{a_1\cdots a_m}~,\]
where  $q\ge m$ and
\[ N_{a_1\cdots a_m} = \frac{1}{(q-m)!}\, N_{a_1\cdots a_m b_1\cdots b_{q-m}}
\dd x^{b_1\cdots b_{q-m}}~.\]
In this case, $i_M$ is a derivation of degree $p-m$.

The insertion operators $i_M$ form a Lie algebra with a bracket $[\cdot, \cdot]$ given by
\begin{equation}
[ i_{M}, i_{N} ] = i_M i_N - (-1)^{(p-1)(q-1)}\, i_N i_M
=i_{[M, N]}~, \label{eq:insbracket}
\end{equation}
where $M\in\Lambda^p(Y, TY)$, $N\in\Lambda^q(T, TY)$ and 
\begin{equation}
[M, N]= i_M(N) - (-1)^{(p-1)(q-1)}\,i_N(M)~, \label{eq:NijRich}
\end{equation}
is the Nijenhuis-Richardson bracket, which is a derivation of degree $p+q -1$. The Lie bracket is a derivation of degree $p+q -2$.  To verify \eqref{eq:insbracket}, let $\alpha$ be any $k$-form, (perhaps with values in a bundle $E$ on $Y$).  Then, by the Leibnitz rule \eqref{eq:leibnitz}
\begin{align}
i_M(i_N(\alpha)) &= i_M(N^a\wedge\alpha_a)
= i_M(N^a)\wedge\alpha_a + (-1)^{(p-1)q}\, N^a\wedge i_M(\alpha_a)
\nn\\
&= i_{i_M(N)}(\alpha) + (-1)^{(p-1)q}\, N^a\wedge M^b\wedge\alpha_{ab}~,
\label{eq:doubleins}
\end{align}
where $\alpha_{ab}$ is the $(k-2)$-form obtained from $\alpha$
\[ \alpha_{ab}= \frac{1}{ (k-2)!}~ \alpha_{abc_1\cdots c_{k-2}}\, \dd x^{c_1\cdots c_{k-1}}~.\]
Then noting equation \eqref{eq:NijRich} and that
\[ M^a\wedge N^b\wedge\alpha_{ab} = 
(-1)^{pq + 1}\, N^a\wedge M^b\wedge\alpha_{ab}~,\]
we obtain \eqref{eq:insbracket}.

\begin{definition}
The Nijenhuis-Lie derivative ${\cal L}_M$ along
$M\in\Lambda^p(Y, TY)$ is defined by
\begin{equation}
 {\cal L}_M = [\dd, i_M] = \dd\, i_M + (-1)^p\, i_M\, \dd~,
 \label{eq:NiejLie}
 \end{equation}
where $\dd$ is the exterior derivative.  
\end{definition}
{\noindent}Note that when $p=1$, $M$ is a section of $TY$ and so the Nijenhuis-Lie derivative is the Lie derivative along the vector field $M$. 
The Nijenhuis-Lie derivative is a derivation of degree $p$ acting on the space of forms on $Y$.  

\subsubsection{Covariant derivatives, connections and Lie derivatives}
\label{sec:covder}

We can generalise the definition of the Nijenhuis-Lie to act covariantly on forms with values in any bundle $E$. This was also recently discussed in \cite{2014arXiv1412.2533D}. Suppose that $\alpha$ is $k$-form on $Y$  which transforms in a representation of the gauge group of $E$ with representation matrices $T_I$, where the label $I$ runs over the dimension of the gauge group.   
Then, an exterior covariant derivative we can be written as
\begin{equation}
\dd_A\, \alpha = \dd\, \alpha + A\cdot \alpha~,
\qquad A\cdot \alpha = A^I\, \wedge (T_I\, \alpha)~.
\label{eq:gendA}
\end{equation}
where $A$ is a connection one form on $E$.   Note that
\[ \dd_A^2\alpha = F\cdot\alpha~,\]
where $F$ is the curvature of the connection $A$.

\begin{definition}
Let $E$ be a vector bundle on $Y$ with connection $A$. The covariant Nijenhuis-Lie derivative ${\cal L}^A_M$ along
$M\in\Lambda^p(Y, TY)$ acting on forms on $Y$ which are in a representation of the $E$ is defined by
\begin{equation}
 {\cal L}^A_M = [\dd_A, i_M] = \dd_A\circ i_M + (-1)^p\, i_M\circ \dd_A~.
 \label{eq:CovNiejLie}
 \end{equation}
\end{definition}

Let $\nabla$ be a covariant derivative  on $Y$ with connection symbols $\Gamma$. One can define a covariant derivative $\nabla^A$ on $E\otimes TY$ (to make sense of parallel transport on $E$) by 
\begin{equation}
\nabla_a^A\, \alpha_{c_1\cdots c_k}
= \partial_{A\, a} \, \alpha_{c_1\cdots c_k}
- k\, \Gamma_{a[c_1}{}^b\, \alpha_{|b|c_2\cdots c_k]}
=\nabla_a\, \alpha_{c_1\cdots c_k} + A_a\cdot\alpha_{c_1\cdots c_k}
~,
\label{eq:nablaA}
\end{equation}
where 
\[
\partial_{A\, a} \, \alpha_{c_1\cdots c_k}
= \partial_a \, \alpha_{c_1\cdots c_k}
+ A_a\cdot\alpha_{c_1\cdots c_k}~,
\]

Let $\dth$ be an exterior covariant derivative on $TY$ with connection one form $\theta$ given by
\begin{equation}
 \theta_a{}^b = \Gamma_{ac}{}^b\, \dd x^c~,\label{eq:theta}
\end{equation}
where $\Gamma$ are the connection symbols of a covariant derivative $\nabla$ on $Y$.

\begin{theorem}\label{theorem:CNL}
Let $E$ be a bundle on a manifold $Y$ with connection $A$. 
The covariant Nijenhuis-Lie derivative ${\cal L}^A_M$ along $M\in\Lambda^p(Y, TY)$
satisfies
\begin{equation}
{\cal L}_M^A = [\dd_A, i_M] =  i_{\dth\, M} + (-1)^p\, i_M(\nabla^A)~,
\label{eq:CNL}
\end{equation}
where $\dth$ is an exterior covariant derivative on $TY$ with connection one form
\[
\theta_a{}^b = \Gamma_{ac}{}^b\, \dd x^c~,
\]
and $\nabla^A$ is a covariant derivative on $E\otimes TY$ with connection symbols on $TY$ given by $\Gamma$.
\end{theorem}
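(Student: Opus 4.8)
The plan is to prove the operator identity \eqref{eq:CNL} by evaluating both sides on an arbitrary $k$-form $\alpha$ valued in a representation of the gauge group of $E$, working in a local coordinate frame. The crucial first observation is that the insertion operator can be written as $i_M = M^a\wedge\iota_a$, where $\iota_a$ denotes the interior product with $\partial_a$, so that $\alpha_a=\iota_a\alpha$ in the notation of \eqref{eq:insop}. Since the $M^a$ are ordinary (non-$E$-valued) $p$-forms, the graded Leibnitz rule \eqref{eq:leibnitz} for $\dd_A$ gives
\begin{equation*}
[\dd_A,i_M]\alpha = \dd M^a\wedge\iota_a\alpha + (-1)^p\, M^a\wedge(\dd_A\iota_a+\iota_a\dd_A)\alpha~.
\end{equation*}
The first term already reproduces the $\dd M^a$ piece of $i_{\dth M}$, so the whole computation reduces to understanding the operator $\dd_A\iota_a+\iota_a\dd_A$.

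First I would establish a covariant version of Cartan's magic formula. Writing $\dd_A=\dd+A\cdot$ and using $\iota_a(A\cdot\alpha)+A\cdot\iota_a\alpha=A_a\cdot\alpha$ (which follows since the representation matrices $T_I$ commute with $\iota_a$), one finds
\begin{equation*}
(\dd_A\iota_a+\iota_a\dd_A)\alpha = \mathcal{L}_{\partial_a}\alpha+A_a\cdot\alpha = \mathcal{L}^A_{\partial_a}\alpha~,
\end{equation*}
the covariant Lie derivative along the coordinate field $\partial_a$. Because $\partial_a$ has constant components, the ordinary $\mathcal{L}_{\partial_a}$ acts on a form simply by differentiating its components, so $\mathcal{L}^A_{\partial_a}\alpha$ has components $\partial_{A\,a}\alpha_{c_1\cdots c_k}$.

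The second key step relates this covariant Lie derivative to the covariant derivative $\nabla^A$. Comparing the components just obtained with the definition \eqref{eq:nablaA} of $\nabla^A_a$, the two differ precisely by the Christoffel term, and reassembling this term into form notation yields
\begin{equation*}
\mathcal{L}^A_{\partial_a}\alpha = \nabla^A_a\alpha + \theta_a{}^b\wedge\iota_b\alpha~,
\end{equation*}
with $\theta_a{}^b=\Gamma_{ac}{}^b\,\dd x^c$ exactly as in the statement. Substituting this back, the $\nabla^A_a$ piece assembles into $(-1)^p\,i_M(\nabla^A)\alpha$, while the $\theta$-term, after commuting the $p$-form $M^a$ past the connection one-form and relabelling indices, combines with $\dd M^a\wedge\iota_a\alpha$ into $\big(\dd M^a+\theta_b{}^a\wedge M^b\big)\wedge\iota_a\alpha = i_{\dth M}\alpha$, which is the desired result.

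The main obstacle will be bookkeeping rather than conceptual: keeping the grading signs consistent when moving $M^a$ past the one-forms $\theta_a{}^b$, and making sure the same connection symbols $\Gamma$ (hence the same $\theta$) are used both in $\dth M$ and in $\nabla^A$ acting on the tangent indices of $\alpha$. Once the conventions are fixed consistently, as they are in the statement, no torsion ambiguity arises, since the Lie-derivative step produces plain partial derivatives and the Christoffel contribution is supplied entirely by $\nabla^A$.
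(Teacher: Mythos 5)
Your proof is correct and follows essentially the same route as the paper: both evaluate $[\dd_A,i_M]$ on an arbitrary $k$-form via the Leibnitz rule, reduce everything to the anticommutator $\dd_A\iota_a+\iota_a\dd_A=\partial_{A\,a}$, and then split $\partial_{A\,a}$ into $\nabla^A_a$ plus the $\theta$-term that reassembles $\dd M^a$ into $\dth M^a$. The only (cosmetic) difference is that you obtain the central identity by invoking a covariant Cartan formula, whereas the paper derives it by direct manipulation of the antisymmetrised components.
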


\begin{proof}
Let $\alpha$ be any $k$-form on $Y$ which transforms in a representation of the structure group of $E$ with representation matrices $T_I$.  Then
\begin{align*}
\dd_A\, i_M(\alpha) &= \dd_A(M^a\wedge \alpha_a)
= \dth M^a\wedge \alpha_a + (-1)^p\, M^a\wedge (
\dd_A\alpha_a - \theta_a{}^b\, \wedge\alpha_a )
\\
& = i_{\dth M}(\alpha) - (-1)^p\, i_M(\dd_A\alpha)
+ (-1)^p\, M^a\wedge (
\dd_A\alpha_a + (\dd_A\alpha)_a - \theta_a{}^b\, \wedge\alpha_b )~.
\end{align*}
For the third term we have
\begin{align*}
\dd_A\alpha_a
&= \frac{1}{(k-1)!}\, \partial_{A\, b}\, \alpha_{ac_1\dots c_{k-1}}\, 
\dd x^{b c_1\dots c_{k-1}}
\\[5pt]
&= \frac{1}{k!}\, \big(
(k+1)\, \partial_{A[b}\alpha_{ac_1\cdots c_{k-1}]}
+ (-1)^{k-1} \,\partial_{A\, a}\, \alpha_{c_1\cdots c_{k-1}b}
\big)\, \dd x^{b c_1\dots c_{k-1}}
\\[5pt]
&= \frac{1}{k!}\, (\dd_A\alpha)_{bac_1\cdots c_{k-1}}\, \, \dd x^{b c_1\dots c_{k-1}} + \partial_{A\, a}\alpha
\\[5pt]
&= - (\dd_A\alpha)_a + \partial_{A\, a}\alpha~.
\end{align*}
Therefore
\begin{equation}\label{eq:covid}
\dd_A\alpha_a + (\dd_A\alpha)_a - \theta_a{}^b\, \wedge\alpha_b
=  \partial_{A\, a}\alpha  - \theta_a{}^b\, \wedge\alpha_b~.
\end{equation}
This result can be written in terms of a gauge covariant derivative $\nabla^A$ on $E\otimes TY$
\[\partial_{A\, a}\alpha  - \theta_a{}^b\, \wedge\alpha_b
=  \frac{1}{k!}\, (\partial_{A\, a}\alpha_{c_1\cdots c_k}
- k\, \Gamma_{a[c_1}{}^b\, \alpha_{|b|c_2\cdots c_k]})\, \dd x^{c_1\cdots c_k}
=  \frac{1}{k!}\, (\nabla_a^A\, \alpha_{c_1\cdots c_k})\, \dd x^{c_1\cdots c_k}~.
\]
Thus
\begin{equation*}
\dd_A\, i_M(\alpha) =i_{\dth M}(\alpha) - (-1)^p\, i_M(\dd_A\alpha)
+ (-1)^p\, M^a\wedge \nabla_a^A\alpha~,
\end{equation*}
and \eqref{eq:CNL} follows.

\end{proof}

Note the useful expression in the proof for the covariant derivative, namely
\be
\nabla_a^A\,\alpha \equiv \frac{1}{k!}\, (\nabla_a^A\, \alpha_{c_1\cdots c_k})\, \dd x^{c_1\cdots c_k}= \partial_{A\, a}\alpha  - \theta_a{}^b\, \wedge\alpha_b
~.\label{eq:cool}
\ee

\begin{corollary}\label{cor:lielambda}
Let $Y$ be a n-dimensional manifold.    Let $\nabla$ be a metric compatible covariant derivative  on $Y$
with connection symbols $\Gamma$,  and $\dth$ be an exterior covariant derivative  on $TY$ such that the connection one forms $\theta$ and the connection symbols $\Gamma$ are related by 
\[ \theta_a{}^b = \Gamma_{ac}{}^b\, \dd x^c~.\]  
Suppose that $Y$ admits a $k$-form $\lambda$ which is covariantly constant with respect to $\nabla$.  
Then
\begin{equation*}
{\cal L}_M(\lambda) = [\dd, i_M](\lambda) =  i_{\dth\, M}(\lambda) ~,\label{eq:Liephi}
\end{equation*}
\end{corollary}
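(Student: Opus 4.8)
The plan is to obtain this statement as the immediate specialization of Theorem \ref{theorem:CNL} to the case of no bundle, combined with the hypothesis $\nabla\lambda = 0$. First I would take $E$ to be trivial with connection $A = 0$, which is the relevant case since $\lambda$ is an ordinary $k$-form on $Y$ carrying no gauge indices. Under this reduction the exterior covariant derivative collapses to the ordinary exterior derivative, $\dd_A = \dd$, the operator ${\cal L}_M^A$ becomes the Nijenhuis--Lie derivative ${\cal L}_M$, and the covariant derivative $\nabla^A$ on $E\otimes TY$ becomes the covariant derivative $\nabla$ on $TY$ built from the connection symbols $\Gamma$. Theorem \ref{theorem:CNL} then specializes to
\[
{\cal L}_M = [\dd, i_M] = i_{\dth M} + (-1)^p\, i_M(\nabla)
\]
as an identity of degree-$p$ derivations on $\Lambda^*(Y)$.

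Next I would apply both sides to $\lambda$. The insertion term reads $i_M(\nabla\lambda) = M^a\wedge \nabla_a\lambda$, where by the expression \eqref{eq:cool} (with $A = 0$) the $TY$-valued form $\nabla_a\lambda$ has components $\nabla_a\lambda_{c_1\cdots c_k}$. By hypothesis $\lambda$ is covariantly constant, so these components vanish identically and $\nabla_a\lambda = 0$ for every $a$. Consequently $i_M(\nabla\lambda) = 0$, and the identity above collapses to ${\cal L}_M(\lambda) = i_{\dth M}(\lambda)$, which is the assertion.

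I do not anticipate a substantive obstacle, as the result is a direct corollary of the preceding theorem; the only point deserving care is to confirm that the single-index object $\nabla_a$ appearing in the insertion term is genuinely the full covariant derivative acting on \emph{all} the form indices of $\lambda$, as made explicit in \eqref{eq:cool}, so that covariant constancy annihilates the entire term rather than only part of it. I would also note that the metric-compatibility hypothesis on $\nabla$ is not actually used in this computation; it is retained because in the intended $G_2$ applications one takes $\lambda = \varphi$ (or $\psi$) and $\nabla$ a $G_2$-compatible, hence metric-compatible, connection, for which $\nabla\lambda = 0$ holds by construction.
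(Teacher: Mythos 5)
Your proof is correct and is exactly what the paper intends by its one-line proof ``this follows directly from the theorem'': specialize Theorem \ref{theorem:CNL} to the trivial bundle with $A=0$ and use $\nabla\lambda=0$ to annihilate the $i_M(\nabla^A)$ term. Your remark that metric compatibility is not actually used in the argument is also accurate.
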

\begin{proof}
This follows directly from the theorem.

\end{proof}

\noindent It is important to notice that the choice for $\Gamma$ and hence $\theta$ is determined by the fact that $\nabla\lambda = 0$. Note that the Nijenhuis-Lie derivative is defined with no reference to any covariant derivate on $Y$, that is, it should only depend on the intrinsic geometry of $Y$.

\subsection{Application to manifolds with a $G_2$ structure}\label{ssec:applyG2}

Before embarking on the analysis of moduli spaces, we apply some of the ideas in the previous section to seven dimensional manifolds $Y$ with a $G_2$ structure $\varphi$.  

Let $\hat H\in\Lambda^2(Y, TY)$ be defined in terms of the three form $H$ in equation \eqref{eq:H} as
\begin{equation}
\hat H^a = \frac{1}{2}\, H_{bc}{}^a\, \dd x^{bc}~.\label{eq:hatH}
\end{equation}
Then, the integrability equations for $\varphi$ and $\psi$ in equations \eqref{eq:dphitorsion} and \eqref{eq:dpsitorsion} can be nicely written in term of insertion operators as
\begin{align}
\dd \varphi &= \hat H^a\wedge\varphi_a = i_{\hat H}(\varphi)~,
\label{eq:dphi}
\\[5pt]
\dd\psi &=\hat H^a\wedge\psi_a =  i_{\hat H}(\psi) ~,
\label{eq:dpsi}
\end{align}
where we have set $\tau_2=0$ as we are interested on moduli spaces of integrable $G_2$ structures. 

Let $\nabla$ be a covariant derivative on $Y$ compatible with the $G_2$ structure, that is
\[ \nabla\varphi = 0~,\qquad\nabla\psi = 0~,\]
with connection symbols $\Gamma$.
Then, by corollary \ref{cor:lielambda}, the Nijenhuis-Lie derivatives of $\varphi$ and $\psi$ along $M\in\Lambda^p(Y, TY)$ are
\begin{align}
{\cal L}_M(\varphi) &= [\dd, i_M](\varphi) =  i_{\dth\, M}(\varphi) ~,\label{eq:Liephi}
\\
{\cal L}_M(\psi) &=[\dd, i_M](\psi) =  i_{\dth\, M}(\psi) ~,\label{eq:Liepsi}
\end{align}
where the connection one-form $\theta$ of exterior covariant derivative $\dth$ on $TY$  is 
\[ \theta_a{}^b = \Gamma_{ac}{}^b\, \dd x^c~.\]
As mentioned before, though these equations seem to depend on a choice of  a covariant derivative compatible with the $G_2$ structure, this is not case.
On a manifold with a $G_2$ structure, there  is a two parameter family of covariant  derivatives compatible with a given $G_2$ structure on $Y$ \cite{Bryant:2005mz,delaOssa:2016ivz} with 
connection symbols
\[
\Gamma_{ab}{}^c = \Gamma^{\scriptscriptstyle{LC}}_{~ab}{}^{~c}  + A_{ab}{}^c(\alpha,\beta)~,\]
where 
$\Gamma^{\scriptscriptstyle{LC}}$ are the connection symbols of the Levi-Civita covariant derivative, $A_{abc}(\alpha, \beta)$ is the contorsion and $\alpha$ and $\beta$ are real parameters.   The contorsion is given by
\begin{align*} 
 A_{abc}(\alpha, \beta) &= \frac{1}{2}\, H_{abc} 
 - \frac{1}{6}\, \tau_{2\, da}\, \varphi_{bc}{}^d
 + \frac{1}{6}\, (1 + 2\beta)\, ( 
(\tau_1\lrcorner\psi)_{abc} - 4\, \tau_{1\, [b}\, g_{\varphi\,c]a})
\nn\\[5pt]
&\quad + \frac{1}{4}\, (1+2\alpha) \, 
(3\, \tau_{3\, abc} -2\, S_a{}^d\, \varphi_{bcd})~,
\end{align*}
where  $S$ is the traceless symmetric matrix corresponding to the torsion class $\tau_3$
\[ \tau_3 = \frac{1}{2}\, S^a\wedge\varphi_{abc}\, \dd x^{bc}\in \Lambda^3_{27}~.\] 
It is straightforward to show  that in fact, only the first two terms of the contorsion contribute to 
the right hand side of equations \eqref{eq:Liephi} and \eqref{eq:Liepsi}.  In other words, we only need to  work with a covariant derivative $\nabla$ with
\[ A_{abc} = \frac{1}{2}\, H_{abc} - \frac{1}{6}\, \tau_{2\, da}\, \varphi_{bc}{}^d~,\]
that is, with a connection with torsion
\[ T_{abc} = H_{abc} + \frac{1}{6}\, \tau_{2\, dc}\, \varphi_{ab}{}^d~.\]
The torsion is totally antisymmetric when $\tau_2 = 0$ and this corresponds to the unique covariant derivative with totally antisymmetric torsion. In this paper we are concerned mainly with integrable $G_2$ structures and hence we work with a connection for which $T = H$.


\section{Infinitesimal deformations of manifolds with an integrable $G_2$ structure}
\label{sec:finitemod}

We now turn to studying the tangent space to the moduli space of manifolds with an integrable $G_2$ structure. 
Finite deformations will be discussed in a future publication \cite{delaOssa17}.   In this section we discuss the infinitesimal deformations 
in terms of one forms $M_t$ with values in $TY$ and find moduli equations in terms of these forms.  Our main result is  that such deformations preserve the integrable $G_2$ structure if and only if $M_t$ satisfies equation \eqref{eq:integmod}. In addition, we derive equations for the variation of the intrinsic torsion of the manifold.

\subsection{Equations for deformations that preserve an integrable $G_2$ structure}
\label{sec:infinite}

Let $Y$ be a manifold with an integrable $G_2$ structure determined by $\varphi$. 
In this subsection we find equations that are satisfied by those infinitesimal deformations of the integrable $G_2$ structure which preserve the integrability.

From the discussion in section \ref{sssec:formdecomp} we can deduce that the infinitesimal deformations of the integrable $G_2$ structure take the form
\begin{align}
\partial_t\varphi &= \frac{1}{2}\, M^a_t\wedge\varphi_{abc} \, \dd x^{bc}= i_{M_t}(\varphi) 
~,\label{eq:varphi2}
\\[5pt]
 \partial_t\psi &= \frac{1}{3!}\, N^a_t\wedge\psi_{abcd}\, \dd x^{bcd} = i_{N_t}(\psi)~.
 \label{eq:varpsi2}
 \end{align}
 where $N_t$ and $M_t$ are one forms valued in $TY$. 
 The forms $N_t$ and $M_t$ are not independent as $\psi$ and $\varphi$ are Hodge dual to each other. To first order, $N_t$ and $M_t$ must be related such that 
 \[ \partial_t\psi = \partial_t*\varphi~.\]
We proved in \cite{delaOssa:2016ivz} that 
 the first order variations of the metric in terms of $M_t$ are given by
 \begin{align}
 \partial_t g_{\varphi\, ab} &= 2\, M_{t\, (ab)}
 ~,\label{eq:metricvar}
 \\[3pt]
  \partial_t \sqrt{\det g_{\varphi}} &= (\tr M_t) \, \sqrt{\det g_{\varphi}}~,
  \label{eq:detgvar}
  \end{align} 
 and that
 \[ M_t = N_t~.\]
 Note that only the symmetric part of $M_t$ contributes to the infinitesimal deformations of the metric.  To first order, we can interpret the antisymmetric part of $M_t$ as deformations of the $G_2$ structure  which leave the metric fixed, however this is not true at higher orders in the deformations as will be discussed in \cite{delaOssa17}.
 We give the equations for moduli of integrable $G_2$ structures in the following proposition.

\begin{proposition}
Let $Y$ be a manifold with an integrable $G_2$ structure $\varphi$ and $\psi = *\varphi$.
The infinitesimal moduli $M_t\in\Lambda^1(Y,TY)$ which preserve the integrability of the $G_2$ structure satisfy the equations
\begin{align}
 i_{\sigma_t}(\varphi) =  0
~,\label{eq:intphi}\\[5pt]
 i_{\sigma_t}(\psi)  = 0~,\label{eq:intpsi}
\end{align}
where $\sigma_t\in\Lambda^2(Y,TY)$ is given by
\be
\sigma_t = \dth M_t - [\hat H , M_t] - \partial_t \hat H~,
\label{eq:sigmat0}
\ee
or equivalently
\be \label{eq:sigmat}
\sigma_t^a = (\nabla_b\, M_{t\, c}^{\, a})\,\dd x^{bc} -
\partial_t \hat H^a ~,
\ee
where $\dth$ is an exterior covariant derivative on $TY$ with connection one form
\[ \theta_a{}^b = \Gamma_{ac}{}^b\, \dd x^c~,\]
and $\, \Gamma$ are the connection symbols of a connection $\nabla$ on $Y$ which is compatible with the $G_2$ structure and has totally antisymmetric torsion $H$
given by equation \eqref{eq:H}.
\end{proposition}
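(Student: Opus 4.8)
The plan is to differentiate the two integrability relations along the deformation and to repackage the result entirely in terms of insertion operators, using the Nijenhuis--Lie derivative and the insertion algebra of section~\ref{sec:covder}. The starting point is that, for \emph{any} $G_2$ structure, equation \eqref{eq:dphitorsion} gives the identity $\dd\varphi = i_{\hat H}(\varphi)$, while \eqref{eq:dpsitorsion} gives $\dd\psi = i_{\hat H}(\psi) + *\tau_2$; at the base point $\tau_2=0$. I would differentiate each of these in $t$, using $\partial_t\,\dd = \dd\,\partial_t$ and the deformation ansatz \eqref{eq:varphi2}--\eqref{eq:varpsi2} with $N_t = M_t$, namely $\partial_t\varphi = i_{M_t}(\varphi)$ and $\partial_t\psi = i_{M_t}(\psi)$.

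For the $\varphi$ relation, the left-hand side becomes $\dd\,i_{M_t}(\varphi)$. Since $M_t\in\Lambda^1(Y,TY)$ has degree $p=1$, the Nijenhuis--Lie derivative \eqref{eq:NiejLie} reads ${\cal L}_{M_t} = \dd\,i_{M_t} - i_{M_t}\,\dd$, and because $\varphi$ is covariantly constant for the compatible connection $\nabla$, corollary~\ref{cor:lielambda} gives ${\cal L}_{M_t}(\varphi) = i_{\dth M_t}(\varphi)$. Hence $\dd\,i_{M_t}(\varphi) = i_{\dth M_t}(\varphi) + i_{M_t}\!\big(i_{\hat H}(\varphi)\big)$. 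The right-hand side differentiates to $i_{\partial_t\hat H}(\varphi) + i_{\hat H}\!\big(i_{M_t}(\varphi)\big)$. The two double insertions combine through the insertion bracket \eqref{eq:insbracket}: with $p=1$, $q=2$ one has $i_{M_t}i_{\hat H} - i_{\hat H}i_{M_t} = i_{[M_t,\hat H]}$, and $[M_t,\hat H] = -[\hat H,M_t]$ by the graded antisymmetry of the Nijenhuis--Richardson bracket \eqref{eq:NijRich}. Collecting terms and using linearity of the insertion in the inserted form then yields $i_{\dth M_t - [\hat H,M_t] - \partial_t\hat H}(\varphi) = i_{\sigma_t}(\varphi) = 0$, which is \eqref{eq:intphi}. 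Note that no use of $\tau_2$ was made, so \eqref{eq:intphi} in fact holds for every deformation.

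Repeating the identical manipulation for $\psi$ (again with $N_t = M_t$, and $\psi$ covariantly constant) reproduces the same bracket structure, with the sole difference that differentiating the extra term $*\tau_2$ contributes $\partial_t(*\tau_2)$ to the right-hand side. One therefore lands on $i_{\sigma_t}(\psi) = \partial_t(*\tau_2)$. This is exactly where integrability enters: demanding that the deformed structure remain integrable to first order means $\partial_t\tau_2 = 0$, and since $\tau_2 = 0$ at the base point $\partial_t(*\tau_2) = *\,\partial_t\tau_2 = 0$. Thus \eqref{eq:intpsi} is equivalent to the preservation of integrability and constitutes the genuine moduli constraint, whereas \eqref{eq:intphi} is automatic.

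It remains to verify that the two expressions \eqref{eq:sigmat0} and \eqref{eq:sigmat} for $\sigma_t$ agree, which I would do by expanding $(\dth M_t)^a - [\hat H,M_t]^a$ in a coordinate frame with $\theta_a{}^b = \Gamma_{ac}{}^b\,\dd x^c$. The fully antisymmetrised tensorial derivative $(\nabla_b M_{t\,c}{}^a)\,\dd x^{bc}$ differs from $(\dth M_t)^a$ by two torsion terms --- one from the value ($TY$) index and one from the antisymmetrised form index --- each proportional to the totally antisymmetric torsion, which here equals $H$. The content of the computation is that these two contributions assemble precisely into $i_{M_t}(\hat H)^a - i_{\hat H}(M_t)^a = -[\hat H,M_t]^a$, so that $(\nabla_b M_{t\,c}{}^a)\,\dd x^{bc} = (\dth M_t)^a - [\hat H,M_t]^a$ and \eqref{eq:sigmat} follows. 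I expect the only delicate step to be this last bookkeeping of signs and torsion terms, together with the graded-sign tracking in the bracket reduction; the rest is a direct application of corollary~\ref{cor:lielambda} and of the insertion algebra.
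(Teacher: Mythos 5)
Your proposal is correct and follows essentially the same route as the paper's proof: vary $\dd\varphi=i_{\hat H}(\varphi)$ and $\dd\psi=i_{\hat H}(\psi)$, use corollary~\ref{cor:lielambda} to write $\dd\,i_{M_t}=i_{\dth M_t}+i_{M_t}\dd$ on the covariantly constant forms, combine the double insertions via \eqref{eq:insbracket} into $i_{\sigma_t}=0$, and then convert $\dth M_t-[\hat H,M_t]$ into $(\nabla_b M_{t\,c}{}^a)\,\dd x^{bc}$ by a coordinate computation. Your sign bookkeeping in the bracket reduction matches the paper's, and your extra remarks on where $\tau_2=0$ and $\partial_t\tau_2=0$ enter are consistent with the paper's subsequent discussion.
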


\begin{proof}
The proof of this proposition follows from the variations of equations \eqref{eq:dphi} and \eqref{eq:dpsi}. 

Consider first equation \eqref{eq:dphi}.  We can write the variation of the left hand side as
\[
\dd\partial_t \varphi = \dd\, i_{M_t}(\varphi)~.
\]
By equation \eqref{eq:Liephi} we find
\begin{equation}
\dd\partial_t \varphi  
= [\dd, i_{M_t}](\varphi) + i_{M_t}\dd\varphi
=   i_{\dth M_t}(\varphi) + i_{M_t}(i_{\hat H}(\varphi))
~,\label{eq:LHSmod} 
\end{equation}
where $\dth$ is an exterior covariant derivative on $TY$ with connection one form
\[ \theta_a{}^b = \Gamma_{ac}{}^b\, \dd x^c~,\]
and $\, \Gamma$ are the connection symbols of a connection $\nabla$ on $Y$ which is compatible with the $G_2$ 
structure and has totally antisymmetric torsion $H$ (see subsection \ref{ssec:applyG2}).
Now varying the right hand side, we have 
\[
\partial_t (i_{\hat H}(\varphi)) =  i_{\partial_t \hat H}(\varphi) + i_{\hat H}(i_{M_t}(\varphi))~.
\]
Equating this with \eqref{eq:LHSmod} we obtain
\[ 
 i_{\dth M_t - \partial_t \hat H}(\varphi) + [i_{M_t}, i_{\hat H}](\varphi)
= 0~.
\]
Equation \eqref{eq:intphi} follows this together with equation \eqref{eq:insbracket}
\[
 i_{\dth M_t - \partial_t \hat H - [\hat H, M_t]}(\varphi)
=0~.
\] 
where $[\hat H, M_t]$ is the Nijenhuis-Richardson bracket of $\hat H$ and $M_t$ as defined in equation \eqref{eq:NijRich}.  
Similarly one can obtain equation \eqref{eq:intpsi} by varying starting equation \eqref{eq:dpsi}.

To obtain \eqref{eq:sigmat} we need to write the exterior  derivative $\dth$ in terms of the covariant derivative.  Using \eqref{eq:NijRich} we have
\begin{align*}
\dth M_t^a - [\hat H, M_t]^a &= \dd M_t^a + \theta_b{}^a\wedge M_t^b
- \hat H^e\, M_{t\, e}^{~a} + M_{t\, b}^{~e}\,H_{ec}{}^a\, \dd x^{bc}
\\
&= \left(\partial_b M_{t\, c}^{~a} + \Gamma_{eb}{}^a\, M_{t\, c}^{~e}
- \frac{1}{2}\, H_{bc}{}^e\, M_{t\, e}^{~ a}
+ H_{be}{}^a\, M_{t\, c}^{~e}
\right)\, \dd x^{bc}
\\[5pt]
&= \left(\nabla^{\scriptscriptstyle{LC}}_b M_{t\, c}^{~a} 
+ \frac{1}{2}\, H_{be}{}^a\, M_{t\, c}^{~e}
- \frac{1}{2}\, H_{bc}{}^e\, M_{t\, e}^{~ a}
\right)\, \dd x^{bc}
= \nabla_b M_{t\, c}^{~a}\, \dd x^{bc}
\end{align*}

\end{proof}

We have shown that forms $M_t\in\Lambda^1(Y,TY)$ satisfying equations \eqref{eq:intphi} and \eqref{eq:intpsi} are infinitesimal moduli of manifolds with an integrable $G_2$ structure.   Even though this paper is concerned with heterotic compactifications,  the moduli problem  described in this section will have applications in other contexts in mathematics and in string theory.  In order to understand better the content of these equation we make here a few  remarks. Consider first equation \eqref{eq:intpsi} which, as a five form equation, can be decomposed into irreducible  representations of $G_2$.  
Using identities \eqref{eq:5insert7} and \eqref{eq:5insert14}, one can prove that this equation becomes \cite{delaOssa:2016ivz}
\begin{align}
\pi_7(i_\sigma(\psi)) &=  - \pi_{14}(\sigma^a)_{ab} \wedge\psi \nn\\
& = \big(4\, (\partial_t\tau_1 + i_{M_t}(\tau_1)) + (\pi_{14}(\dth M_t^a))_{ba}\dd x^b\big)\wedge\psi = 0~,
\label{eq:tau1red}\\
\pi_{14}(i_\sigma(\psi)) &= i_{\pi_7(\sigma)}(\psi) = i_{\cdth M_t}(\psi) = 0~.\label{eq:integmod}
\end{align}
The second equation represents deformations of the integrable $G_2$ structure which preserve the integrability and it is in fact the {\it only} constraint on $M_t$.  Observe how $\pi_7([\hat H, M_i] + \partial_t \hat H)$ drops out from this equation automatically
\[
i_{\pi_7([\hat H, M_i] + \partial_t \hat H)}(\psi) = 0~.
\]
 The first equation \eqref{eq:tau1red} then gives the variation of $\tau_1$ for given a solution of \eqref{eq:integmod}.  The other equation for moduli, equation \eqref{eq:intphi} gives the variations of all torsion classes for each solution of equation \eqref{eq:integmod}. Consequently, it does not restrict $M_t$. 
We note that equation \eqref{eq:tau1red} is in fact redundant as its contained in \eqref{eq:intphi}. 
It is important to remark too that, as equation \eqref{eq:integmod} is the only constraint on the variations of the integrable $G_2$ structure,  
there is no reason to expect that this space is finite dimensional (except of course in the case where $Y$ has $G_2$ holonomy). 
 
The tangent space to the moduli space of an integrable $G_2$ structure is found by modding out the set of solutions to equation \eqref{eq:integmod} by those which correspond to trivial deformations, that is diffeomorphisms.   These trivial infinitesimal deformations of $\varphi$ and $\psi$ are given by the 
Lie derivatives of $\varphi$ and $\psi$ respectively along a vector field $V$.  By equations \eqref{eq:Liephi} and \eqref{eq:Liepsi} these are given by
\begin{align}
{\cal L}_V(\varphi) &= [\dd, i_V](\varphi) = i_{\dth V}(\varphi)~,\\
{\cal L}_V(\psi) &= [\dd, i_V](\psi) = i_{\dth V}(\psi)~.
\end{align}
Therefore trivial deformations $M_{triv}$ of the $G_2$ structure correspond to
\be
  M_{triv} = \dth V~.
  \label{eq:trivialg2}
  \ee
The decompositions of ${\cal L}_V(\varphi)$  and ${\cal L}_V(\psi)$ into irreducible representations of $G_2$ are given by (see equations \eqref{eq:three1}-\eqref{eq:three27})
\begin{align}
\tr M_{triv} &= \nabla^{\scriptscriptstyle  {LC}}_a v^b = - \dd^\dagger v~,\label{eq:trivtr}
\\
M_{triv\, (ab)} &=  \nabla^{\scriptscriptstyle  {LC}}_{(a} v_{b)}~,\label{eq:trivsym}
\\
\pi_7 (m_{triv}) &= - \frac{1}{2}\,\pi_7\left( \dd v +  v\lrcorner H\right)~.\label{eq:trivantisym}
\end{align}

Therefore, the tangent space to the moduli space of deformations of integrable $G_2$ structures is given by the solutions of equation \eqref{eq:integmod} modulo the trivial variations of the $G_2$ structure given by equation \eqref{eq:trivialg2}.  We will call this space ${\cal TM}_0$.  As mentioned earlier, there is no reason why the resulting space of infinitesimal deformations is finite dimensional, unless one restricts to special cases such as $Y$ having $G_2$ holonomy. 

Finally, we would like to note on a property of the curvature of a manifold with an integrable $G_2$ structure. For any trivial deformation $M_{triv} = \dth V$, equation \eqref{eq:integmod} gives
\[
i_{\cdth^2 V}(\psi) = 0~.
\]
Therefore,
\be
i_{\check R(\theta)}(\psi) = 0~,\label{eq:thetacurv}
\ee
where $R(\theta)$ is the curvature of the one form connection $\theta$ and $\check R(\theta) = \pi_7(R(\theta))$. This equation is not an extra constraint, but in fact \eqref{eq:thetacurv} turns out always to be true when the $G_2$ structure is integrable. Indeed, covariant derivatives of the torsion classes are related to the curvature
two form, and can be used to show \eqref{eq:thetacurv} without any discussion of the deformation problem. We include the computation in appendix \ref{sec:appcurvs}, leading to \eqref{eq:curvintG2}.

\subsection{A reformulation of the equations for deformations of $G_2$ structures}\label{sec:42}

In section \ref{sec:InfDefHet}, we will determine the moduli space of heterotic $G_2$ systems. To this end, it is useful  to solve for $\sigma_t\in\Lambda^2(Y,TY)$ in equations \eqref{eq:intphi} and \eqref{eq:intpsi}.  We have the following lemma

\begin{lemma}
Let $\sigma\in\Lambda^2(Y, TY)$ and define
\begin{equation*}
\lambda = i_{\sigma}(\varphi)
~\qquad
\Lambda = i_{\sigma}(\psi)~.
\end{equation*}
Then $\sigma$ satisfies  $\Lambda=0$ and $\lambda= 0$, if and only if
\begin{equation}
  (\check\sigma_{a}\lrcorner\varphi)_{b}
   = (\sigma^d)_{ca}\, \varphi_{bd}{}^c~,
  \label{eq:idsigma}
   \end{equation}
  where $\check\sigma = \pi_7\sigma$.
\end{lemma}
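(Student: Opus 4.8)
The plan is to convert the two form-valued conditions $\lambda=i_\sigma(\varphi)=0$ (a four-form) and $\Lambda=i_\sigma(\psi)=0$ (a five-form) into conditions on a single two-index tensor, and to identify that tensor with the difference of the two sides of \eqref{eq:idsigma}. First I would write the insertions in components using \eqref{eq:insop}, $\lambda=\sigma^a\wedge\varphi_a$ and $\Lambda=\sigma^a\wedge\psi_a$, and split each $\sigma^a\in\Lambda^2$ into its irreducible parts $\sigma^a=\check\sigma^a+\hat\sigma^a$ with $\check\sigma^a=\pi_7(\sigma^a)$ and $\hat\sigma^a=\pi_{14}(\sigma^a)$, recording the characterisations \eqref{eq:2F7}--\eqref{eq:2F14} (in particular $\hat\sigma^a\lrcorner\varphi=0$, and $(\check\sigma^a)_{bc}=w^{ae}\,\varphi_{ebc}$ for some matrix $w^{ab}$). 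Setting $\Xi_{ab}:=(\sigma^d)_{ca}\,\varphi_{bd}{}^c$ for the right-hand side of \eqref{eq:idsigma}, and noting that $(\check\sigma_a\lrcorner\varphi)_b=3\,w_{ab}$, the statement to be proved becomes: $\lambda=0$ and $\Lambda=0$ together are equivalent to the vanishing of $T_{ab}:=\Xi_{ab}-(\check\sigma_a\lrcorner\varphi)_b$.

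The strategy is then irreducible matching. A four-form lives in $\Lambda^4_1\oplus\Lambda^4_7\oplus\Lambda^4_{27}$ and, by \eqref{eq:four1}--\eqref{eq:four27}, $\lambda$ is captured completely by the matrix $-\tfrac1{12}\,\psi^{cde}{}_a\,\lambda_{bcde}$, so $\lambda=0$ is equivalent to the vanishing of the trace ($\mathbf 1$), symmetric-traceless ($\mathbf{27}$) and antisymmetric-$\mathbf 7$ parts of that matrix; a five-form lives in $\Lambda^5_7\oplus\Lambda^5_{14}$, so $\Lambda=0$ is equivalent to $*\Lambda=0$, an antisymmetric matrix with a $\mathbf 7$ and a $\mathbf{14}$ part. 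Using the $G_2$ contraction identities of the appendix, chiefly $\varphi_{ab}{}^e\varphi_{cde}=g_{ac}g_{bd}-g_{ad}g_{bc}+\psi_{abcd}$ together with the $\varphi$--$\psi$ and $\psi$--$\psi$ contractions, I would evaluate both matrices in terms of $w^{ab}$ and $\hat\sigma$ and compare them with the irreducible parts of $T_{ab}$. The symmetric sector is straightforward: the trace and symmetric-traceless parts of $T_{ab}$ coincide with the symmetric part of the matrix built from $\lambda$, so $\pi_1(\lambda)=0$ and $\pi_{27}(\lambda)=0$ match the $\mathbf 1$ and $\mathbf{27}$ content of $T$. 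Likewise the $\mathbf{14}$ part of $T_{ab}$ reproduces the $\mathbf{14}$ part of $*\Lambda$, matching $\pi_{14}(\Lambda)=0$.

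The delicate point, and the step I expect to be the main obstacle, is the vector ($\mathbf 7$) sector, which occurs in \emph{both} $\lambda$ and $\Lambda$. A priori $\pi_7(\lambda)$ and $\pi_7(\Lambda)$ are two distinct $G_2$-equivariant functions of $\sigma$, and for the single equation \eqref{eq:idsigma} to be equivalent to $\{\lambda=0,\Lambda=0\}$ they must impose one and the same constraint, namely the vanishing of the $\mathbf 7$ part of $T_{ab}$. This is precisely the redundancy already noted below \eqref{eq:tau1red}, that the vector part of $i_\sigma(\psi)=0$ is contained in $i_\sigma(\varphi)=0$; establishing it requires the $\varphi$--$\psi$ contraction identity to show that $\pi_7(\lambda)$, $\pi_7(\Lambda)$ and the $\mathbf 7$ part of $T$ are all proportional. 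Granting this, $\lambda=0$ is equivalent to the vanishing of the $\mathbf 1$, $\mathbf 7$ and $\mathbf{27}$ parts of $T$, and $\Lambda=0$ to the vanishing of its $\mathbf 7$ and $\mathbf{14}$ parts, so together they amount to $T_{ab}=0$; conversely $T_{ab}=0$ forces every irreducible component of both $\lambda$ and $\Lambda$ to vanish. The appearance of $\check\sigma$ rather than the full $\sigma$ on the left of \eqref{eq:idsigma} is exactly what isolates the $\Lambda^2_7$ content so that the symmetric and $\mathbf{14}$ sectors close among themselves.
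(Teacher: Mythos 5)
Your strategy --- convert $\lambda$ and $\Lambda$ into two-index tensors via the characterisations \eqref{eq:four1}--\eqref{eq:four27} and the Hodge dual, decompose into $G_2$ irreducibles, and match component by component against $T_{ab}$ --- is sound, and it is in substance the route the paper takes: the paper computes $*\Lambda$ and $\tfrac{1}{12}\,\psi^{cde}{}_a\lambda_{bcde}$ explicitly and then eliminates algebraically, rather than phrasing the matching in Schur-lemma language. You have also correctly located the place where a genuine coincidence is required: $\sigma$ contains two copies of the $\mathbf 7$ (one in $\Lambda^1\otimes\Lambda^2_7$, one in $\Lambda^1\otimes\Lambda^2_{14}$), the pair $\{\lambda,\Lambda\}$ supplies two $\mathbf 7$-valued constraints while $T=0$ supplies only one, so the equivalence forces all three functionals to be proportional --- concretely, by \eqref{eq:5insert7} the $\mathbf 7$ part of $\Lambda$ sees only $\pi_{14}(\sigma^a)$, so all three must annihilate the $\mathbf 7$ sitting in $\Lambda^1\otimes\Lambda^2_7$.

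As written, however, this is a plan rather than a proof: every step that carries the actual content of the lemma is deferred or asserted. The proportionality in the $\mathbf 7$ sector is introduced with ``Granting this'', and the remark below \eqref{eq:tau1red} that you invoke is stated in the paper without proof and is essentially equivalent to what you need to establish, so it cannot be cited in its place; one must actually perform the triple-$\varphi$ contractions (using \eqref{eq:g2ident3}--\eqref{eq:g2ident5}) to see that the $\Lambda^2_7$-part of $\sigma$ drops out of $\pi_7(\lambda)$. Moreover, the $\mathbf{27}$ sector is \emph{not} automatic in the way you suggest: $\mathbf{27}$ also occurs with multiplicity two in $\sigma$ (once in $\mathbf 7\otimes\mathbf 7$ and once in $\mathbf 7\otimes\mathbf{14}$), so $\pi_{27}(\lambda)$ and $\pi_{27}(T)$ are each one particular linear combination of the two projections and their proportionality is exactly as non-automatic as in the $\mathbf 7$ sector; only the $\mathbf 1$ and $\mathbf{14}$ sectors, which have multiplicity one in $\sigma$, are settled by equivariance alone up to a nonvanishing check. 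The paper's proof supplies precisely these computations --- equations \eqref{eq:premod1} and \eqref{eq:premod2} and their algebraic combination into \eqref{eq:idsigma} --- and that is the part your proposal leaves out.
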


\begin{proof}
The Hodge dual of $\Lambda$ can be easily computed (using equation \eqref{eq:g2psiep}) and is given by
\[ *\Lambda =  - \frac{1}{2}\, \big(
(\sigma^c)_{cd}\, \varphi^d{}_{ab} + 2\, (\sigma_a\lrcorner\varphi)_b
\big)\, \dd x^{ab}
\]
Therefore $\Lambda = 0$ is equivalent to
\begin{equation*}
 (\sigma_{[a}\lrcorner\varphi)_{b]} 
 = - \frac{1}{2}\,(\sigma^c)_{cd}\, \varphi_{ab}{}^d
 ~.
\end{equation*}
Note that contracting this equation with $\varphi^{ab}{}_e$ we find that
\begin{equation}
 (\pi_{14}(\sigma^a))_{ab} = 0~,\label{eq:pi14traces}
 \end{equation}
and so
\begin{equation}
 (\check\sigma_{[a}\lrcorner\varphi)_{b]} 
 = - \frac{1}{2}\,(\check\sigma^c)_{cd}\, \varphi_{ab}{}^d
 ~.\label{eq:premod1}
\end{equation}
where $\check\sigma = \pi_7(\sigma)$.
We now decompose the four form $\lambda$ into representations of $G_2$ as in subsection \ref{sssec:formdecomp}, and set each component to zero. The components of $\lambda$ are obtained by the following computation (see equations \eqref{eq:fourtoM}-\eqref{eq:four27}
\[
\frac{1}{12}\, \psi^{cde}{}_a\, \lambda_{bcde}
=  \frac{1}{2}\, \psi^{cde}{}_a\, (\sigma^f)_{[bc}\, \varphi_{de]f}
= \frac{1}{4}\, \psi^{cde}{}_a\, \big(
(\sigma^f)_{bc}\, \varphi_{def} -  (\sigma^f)_{cd}\, \varphi_{ebf}\big)~.
\]
Using the identity \eqref{eq:g2ident5} in the second term
\begin{align*}
\frac{1}{12}\, \psi^{cde}{}_a\, \lambda_{bcde}
&= \frac{1}{4}\, \big(
4\, (\sigma^f)_{bc}\, \varphi^c{}_{af}
- 6\, (\sigma^f)_{cd}\, g_{ar}\, \delta^{[c}_{[b}\, \varphi^{dr]}{}_{f]}
\big)
\\[5pt]
&= \frac{1}{2}\, \big(
- 2\, (\sigma_c)_{db}\, \varphi^{cd}{}_{a}
- (\sigma^e)_{cd}\, (- 2\, \delta^{c}_{[b}\, \varphi^d{}_{e]a}
+ g_{a[b}\, \varphi^{cd}{}_{e]})
\big)
\\[5pt]
&= \frac{1}{2}\, \big(
-  (\sigma_c)_{db}\, \varphi^{cd}{}_a
+ (\sigma^c)_{cd}\, \varphi^d{}_{ab}
- 2\, g_{a[b}\, (\check\sigma^e\lrcorner\varphi)_{e]}
\big)
\\[5pt]
&= \frac{1}{2}\, \big(
-  (\sigma_c)_{db}\, \varphi^{cd}{}_a
+ (\sigma^c)_{cd}\, \varphi^d{}_{ab}
- g_{ab}\, (\check\sigma^e\lrcorner\varphi)_e
+  (\check\sigma_a\lrcorner\varphi)_b
\big)
\end{align*}
Hence, $\lambda = 0$ is equivalent to
\[
0 = -  (\sigma_c)_{db}\, \varphi^{cd}{}_a
+ (\sigma^c)_{cd}\, \varphi^d{}_{ab}
- g_{ab}\, (\check\sigma^e\lrcorner\varphi)_e
+  (\check\sigma_a\lrcorner\varphi)_b~.
\]
Taking the trace of this equation gives
\[  (\check\sigma_a\lrcorner\varphi)^a = 0~,\]
and therefore 
\begin{equation}
0 = -  (\sigma_c)_{db}\, \varphi^{cd}{}_a
+ (\check\sigma^c)_{cd}\, \varphi^d{}_{ab}
+  (\check\sigma_a\lrcorner\varphi)_b~,
\label{eq:premod2}
\end{equation}
where we have used equation \eqref{eq:pi14traces} in the second term.

So far, we have proved that $\lambda = 0$ and $\Lambda = 0$ are equivalent to 
equations \eqref{eq:premod1}  \eqref{eq:premod2}.
Taking the antisymmetric part of equation \eqref{eq:premod2} we have
\begin{equation*}
0 = -  (\sigma_c)_{d[b}\, \varphi^{cd}{}_{a]}
+ (\check\sigma^c)_{cd}\, \varphi^d{}_{ab}
+  (\check\sigma_{[a}\lrcorner\varphi)_{b]}~,
\end{equation*}
and using \eqref{eq:premod1} in the third term we find
\begin{equation*}
(\check\sigma^c)_{cd}\, \varphi^d{}_{ab}
=
 2\, (\sigma_c)_{d[b}\, \varphi^{cd}{}_{a]}~.
\end{equation*}
Using this back into equation \eqref{eq:premod2} we have
\[
0 = -  (\sigma_c)_{db}\, \varphi^{cd}{}_a
+ 2\, (\sigma_c)_{d[b}\, \varphi^{cd}{}_{a]}
+  (\check\sigma_a\lrcorner\varphi)_b~,
\]
from which \eqref{eq:idsigma} follows. 

\end{proof}

The result of the lemma is that $\sigma_t$ defined as in \eqref{eq:sigmat} 
satisfies
\[
((\check\sigma_{t\, a})_{cd} - 2\, (\sigma_{t\, c})_{da})\, \varphi^{cd}{}_b = 0~.
\]
In other words, defining a two form $\Sigma_t\in\Lambda^2(Y,TY)$ by
\[
\Sigma_{t\, a} = 
\frac{1}{4}\, \big( (\sigma_{t\, a})_{bc} 
- 2\, (\sigma_{t\, b})_{ca} \big)\, \dd x^{bc}
= \frac{1}{2}\,\left( \sigma_{t\, a} - (\sigma_{t\, b})_{ca}\, \dd x^{bc}
\right)
~,\]
the equation for moduli is equivalent to 
\[ \check\Sigma_t = \pi_7(\Sigma_t) = 0~.\]
We would like to write this equation in terms of $M_t$ and $H$. We have
\begin{align*}
\Sigma_{t\, a} &= \frac{1}{2}\, \left(\sigma_{t\, a} - (\sigma_{t\, b})_{ca}\, \dd x^{bc}\right)
= \sigma_{t\, a} -
\frac{1}{4}\, \left(
2\, (\sigma_{t\, b})_{ca}
+ (\sigma_{t\, a})_{bc}
\right)\, \dd x^{bc}
\\[5pt]
&= \dth M_{t\, a} - [\hat H, M_t]_a - g_{ae}\,  (\partial_t\hat H^e)
- \frac{3}{4}\, (\sigma_{t\, [a})_{bc]} \, \dd x^{bc}~.
\end{align*}
The last two terms of this equation become after using equation \eqref{eq:sigmat} in the last term, 
\begin{align*}
-\frac{3}{4}\, (\sigma_{t\, [a})_{bc]} \, \dd x^{bc} - g_{ae}\,  (\partial_t\hat H^e)
&= -\frac{3}{2}\, \left( 
\nabla_{[b}M_{t\, |a|c]} - \frac{1}{2}\, g_{e[a}\, \partial_t H_{bc]}{}^e\right) \, \dd x^{bc} -  g_{ae}\,  (\partial_t\hat H^e)
\\[5pt]
&= -\frac{3}{2}\, \left( 
\partial_{[b}m_{t\, ac]} + H_{[ba}{}^e\, m_{t\, c]e}
\right) \, \dd x^{bc}
\\[7pt]
&~\quad + \frac{3}{4}\, \left(
\partial_t H_{abc} - 2\, M_{t\, (e[a)}\, H_{bc]}{}^e
\right) \, \dd x^{bc}
- \partial_t\hat H_a + 2\, M_{(ae)}\, \hat H^e
\\[5pt]
&= 
-\frac{3}{2}\, \left( 
- H_{[ab}{}^e\, m_{t\, c]e}
+ H_{[bc}{}^e\, M_{t\, (a]e)}
\right) \, \dd x^{bc} 
+ 2\,  M_{(ae)}\, \hat H^e
\\[5pt]
&\qquad +  (\dd m_t)_a + \frac{1}{2}\, \partial_t \hat H_a
\\[5pt]
&= 
[\hat H, M_t]_a 
+  (\dd m_t)_a+ \frac{1}{2}\, \partial_t \hat H_a~.
\end{align*}
Therefore 
\[ \Sigma_{t\, a} = \dth M_{t\, a} + (\dd m_t)_a 
 + \frac{1}{2}\, \partial_t \hat H_a~,\]
 where $m_t$ is the two form obtained from the antisymmetric part of $M_t$,
 that is,
 \[ 
 m_t = \frac{1}{2}\, M_{t\, [ab]}\, \dd x^{ab}~,
 \]
as in equation \eqref{eq:handm} in section \ref{sssec:formdecomp}.
 The equation for moduli for a manifold $Y$ with an integrable  $G_2$ structure is
\begin{equation}
0 = \Sigma_{t\, a}\lrcorner\varphi 
= \left(\dth\, M_{t\, a}
+ (\dd m_t)_a 
 + \frac{1}{2}\, \partial_t \hat H_a
\right)\lrcorner\varphi
~.
\label{eq:modulieq}
\end{equation}
This equation cannot depend on $\pi_{14}(m)$ as these are not part of the moduli of the integrable $G_2$ structure as discussed before (see subsection \ref{sssec:formdecomp}).  To check that in fact $\pi_{14}(m)$ drops off equation \eqref{eq:modulieq}, we prove the following lemma.

 \begin{lemma}\label{lemma:lemz}
Let $z$ be a one form with values in $T^*Y$ such that the matrix $z_{ab} = (z_a)_b$ is antisymmetric.
Then
\begin{equation*}
\dth z_a = - (\dd z)_a + \frac{1}{2}\, (\nabla_a z_{bc} )\, \dd x^{bc}~,
\end{equation*}
 where
\[ 
z = \frac{1}{2}\, z_{ab}\, \dd x^{ab}~.\]
If moreover $z\in \Lambda_{14}^2$,
we have
\begin{equation*}
(\dth z_a + (\dd z)_a)\lrcorner\varphi = 0
\end{equation*}

\end{lemma}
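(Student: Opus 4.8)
The plan is to read the first identity off the general covariant-derivative formula already established in the excerpt, and then obtain the $\Lambda_{14}^2$ statement by covariantly differentiating the defining relation of $\Lambda_{14}^2$.

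For the first identity I would specialise equation \eqref{eq:cool} (equivalently \eqref{eq:covid}) to the case of no gauge bundle, so that $\dd_A=\dd$ and $\nabla^A=\nabla$ is the $G_2$-compatible covariant derivative whose connection one-form is $\theta_a{}^b=\Gamma_{ac}{}^b\,\dd x^c$, and apply it to the two-form $\alpha=z$, i.e.\ with $k=2$. The antisymmetry $z_{ab}=-z_{ba}$ is precisely the hypothesis that makes $z=\tfrac12 z_{ab}\,\dd x^{ab}$ a genuine two-form, whose associated $T^*Y$-valued one-form in the insertion notation is $z_a=z_{ab}\,\dd x^b$ and whose exterior derivative $\dd z$ and contraction $(\dd z)_a$ are well defined. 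With $\dth z_a=\dd z_a-\theta_a{}^b\w z_b$, equation \eqref{eq:cool} then reads $\dth z_a+(\dd z)_a=\tfrac{1}{2}\,(\nabla_a z_{bc})\,\dd x^{bc}$, and rearranging gives exactly $\dth z_a=-(\dd z)_a+\tfrac{1}{2}\,(\nabla_a z_{bc})\,\dd x^{bc}$. Thus the first part requires no new computation beyond invoking \eqref{eq:cool} for a two-form; in particular there is no stray torsion contribution, since \eqref{eq:cool} is exact for the connection $\nabla$ with symbols $\Gamma$.

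For the second part I would start from the rearranged identity $\dth z_a+(\dd z)_a=\tfrac12\,(\nabla_a z_{bc})\,\dd x^{bc}$, so that contracting the $T^*Y$-valued two-form on the right with $\varphi$ over its form indices $b,c$ yields a one-form with components $\tfrac12\,(\nabla_a z_{bc})\,\varphi^{bc}{}_e$. By the characterisation \eqref{eq:2F14}, the hypothesis $z\in\Lambda_{14}^2$ is equivalent to $z\lrcorner\varphi=0$, i.e.\ to the pointwise identity $z_{bc}\,\varphi^{bc}{}_e=0$ holding everywhere on $Y$. Differentiating this identity with $\nabla$ and using that $\nabla$ is compatible with the $G_2$ structure, so that $\nabla\varphi=0$, annihilates the term in which the derivative hits $\varphi$ and leaves $(\nabla_a z_{bc})\,\varphi^{bc}{}_e=0$. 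This is precisely the vanishing of $(\dth z_a+(\dd z)_a)\lrcorner\varphi$, which is the claim.

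The only points requiring care are bookkeeping: identifying the contraction $\lrcorner\varphi$ acting on the two-form indices of $\dth z_a+(\dd z)_a$ with the scalar $(\nabla_a z_{bc})\,\varphi^{bc}{}_e$, and recognising that the relevant connection in both parts is the torsionful $G_2$-compatible one (for which $\nabla\varphi=0$), not the Levi-Civita connection. With these conventions fixed, both statements follow immediately from \eqref{eq:cool} and $\nabla\varphi=0$, so there is no genuine analytic obstacle.
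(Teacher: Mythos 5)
Your proof is correct and takes essentially the same route as the paper's: the first identity is precisely the specialisation of the general formula \eqref{eq:covid}/\eqref{eq:cool} to a two-form with no gauge bundle (the paper simply redoes that index computation from scratch instead of citing it), and the second identity is, as you argue, the statement that $\nabla_a$ commutes with contraction against $\varphi$ because $\nabla\varphi=0$ and $\nabla g=0$, applied to the pointwise identity $z_{bc}\,\varphi^{bc}{}_e=0$ characterising $\Lambda^2_{14}$. There are no gaps; your bookkeeping of the sign in $\dth z_a=\dd z_a-\theta_a{}^b\wedge z_b$ and of the factor $\tfrac12$ in the contraction matches the paper's conventions.
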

\begin{proof}
For the first identity we have
\begin{align*}
\dth z_a &= (\partial_b z_{ac} - \Gamma_{ab}{}^e\, z_{ec})\, \dd x^{bc}
= \frac{1}{2}\, (3\, \partial_{[b} z_{ac]} + \partial_a z_{bc}- 2\, \Gamma_{ab}{}^e\, z_{ec})\, \dd x^{bc}
\\[5pt]
&= - (\dd z)_a  +  \frac{1}{2}\, (  \nabla_a z_{bc} )\, \dd x^{bc}~.
\end{align*}
The second identity follows from the fact that if $z\in\Lambda_{14}^2$, then $z\lrcorner\varphi = 0$. 

\end{proof}

Note in particular that when we restrict to the $G_2$ holonomy case with vanishing flux ($H=0$), the moduli equation \eqref{eq:modulieq} reduces to
\begin{equation}
0 = \Sigma_{t\, a}\lrcorner\varphi 
= \left(\dth\, M_{t\, a}
+ (\dd m_t)_a
\right)\lrcorner\varphi
~.
\label{eq:modulieqG2Hol}
\end{equation}
where now $\dth$ denotes the Levi-Civita connection. As shown in \cite{delaOssa:2016ivz}, one can always make a diffeomorphism gauge choice where
\begin{equation}
\check\dth\, h_{t\, a}=\check\dth^\dagger\, h_{t\, a}=0\:,\;\;\Leftrightarrow\;\;h_{t\, a}\in{\cal H}_{\check\dth}^{1}(TY)\cong H_{\check\dth}^1(TY)\:,
\end{equation}
where $h_t$ is the symmetric traceless part of $M_t$, and ${\cal H}_{\dth}^*(TY)$ denote $\dth$-harmonic forms. Note that $h_t$ is restricted to the $\bf 27$ representation of ${\cal H}_{\check\dth}^{1}(TY)$. The remaining representations are the singlet $\bf 1$, which corresponds to trivial re-scalings of the metric, and the anti-symmetric $\bf 14$ representation, which in string theory have a natural interpretation as $B$-field deformations. 

For completeness, but not relevant to the work in this paper, we note that the procedure in this section can also be used to find infinitesimal deformations of a manifold $Y$ with a $G_2$ structure which is not necessarily integrable.  The result in this case is 
\begin{align*}
0 &= \Sigma_{t\, a}\lrcorner\varphi
\\
&=\left(\dth\, M_{t\, a}
+ (\dd m_t)_a 
 + \frac{1}{2}\, \partial_t \hat H_a
\right)\lrcorner\varphi 
+  \frac{1}{2}\, (\partial_t \tau_{2\, ab} + M_{t\, b}^{\, e}\, \tau_{2\, ea})\, \dd x^b~.
\end{align*}
In this case, all these equations give the deformations of the torsion classes in terms of $M_t$. Infinitesimal deformations of a $G_2$ structure give another $G_2$ structure as 
the existence of a $G_2$ structure on $Y$ is a topological condition (in fact, any 7-dimensional manifold which is spin and orientable, that is, its first and second Stiefel-Whitney classes are trivial, admits a G2 structure).  

A couple of remarks are in order regarding the equations for moduli obtained in this section.  What we have demonstrated is that equation \eqref{eq:modulieq} is equivalent to equations \eqref{eq:intphi} and \eqref{eq:intpsi}.  On a first sight, equation \eqref{eq:modulieq} looks useless as we do not have (at this stage) an independent way to describe the variations of the torsion in terms of the $M_t$.  Equation \eqref{eq:modulieq} however will become useful in section \ref{sec:InfDefHet} when we discuss the moduli of heterotic $G_2$ systems. In this context,  perturbative quantum corrections to the theory require the cancelation of an anomaly which gives an independent description of  $H$ in terms of instanton connections on both $TY$ and a vector bundle $V$ on $Y$ .


\section{Moduli space of instantons on manifolds with $G_2$ structure}\label{sec:instantons}

We now turn to studying the moduli space of integrable $G_2$ manifolds with instantons. There is a large literature on deformations of instantons on manifolds with special structure \cite{reyes1993some, carrion1998generalization, donaldson1998gauge, 2000math.....10015T,2009arXiv0902.3239D, sa2009instantons, 2014SIGMA..10..083S,Ball:2016xte,Harland:2009yu,2010JHEP...10..044B,Harland:2011zs,Ivanova:2012vz,Bunk:2014ioa,Bunk:2014kva,Haupt:2014ufa,2015arXiv151104928H,Charbonneau:2015coa,delaOssa:2016ivz, Munoz:2016rgc}. In order for this paper to be self-contained, we will now review the results of \cite{delaOssa:2016ivz}, using the insertion operators  introduced in previous sections. We will see that, in this set up,  proofs of the theorems of \cite{delaOssa:2016ivz} simplify drastically. 

Consider a one parameter family of pairs $(Y_t, V_t)$ with $(Y_0, V_0)=(Y, V)$, 
$V$ is vector bundle over a manifold $Y$ which admits an integrable $G_2$ structure.
Let $F$ be the curvature of $V$ and we take $F$ to satisfy the instanton equation
\be
F\wedge\psi = 0 ~.\label{eq:instanton}
\ee 
The moduli problem that we want to discuss in this section is the  simultaneous deformations of the integrable $G_2$ structure on $Y$ together with those of the bundle $V$ which preserve both the integrable $G_2$ structure on $Y$ and the instanton equation.   We begin by considering variations of equation \eqref{eq:instanton}.

\begin{theorem}[\cite{delaOssa:2016ivz}]
Let $M_t\in \Lambda^1(TY)$ be a deformation of the integrable $G_2$ structure on $Y$ and $\partial_t A$ a deformation of the instanton connection on $V$.  The simultaneous deformations $M_t$ and $\partial_t A$ which respectively preserve the integrable $G_2$ structure and the instanton condition on $V$ must satisfy
\be
\left(\dd_A\partial_t A - i_{M_t}(F)\right)\lrcorner\varphi = 0~.
\label{eq:Atiyah}
\ee
\end{theorem}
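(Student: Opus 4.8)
The plan is to derive the moduli equation \eqref{eq:Atiyah} by varying the instanton condition \eqref{eq:instanton} with respect to the combined parameter $t$, using the same insertion-operator technology that proved the geometric moduli equations in the previous section. The starting point is the observation that both $F$ and $\psi$ depend on $t$: the curvature $F$ varies because the connection $A$ varies, while $\psi = *\varphi$ varies because the $G_2$ structure is being deformed by $M_t$. Applying $\partial_t$ to $F\wedge\psi = 0$ yields
\begin{equation*}
(\partial_t F)\wedge\psi + F\wedge(\partial_t\psi) = 0~.
\end{equation*}
For the first term I would use the standard Bianchi-type identity $\partial_t F = \dd_A(\partial_t A)$, valid because variation and exterior covariant differentiation commute on the connection. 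For the second term I would substitute the defining deformation $\partial_t\psi = i_{N_t}(\psi) = i_{M_t}(\psi)$ from equation \eqref{eq:varpsi2}, recalling that $M_t = N_t$ as established in the preceding section.

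The key step is to rewrite $F\wedge i_{M_t}(\psi)$ so that the contraction acts on $F$ rather than on $\psi$. Since $i_{M_t}$ is a graded derivation of degree zero (here $p=1$), the Leibnitz rule \eqref{eq:leibnitz} gives
\begin{equation*}
i_{M_t}(F\wedge\psi) = i_{M_t}(F)\wedge\psi + F\wedge i_{M_t}(\psi)~,
\end{equation*}
and the left-hand side vanishes by the instanton equation \eqref{eq:instanton}. Hence $F\wedge i_{M_t}(\psi) = -\,i_{M_t}(F)\wedge\psi$. Substituting this back, the varied equation becomes
\begin{equation*}
\dd_A(\partial_t A)\wedge\psi - i_{M_t}(F)\wedge\psi = 0~,
\end{equation*}
so that $\left(\dd_A\partial_t A - i_{M_t}(F)\right)\wedge\psi = 0$. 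The final step is to translate this wedge-with-$\psi$ statement into the contraction form \eqref{eq:Atiyah}: for a two-form $\beta$ valued in $\End(V)$, the conditions $\beta\wedge\psi = 0$ and $\beta\lrcorner\varphi = 0$ are equivalent (this is exactly the characterisation of $\Lambda^2_{14}$ in \eqref{eq:2F14}), and $\beta\lrcorner\varphi = 0$ is the same as $\beta\lrcorner\varphi$ vanishing, i.e.\ the projection onto $\Lambda^2_7$ being zero. Thus the equation $\left(\dd_A\partial_t A - i_{M_t}(F)\right)\lrcorner\varphi = 0$ follows immediately.

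The main obstacle, and the point requiring the most care, is the interchange of the deformation derivative $\partial_t$ with the wedge product and the contraction, together with the precise passage between the $\wedge\psi$ and $\lrcorner\varphi$ formulations. One must check that $\partial_t F = \dd_A \partial_t A$ holds with the correct sign and that no extra torsion or connection-variation terms intrude (they do not, since $\End(V)$-valued forms transform homogeneously). I would also verify that applying the derivation $i_{M_t}$ to the $\End(V)$-valued two-form $F$ is legitimate, which is guaranteed by the extension of the insertion operator to bundle-valued forms described around \eqref{eq:insopE}. Once these points are settled, the derivation is short; the content of the theorem is really the recognition that varying the instanton condition produces precisely the $G_2$ analogue of the Atiyah obstruction, with $\dd_A\partial_t A$ playing the role of the bundle deformation and $i_{M_t}(F)$ coupling it to the geometric deformation through the curvature.
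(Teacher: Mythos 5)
Your proposal is correct and follows essentially the same route as the paper: vary $F\wedge\psi=0$, use $\partial_tF=\dd_A\partial_tA$ and $\partial_t\psi=i_{M_t}(\psi)$, move the insertion from $\psi$ onto $F$, and convert the $\wedge\,\psi$ statement into the $\lrcorner\,\varphi$ form. The only (cosmetic) difference is that you obtain $F\wedge i_{M_t}(\psi)=-\,i_{M_t}(F)\wedge\psi$ from the Leibniz rule applied to $i_{M_t}(F\wedge\psi)=0$, whereas the paper invokes the pointwise identity \eqref{eq:idtwo14} for $\Lambda^2_{14}$-forms; both are equivalent uses of the instanton condition.
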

\begin{proof}
Variations of the instanton equation \eqref{eq:instanton} give
\[ 0 = \partial_t(F\wedge\psi) = \partial_t F\wedge\psi + F \wedge\partial_t\psi~.\]
Note that in the first term, the wedge product of $\partial_t F$ with $\psi$ picks out the part of  $\partial_t F$ which is in $\Lambda^2_7$. 
Noting that 
\[ \partial_t F = \dd_A\partial_t A~,\]
we obtain
\[
\dd_A\partial_t A \wedge\psi + F\wedge i_{M_t}(\psi) = 0~
\]
Taking the Hodge dual we obtain equivalently
\[
(\dd_A\partial_t A)\lrcorner\varphi = - *( F\wedge i_{M_t}(\psi)) 
= - *( F\wedge M_t^a\wedge\psi_a)
= *(M_t^a\wedge F_a \wedge\psi)
= *( i_{M_t}(F)\wedge\psi)~.
\]
where we have used the identity \eqref{eq:idtwo14} in the second to last equality.
Therefore the result follows.

\end{proof}

Note that $\partial_t A$  is not well defined (it is not an element of $\Lambda^1(Y, {\rm End}(V))$), however equation \eqref{eq:Atiyah} is covariant.  Under a gauge transformation $\Phi$, $A$ transforms as
\[
A\quad\mapsto\quad{}^\Phi A = \Phi\, ( A  - \Phi^{-1}\, \dd\Phi) \Phi^{-1}~,
\]
and hence $\partial_t A$ transforms as
\[
\partial_t\, A\quad\mapsto\quad {}^\Phi (\partial_t \, A) = \Phi\, \big(\partial_t A  - \dd_A(\Phi^{-1}\partial_t\Phi)\big) \Phi^{-1}~.
\]
After a short computation, we find
\[
\dd_A\partial_t A\quad\mapsto\quad{}^\Phi (\dd_A\partial_t A) = \Phi \big(\dd_A\partial_t A - \dd_A^2(\Phi^{-1}\partial_t\Phi)\big)\Phi^{-1}~,
\]
and contracting with $\varphi$
\begin{align*}
(\dd_A\partial_t A)\lrcorner\varphi
\quad\mapsto\quad {}^\Phi (\dd_A\partial_t A)\lrcorner\varphi &= 
\Phi \left(\big(\dd_A\partial_t A - \dd_A^2(\Phi^{-1}\partial_t\Phi)\big)\lrcorner\varphi\right)\Phi^{-1}
\\
&= \Phi \big((\dd_A\partial_t A)\lrcorner\varphi\big)\Phi^{-1}~,
\end{align*}
where we have used the fact that $\check\dd_A^2 = 0$.  Hence equation \eqref{eq:Atiyah} is covariant\footnote{This had already been noticed by Atiyah in connection with his work on the moduli of holomorphic bundles on complex manifolds \cite{atiyah1957complex} and has been used in \cite{Candelas:2016usb}. Here we generalise this point to the case at hand of the moduli of instanton connections on manifolds with an integrable $G_2$ structure}. 
One can define a {\it covariant deformation} of $A$, $\alpha_t\in\Lambda^1(Y,{\rm End}(V))$, by introducing a connection one form $\Lambda$ {\it on the moduli space} of instanton bundles over $Y$\footnote{Here we generalise the work of \cite{Candelas:2016usb} where covariant variations of holomorphic connections were constructed.}. Because equation \eqref{eq:Atiyah} is already a covariant equation for the moduli, it should be the case that
\be 
\alpha_t = \partial_t A - \dd_A\Lambda_t~,\label{eq:covvarA}
\ee
that is, $\alpha_t$ and $\partial_t A$ can only differ by a term which is $\check\dd_A$-closed.  Note that
$\alpha_t$ is in fact covariant as long as the connection $\Lambda_t$ transforms under gauge transformations as
\[
\Lambda_t\quad\mapsto \quad^\Phi\, \Lambda_t = \Phi\,(\Lambda_t - \Phi^{-1}\partial_t\Phi)\,\Phi^{-1}~.
\]
In terms of elements $\alpha_t\in\Lambda^1(Y, {\rm End}(V))$, equation \eqref{eq:Atiyah} is
\be
\left(\dd_A\alpha_t - i_{M_t}(F)\right)\lrcorner\varphi = 0~.
\label{eq:Atiyah2}
\ee

It will convenient (and important) to understand better the moduli problem to define  the map \cite{delaOssa:2016ivz}\footnote{Note that 
${\cal F}$ and $M_t$ have changed signs compared to \cite{delaOssa:2016ivz}.}
\begin{align*}
{\cal F} :\quad  \Lambda^p(Y, TY)\qquad &\longrightarrow\qquad \Lambda^{p+1}(Y, {\rm End}(V))\\
 M\quad\qquad&~\mapsto \qquad 
{\cal F}(M) = (- 1)^p\, i_M(F)~.
 \end{align*}
 We also define the map 
 \begin{equation*}
\check{\cal F} :\quad  \Lambda_{\bf r}^p(Y, TY)\qquad \longrightarrow\qquad \Lambda_{\bf r'}^{p+1}(Y, {\rm End}(V))~, 
\end{equation*}
where $\Lambda_{\bf r}^p(Y, {\rm End}(V))\subseteq \Lambda^p(Y, {\rm End}(V))$, 
$\Lambda_{\bf r'}^{p+1}(Y, {\rm End}(V))\subseteq \Lambda^{p+1}(Y, {\rm End}(V))$,  and ${\bf r}$ and ${\bf r'}$ are appropriate irreducible $G_2$ representations as follows:
\begin{align*}
 \check{\cal F}(M) &= {\cal F}(M) =  \, i_M(F) ~, 
 &{\rm for}\quad M \in \Lambda^0(TY)~,
 \\
 \check{\cal F}(M) &= \pi_7({\cal F}(M)) =  - \pi_7(i_M(F))~, &{\rm for}\quad M \in \Lambda^1(TY)~,
 \\
  \check{\cal F}(M) &= \pi_1({\cal F}(M)) =  \pi_1(i_M(F))~, &{\rm for}\quad M \in \Lambda_7^2(TY)~.
\end{align*}
Note that the projections that define $\check{\cal F}$ are completely analogous to those that define the derivatives $\check \dd_A$.  In terms of this map,
equation \eqref{eq:Atiyah2} can be written as
\be
\check\dd_A\alpha_t + \check{\cal F}(M_t) = 0~.
\label{eq:AtiyahBis}
\ee

The theorem below proves that  as a consequence of the Bianchi identity $\dd_A F = 0$, 
$\check{\cal F}$ maps the moduli space of manifolds with an integrable $G_2$ structure into the $\check \dd_A$-cohomology discussed in section \ref{sec:cancom}.

\begin{theorem}[\cite{delaOssa:2016ivz}] \label{theorem:instanton}
Let $M\in\Lambda^p(Y, TY)$, where $p = 0, 1, 2$, and let $F$ be the curvature of a bundle $V$ with one form connection $A$ which satisfies the instanton equation.  Let $\nabla$ be a covariant derivative on $Y$ compatible with the integrable $G_2$ structure on $Y$ 
with torsion $H$, and $\dth$ be an exterior covariant derivative such that
\[ \theta_a{}^b = \Gamma_{ac}{}^b\, \dd x^c~,\]
where $\Gamma$ are the connection symbols of $\nabla$.
Then the Bianchi identity 
\[ \dd_A F = 0~,\]
implies
\begin{equation}
\check \dd_A(\check{\cal F}(M))+ \check{\cal  F}(\check \dth(M)) = 0~.\label{eq:idcohom}
 \end{equation}
 Forms $M \in \Lambda^p(Y, TY)$ which are $\check \dth$-exact
 are mapped into forms in $\Lambda^{p+1}(Y, {\rm End}(TY))$ which are $\check \dd_A$-exact.
 Furthermore, {\bf any} form $M\in \Lambda^1(Y, TY)$ which satisfies the moduli equation
\[i_{\cdth M}(\psi) = 0~,\]
is mapped into a $\check \dd_A$-closed form in $\Lambda^2(Y, {\rm End}(TY))$.
Therefore, $\check{\cal F}$ maps the infinitesimal moduli space ${\cal TM}_0$ of $Y$ into elements of the cohomology
$H^2_{\check \dd_A}(Y, {\rm End}(V))$.
\end{theorem}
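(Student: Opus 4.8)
The plan is to derive a single \emph{master identity} and then read off all of the assertions from it. First I would feed the curvature $F$ into the covariant Nijenhuis--Lie derivative of Theorem \ref{theorem:CNL}: setting $\alpha = F$ in \eqref{eq:CNL} gives
\[
\dd_A\, i_M(F) + (-1)^p\, i_M(\dd_A F) = i_{\dth M}(F) + (-1)^p\, M^a\w \nabla^A_a F~,
\]
and the Bianchi identity $\dd_A F = 0$ kills the second term on the left. Rewriting in terms of ${\cal F}(M) = (-1)^p\, i_M(F)$ and ${\cal F}(\dth M) = (-1)^{p+1}\, i_{\dth M}(F)$ yields the unprojected identity
\[
\dd_A\big({\cal F}(M)\big) + {\cal F}(\dth M) = M^a\w\nabla^A_a F~.
\]
Everything then reduces to showing that, after the projections defining $\check\dd_A$ and $\check{\cal F}$, the right-hand side together with the discrepancy between the full and the projected operators drops out, leaving \eqref{eq:idcohom}.

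The delicate part, which I expect to be the main obstacle, is exactly this projection bookkeeping; I would handle it (for $p=1$, with the analogous projectors for $p=0,2$) through three observations resting on integrability $\tau_2=0$, the instanton condition, and $G_2$ branching rules. First, for $\gamma\in\Lambda^2_{14}(Y,\End(V))$ one has $\pi_1(\dd_A\gamma)=0$: since for a three-form $\lambda$ the condition $\pi_1(\lambda)=0$ is equivalent to $\psi\w\lambda=0$ (immediate from \eqref{eq:decomp3}), and $\psi\w\gamma=0$ by the definition of $\Lambda^2_{14}$, differentiation gives $\psi\w\dd_A\gamma=-\dd\psi\w\gamma=-4\,\tau_1\w\psi\w\gamma=0$ using \eqref{eq:Intpsi} with $\tau_2=0$. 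This lets me replace ${\cal F}(M)$ by $\check{\cal F}(M)=\pi_7({\cal F}(M))$ inside $\pi_1(\dd_A\,\cdot\,)$, so that $\pi_1(\dd_A{\cal F}(M))=\check\dd_A(\check{\cal F}(M))$. Second, covariantly differentiating \eqref{eq:instanton} with $\nabla\psi=0$ gives $\nabla^A_a F\w\psi=0$, i.e. $\nabla^A_a F\in\Lambda^2_{14}$; the error term then lies in $\Lambda^1\w\Lambda^2_{14}$, whose three-form singlet part vanishes because $\mathbf{1}\not\subset\mathbf{7}\otimes\mathbf{14}$, so $\pi_1(M^a\w\nabla^A_a F)=0$. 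Third, $\pi_1(i_N(F))=0$ for $N\in\Lambda^2_{14}(Y,TY)$, since a nonzero invariant would require $\mathbf{7}\subset\mathbf{14}\otimes\mathbf{14}$, which does not occur; this lets me replace $\dth M$ by $\check\dth M=\pi_7(\dth M)$ inside ${\cal F}(\dth M)$, giving $\pi_1({\cal F}(\dth M))=\check{\cal F}(\check\dth M)$. Assembling the three observations turns the master identity into \eqref{eq:idcohom}.

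The remaining assertions then follow formally. If $M=\check\dth L$ is $\check\dth$-exact, then \eqref{eq:idcohom} applied to $L$ gives $\check{\cal F}(M)=\check{\cal F}(\check\dth L)=-\check\dd_A\big(\check{\cal F}(L)\big)$, so $\check{\cal F}(M)$ is $\check\dd_A$-exact. If instead $M\in\Lambda^1(Y,TY)$ solves the moduli equation $i_{\cdth M}(\psi)=0$, then by \eqref{eq:idcohom} closedness reduces to $\check{\cal F}(\check\dth M)=\pi_1\big(i_{\check\dth M}(F)\big)=0$; writing $P=\check\dth M\in\Lambda^2_7(Y,TY)\cong\mathbf{7}\otimes\mathbf{7}$, a short representation-theoretic count shows that $\pi_1(i_P(F))$ factors through the $\mathbf{14}$-component of $P$, which is precisely the component that $i_P(\psi)$ detects, so $i_P(\psi)=0$ forces $\pi_1(i_P(F))=0$ and hence $\check{\cal F}(M)$ is $\check\dd_A$-closed. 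Finally, the trivial deformations are $M_{triv}=\dth V=\check\dth V$ (the projection being trivial at degree zero), which by the exactness statement applied with $L=V$ map to $\check\dd_A$-exact forms; combined with closedness, this shows that $\check{\cal F}$ descends to a well-defined map ${\cal TM}_0\rightarrow H^2_{\check\dd_A}(Y,\End(V))$, as claimed.
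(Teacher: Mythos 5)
Your proof is correct, and its first half is essentially the paper's: both start from the covariant Nijenhuis--Lie identity \eqref{eq:CNL} applied to $F$, kill $i_M(\dd_A F)$ with the Bianchi identity, and are left with the error term $M^a\w\nabla^A_a F$, which dies because differentiating $F\w\psi=0$ with $\nabla\psi=0$ puts $\nabla^A_a F$ in $\Lambda^2_{14}$. Where you genuinely diverge is in the projection bookkeeping and the closedness step. The paper disposes of both by contracting with $\varphi$ and manipulating Hodge duals --- in particular, for closedness it derives the explicit identity ${\cal F}(\dth M)\lrcorner\varphi=*(i_{\dth M}(F)\w\psi)=-*(i_{\cdth M}(\psi)\w F)$ via \eqref{eq:idtwo14} and \eqref{eq:5insert7}--\eqref{eq:5insert14}, so the moduli equation kills $\check{\cal F}(\check\dth M)$ by direct proportionality --- and it waves through the replacement of ${\cal F}$, $\dth$ by $\check{\cal F}$, $\cdth$ with the phrase ``upon considering the appropriate projections.'' You instead argue by $G_2$ branching rules: your observations that $\pi_1(\dd_A\gamma)=0$ for $\gamma\in\Lambda^2_{14}$ (using $\tau_2=0$) and that $\pi_1(i_N(F))=0$ for $N\in\Lambda^2_{14}(Y,TY)$ actually make explicit a step the paper leaves implicit, which is a gain. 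The one soft spot is the closedness argument: your claim that $i_P(\psi)$ ``detects precisely'' the $\mathbf{14}$-component of $P=\cdth M$ needs the equivariant map $\mathbf{14}\to\Lambda^5_{14}$, $v\mapsto i_P(\psi)$, to be nonzero (Schur only tells you it is a multiple of the projection, possibly zero, in which case $i_P(\psi)=0$ would not force the $\mathbf{14}$-part to vanish). This is true and easy to supply --- e.g.\ the identity \eqref{eq:idtwo14} gives the explicit nonzero proportionality, which is exactly what the paper's chain of equalities in \eqref{eq:inproof} uses --- but as written your argument does not verify it, so you should add that one line.
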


\begin{proof}
Consider $\dd_A i_M(F)$. Then
\[
\dd_A i_M(F) = [\dd_A, i_M](F) + i_M\dd_A F~.
\]
The second term vanishes by the Bianchi identity.  Using equation \eqref{eq:CNL} we find
\[
\dd_A i_M(F) - i_{\dth M}(F) = (-1)^p\, M^a\wedge \nabla^A_a F~,
\]
where 
\[ 
\nabla^A_a F = \frac{1}{2}\, \nabla^A_a (F_{bc})\, \dd x^{bc}~.
\]
Contracting with $\varphi$ we find
\begin{align*}
\big(\dd_A i_M(F) - i_{\dth M}(F)\big)\lrcorner\varphi
&= \big(\dd_A i_M(F) - {\cal F}({\dth M})\big)\lrcorner\varphi
= (-1)^p\, \big( M^a\wedge \nabla^A_a F\big)\lrcorner\varphi
\\
&= (-1)^p\ *\big( M^a\wedge \nabla^A_a F\wedge\psi\big)
\\
&=
 (-1)^p\ *\big( M^a\wedge ( \nabla^A_a (F\wedge\psi) - 
F\wedge \nabla_a\psi)\big) = 0~.
\end{align*}
Hence, by the definition of $\cal F$ we find
\be
\big(\dd_A({\cal F}(M)) - {\cal F}(\dth M)\big)\lrcorner\varphi = 0~.\label{eq:precohom}
\ee
which implies equation \eqref{eq:idcohom} upon considering the appropriate projections for each value of $p$.

Suppose  $M\in\Lambda^p(Y, TY)$ is $\cdth$-exact, that is 
\[M= \cdth V~,\]
 for some $V\in\Lambda^{p-1}(Y, TY)$.  We want to prove that $\check{\cal F}(\cdth V)$ is maped into 
 a $\check\dd_A$-exact form in $\Lambda^{p+1}(Y,{\rm End}(TY))$.  This is now obvious from equation \eqref{eq:idcohom}.
 
Consider now $M\in\Lambda^1(Y, TY)$ which satisfies the moduli equation \eqref{eq:intpsi}.  We want to prove that ${\cal F}(M)$
is $\check\dd_A$-closed.  According to  equation \eqref{eq:idcohom}, this means  we need to prove that 
\[ {\cal F}(\cdth M) = 0\]
when $M$ satisfies \eqref{eq:intpsi}.  This is in fact the case as can be verified by the following computation
\begin{align}
{\cal F}(\dth M)\lrcorner\varphi &= * ( {\cal F}(\dth M)\wedge\psi)
= *(i_{\dth M}(F)\wedge \psi) = * (\dth M^a\wedge F_a \wedge \psi)
\nn\\
& = - *( i_{\dth M}(\psi)\wedge F) 
= - *(i_{\cdth M}(\psi)\wedge F) = 0~.\label{eq:inproof}
\end{align}
In the second line of this computation we have used the identity  \eqref{eq:idtwo14} in the first equality,
and equations \eqref{eq:5insert7} and \eqref{eq:5insert14} in the second. 

\end{proof}

We remark that actually any $M \in\Lambda^1(Y, TY)$ which satisfies the moduli equation
\[
 i_{\sigma}(\psi)  = 0~,
 \]
 where
\[
\sigma_t = \dth M_t - [\hat H , M_t] - \partial_t \hat H \in\Lambda^2(Y,TY)~,
\] 
is mapped by $\check{\cal F}$ into a $\check\dd_A$-closed form.  Indeed, 
the last term in the calculation above in equation \eqref{eq:inproof} can be written as (see equations \eqref{eq:integmod} and \eqref{eq:tau1red})
\[
0 = - *(i_{\cdth M}(\psi)\wedge F) = - *(\pi_{14}(i_\sigma(\psi))\wedge F) = - *(i_\sigma(\psi)\wedge F)~.
\]

Equation \eqref{eq:AtiyahBis} and theorem \ref{theorem:instanton} give a very nice picture of the tangent space to the moduli space of simultaneous deformations of
the integrable $G_2$ structure on $Y$ together with the instanton condition on the bundle $V$ on $Y$. 
Keeping the $G_2$ structure fixed ($\partial_t\psi = 0$) on the base manifold equation \eqref{eq:AtiyahBis} gives
\begin{equation} 
\cd_A \alpha_t = 0~,\label{eq:Bmod}
\end{equation} 
which is the equation for the bundle moduli. It is also clear that variations of $A$ which are $\cd_A$-exact one-forms correspond to gauge transformations, so
the bundle moduli correspond to elements of the cohomology group 
\[ H_{\cd_A}^1(Y, \rm{End}(V))~.\]  

On the other hand, suppose that the parameter $t$ corresponds to a deformation of the integrable $G_2$ structure.  Then equation \eqref{eq:AtiyahBis} represents the equation that the moduli $M_t$ must satisfy in order for the instanton condition be preserved.  In fact,  it means that the variations $M_t\in {\cal TM}_0$ of the integrable $G_2$ structure of $Y$,
are such that $\check{\cal F}(M_t)$ must be 
$\check \dd_A$-exact, that is
\[ M_t\in {\rm ker}(\check{\cal F})\subseteq {\cal TM}_0~.\]
  Therefore, the tangent space of the moduli space of the combined deformations of the integrable $G_2$ structure and bundle deformations is given by
\be
 {\cal T}{\cal M}_1 = {\rm ker}(\check{\cal F})\oplus H^1_{\check \dd_A}(Y, {\rm End}(V))~,\label{eq:TM1}
 \ee
where elements in $H^1_{\check \dd_A}(Y, {\rm End}(V))$ correspond to bundle moduli.   Note again that there is no reason to believe that ${\rm ker}(\check{\cal F})$ is finite dimensional.

Finally, there is an important observation regarding the parts of  the moduli $M\in\Lambda^1(Y,TY)$ which appear in equation \eqref{eq:TM1}.  
Thinking about $M$ as a matrix, we have seen that $\pi_{14}(m)$ (where $m$ is the two form obtained from the antisymmetric part of $M$) drops out of the contractions $i_M(\psi)$ and $i_M(\varphi)$ corresponding to the variations of $\psi$ and $\varphi$ respectively.  Hence $\pi_{14}(m)$ plays no part in the moduli problem leading to ${\cal TM}_0$. It is easy to see that $\pi_{14}(m)$ also drops out from equations \eqref{eq:AtiyahBis} and \eqref{eq:idcohom}.  For equation \eqref{eq:AtiyahBis}, 
\[
{\cal F}(M)\lrcorner \varphi = - *(i_M(F)\wedge\psi)= - *(M^a\wedge F_a\wedge\psi) = *(M^a\wedge\psi_a\wedge F) = *(i_M(\psi)\wedge F)
\]
where we have used identity \eqref{eq:idtwo14}.  This same argument shows that $\pi_{14}(m)$ drops out of the first term of equation \eqref{eq:idcohom}.  As equation \eqref{eq:idcohom} must be true for any $M\in \Lambda^1(Y,TY)$, it follows that  $\pi_{14}(m)$ drops out of the second term too.

\section{Infinitesimal moduli of heterotic $G_2$ systems}
\label{sec:InfDefHet}
 
We now use the results of the previous sections to determine the infinitesimal moduli space of heterotic $G_2$ systems.  We show that the moduli problem can be reformulated in terms of a differential operator $\check{\cal D}$ acting on forms  ${\cal Z}$ with values in a bundle
\be
{\cal Q} = T^*Y\,  \oplus\, {\rm End}(TY)\oplus\, {\rm End}(V). 
\ee
We construct an exterior covariant derivative $\cal D$ by requiring that, for a one form ${\cal Z}$ with values in $\cal Q$,
the conditions $\check{\cal D}({\cal Z}) = 0$, reproduces the equations for moduli that we already have, that is equations \eqref{eq:modulieq} and \eqref{eq:AtiyahBis}. 
Furthermore, we show that $\check{\cal D}^2 = 0$ is enforced by the heterotic $G_2$ structure, including crucially equation \eqref{eq:idcohom}, and the anomaly cancelation condition that we introduce below. In other words, we show that the heterotic $G_2$ structure
corresponds to an instanton connection on $\cal Q$. Conversely, we prove that a differential which satisfies $\check{\cal D}^2 = 0$ 
implies the heterotic $G_2$ system including the (Bianchi identity of) the anomaly cancelation condition.  We show that this result is true to all orders in the $\alpha'$ expansion.
With this differential at hand, we show that the infinitesimal heterotic moduli space corresponds to classes in the cohomology group
\[ 
H^1_{\check {\cal D}}(Y, {\cal Q}) ~,
\]
which is finite dimensional.

 \subsection{The heterotic $G_2$ system in terms of a differential operator}
 
In this subsection we reformulate the heterotic $G_2$ system
\[ \left( [Y,\varphi],[V,A],[TY,\tilde{\theta}],H \right) \; ,
\]
in terms of a differential operator, or more precisely, a covariant operator $\check{\cal D}$, which acts on forms with values on the bundle  
\be
{\cal Q} = T^*Y\,  \oplus\, {\rm End}(TY)\oplus\, {\rm End}(V)~,\label{eq:Q}
\ee
and which satisfies $\check{\cal D}^2= 0$. It is important to keep in mind that we demand that $H$, which encodes the geometry 
of the integrable $G_2$ structure on $(Y,\varphi)$ (see equation \eqref{eq:H}) satisfies a constraint, the anomaly cancelation condition 
 \be
 H = \dd B + \frac{\alpha'}{4}\, ({\cal CS}[A] - {\cal CS}[\tilde\theta]) \; .
 \label{eq:Hdef}
\ee
In what follows we will also need the {\it Bianchi identity for the anomaly cancelation condition} which is obtained by applying the exterior derivative $\dd$ to the anomaly
\be
\dd H = \frac{\alpha'}{4}\, (\tr(F\wedge F) - \tr(\tilde R\wedge\tilde R))~
\label{eq:BIanom}
\ee
We show in appendix \ref{app:sugra} that heterotic $G_2$ systems correspond to certain vacua of heterotic supergravity, provided that  the torsion class $\tau_1$ is an exact form.  The results in this paper however apply to a more general system, as we do not assume anywhere that the torsion class $\tau_1$ is $\dd$-exact (by equation \eqref{eq:Intpsi} it is clear that for an integrable $G_2$ structure, $\tau_1$ is always $\cd$-closed). 

Consider the differential operator
 \begin{equation}
  {\cal D} =  \left(\,  
 \begin{matrix}
 \dth &  ~~{\cal \widetilde R} & ~  -{\cal F}~ 
 \\  {\cal \widetilde R} & ~~~\dd_{\tilde\theta} & ~~0
 \\ {\cal F} & ~~ 0  & ~~~\dd_A
 \end{matrix}
 \, \right)~,\label{eq:calD}
 \end{equation}
 which acts on forms with values in ${\cal Q}$. The operator acts linearly on forms with values in $\cal Q$ and it is easy to check that $\cal D$ satisfies the Leibniz rule, that is,
 \[
 {\cal D}(f\, {\cal V}) = \dd f\wedge{\cal V} + f\, {\cal D}{\cal V}~,
 \]
  for any section $\cal V$ of $\cal Q$ and any function $f$ on $Y$.
 Therefore, it defines a connection, or more appropriately, a covariant exterior derivative on $\cal Q$.  Its action on higher tensor products of ${\cal Q}$ can be obtained from the Leibniz rule. It is important to keep in mind in the definition of $\cal D$ that the two connections $\theta$ and $\tilde{\theta}$ on $TY$ are not the same (see more details in appendix \ref{app:sugra} for the reasons of this difference in the supergravity theory).   

 The map $\cal F$ has been defined already in section \ref{sec:instantons} by its action on forms with values in $TY$.  In defining $\cal D$, we extend the
 definition of the operator $\cal F$ to act on forms with values in $\cal Q$ as follows.
 Let $y\in \Lambda^p(Y, T^*Y)$, and $\alpha\in \Lambda^p(Y,{\rm End}(V)$.  Then 
 \begin{align*}
 {\cal F} :\quad \Lambda^p(Y, T^*Y)\oplus \Lambda^p(Y, {\rm End}(V))
 &\longrightarrow 
 \Lambda^{p+1}(Y, {\rm End}(V))\oplus \Lambda^{p+1}(Y, T^*Y)
 \\
 \left(\begin{matrix}
 y  \\ \alpha 
 \end{matrix}\right)\qquad\qquad\quad
 &\mapsto
 \quad\qquad \qquad  
\left(\begin{matrix}
 {\cal F}(y) \\  {\cal F}(\alpha) 
 \end{matrix}\right)\qquad
  \end{align*}
  where
  \begin{align*}
  {\cal F}(y) &= (-1)^{p}\, g^{ab}\, y_a\wedge F_{bc}\, \dd x^c = (-1)^{p}\, i_y(F)~,
  \\
  {\cal F}(\alpha)_a &= (-1)^{p}~ \frac{\alpha'}{4}\, \tr (\alpha\wedge F_{ab}\, \dd x^b)~.
  \end{align*}
  The map $\cal \tilde R$ is defined similarly, but acts on forms valued in $\Lambda^p(Y, T^*Y)\oplus \Lambda^p(Y, {\rm End}(TY)$.
 We also define the maps $\check{\cal F}$  and $\check{\cal \widetilde R}$ as in section \ref{sec:instantons} by an obvious generalisation.   

We now show that the projection $\check{\cal D}$ of the operator $\cal D$ satisfies $\check{\cal D}^2 = 0$
for heterotic $G_2$ systems.  The Bianchi identity of the anomaly cancelation condition enters crucially in the proof. 
   
 \begin{theorem}
\label{theorem:Dinstanton}
For a heterotic $G_2$ system  $( [Y,\varphi],[V,A],[TY,\tilde{\theta}],H )$,
the operator $\cal D$ satisfies $\check{\cal D}^2 = 0$.
\end{theorem}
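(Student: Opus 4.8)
The plan is to write $\check{\cal D}^2$ as a $3\times 3$ matrix of operators acting on $\Lambda^*(Y,{\cal Q})$, with rows and columns ordered as in \eqref{eq:Q}, and to show that every entry vanishes degree by degree along the canonical complex $\Lambda^0\to\Lambda^1\to\Lambda^2_7\to\Lambda^3_1$. Multiplying out the block form \eqref{eq:calD} and tracking the projections hidden in $\check{\cal D}$, the nine entries fall into three families: the three diagonal entries, the four ``Atiyah-type'' off-diagonal entries that pair one of $\cal F,{\cal \widetilde R}$ with an exterior covariant derivative, and the two ``mixed'' entries $\check{\cal D}^2_{23}=-\check{\cal \widetilde R}\,\check{\cal F}$ and $\check{\cal D}^2_{32}=\check{\cal F}\,\check{\cal \widetilde R}$ that pair $\cal F$ with ${\cal \widetilde R}$. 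A crucial bookkeeping point, visible already in the definition of the extended maps, is that $\cal F$ and ${\cal \widetilde R}$ act in two different directions: from $T^*Y$ into $\End(V)$ or $\End(TY)$ they are insertions $i_{(\cdot)}(F)$, $i_{(\cdot)}(\tilde R)$ carrying no factor of $\alpha'$, whereas from the endomorphism bundles back into $T^*Y$ they are the traces $\tfrac{\alpha'}{4}\tr((\cdot)\wedge F)$, $\tfrac{\alpha'}{4}\tr((\cdot)\wedge\tilde R)$. Which direction occurs, and in which order, is what distinguishes the three families.

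I would treat the Atiyah-type entries first, since they are essentially already proved. The entries $\check{\cal D}^2_{31}=\cd_A\,\check{\cal F}+\check{\cal F}\,\cdth$ and $\check{\cal D}^2_{21}=\cd_{\tilde\theta}\,\check{\cal \widetilde R}+\check{\cal \widetilde R}\,\cdth$ are precisely equation \eqref{eq:idcohom} of Theorem \ref{theorem:instanton} and its verbatim analogue for $\tilde R$, so they vanish by the Bianchi identities $\dd_A F=0$ and $\dd_{\tilde\theta}\tilde R=0$ together with $\nabla\psi=0$. The transposed entries $\check{\cal D}^2_{13}$ and $\check{\cal D}^2_{12}$, in which the maps act in the trace direction, should follow from the same computation applied to the gauge-invariant forms $\tr(\alpha\wedge F)$ and $\tr(\kappa\wedge\tilde R)$, using $\dd\,\tr(\alpha\wedge F)=\tr(\dd_A\alpha\wedge F)$ and again $\dd_A F=0$. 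For the mixed entries I would invoke the observation recorded at the end of Section \ref{sec:instantons}: since $\tilde R$ and $F$ are instantons, $\tr(\kappa\wedge\tilde R)$ and $\tr(\alpha\wedge F)$ lie in $\Lambda^2_{14}$-type representations, i.e. they are pure $\pi_{14}$ when viewed as $T^*Y$-valued forms, and such forms are annihilated by the subsequent insertion maps because $\pi_{14}(m)$ drops out of $i_{(\cdot)}(\psi)$ and hence of $\check{\cal F},\check{\cal \widetilde R}$ via identity \eqref{eq:idtwo14}. The very same mechanism disposes of the trace-then-insertion pieces $\check{\cal \widetilde R}^2$ and $\check{\cal F}^2$ inside the diagonal entries $\check{\cal D}^2_{22}$ and $\check{\cal D}^2_{33}$; combined with $\cd_{\tilde\theta}^2=0$ and $\cd_A^2=0$, which hold because $\tilde\theta$ and $A$ are instantons and the $G_2$ structure is integrable, these two diagonal entries vanish.

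The heart of the argument is the $(1,1)$ entry $\check{\cal D}^2_{11}=\cdth^2+\check{\cal \widetilde R}^2-\check{\cal F}^2$, the only place where the anomaly enters. Here the two curvature-squared pieces occur in the opposite order, insertion-then-trace, so they do not drop out but instead produce $\tfrac{\alpha'}{4}\tr(\tilde R\wedge\tilde R)$ and $\tfrac{\alpha'}{4}\tr(F\wedge F)$ acting on the $T^*Y$ argument; with the relative sign fixed by the $-\cal F$ in the $(1,3)$ slot, their sum is $\tfrac{\alpha'}{4}\bigl(\tr(\tilde R\wedge\tilde R)-\tr(F\wedge F)\bigr)=-\dd H$ by the Bianchi identity \eqref{eq:BIanom} of the anomaly. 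The remaining term $\cdth^2$ is the projected curvature of the torsionful connection $\theta$, which is \emph{not} an instanton; I would compute $R(\theta)$ explicitly and argue that its purely geometric part contracts to zero under the projection by virtue of the integrability identity \eqref{eq:thetacurv}/\eqref{eq:curvintG2}, so that only its $\dd H$ piece survives and cancels $-\dd H$ exactly.

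I expect this last step to be the main obstacle: matching the index and contraction structure of $\cdth^2$ acting on a $T^*Y$-valued form against the action of $\tr(\tilde R\wedge\tilde R)-\tr(F\wedge F)$, and confirming that the geometric remainder of $R(\theta)$ is killed by the $\pi_7$ and $\pi_1$ projections via integrability. This is exactly where the anomaly cancellation condition is indispensable, and since the $\alpha'$ carried by the trace maps matches the $\alpha'$ in \eqref{eq:BIanom} term by term, the cancellation is exact with no truncation, which is the sense in which the statement holds to all orders in $\alpha'$. Finally, I would remark that the converse direction, that $\check{\cal D}^2=0$ reproduces the heterotic $G_2$ system including the Bianchi identity of the anomaly, follows by reading the same entrywise identities backwards.
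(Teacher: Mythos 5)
Your overall strategy is the paper's own: expand $\check{\cal D}^2$ as the $3\times 3$ operator matrix \eqref{eq:nilD}, kill the first-column ``Atiyah'' entries with \eqref{eq:idcohom}, kill the trace-then-insertion compositions (the $(23)$, $(32)$ entries and the $\check{\cal F}^2$, $\check{\cal\widetilde R}^2$ pieces of the diagonal) by the representation-theoretic fact that two $\Lambda^2_{14}$ curvatures combine into $\Lambda^2_{14}$ (lemma \eqref{eq:i14to14}, equivalently your ``$\pi_{14}$ drops out of the insertions''), handle the first-row entries $(12)$, $(13)$ by the Bianchi identities $\dd_A F=0$, $\dd_{\tilde\theta}\tilde R=0$ together with $F\lrcorner\varphi=\tilde R\lrcorner\varphi=0$, and reduce the $(11)$ entry to a cancellation between $R(\theta)$ and the anomaly Bianchi identity \eqref{eq:BIanom}. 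Your identification of the insertion-versus-trace asymmetry of ${\cal F}$ and ${\cal\widetilde R}$, and of the sign giving $-\dd H$, is exactly right.

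The one genuine gap is the identity you invoke to finish the $(11)$ entry. You appeal to \eqref{eq:thetacurv}/\eqref{eq:curvintG2}, i.e.\ $\check R_a{}^b(\theta)\wedge\psi_b=0$. That identity contracts the curvature over its \emph{endomorphism} indices against $\psi$ (it is $\pi_{14}(i_{R(\theta)}(\psi))=0$, a single $T^*Y$-valued equation that follows from integrability alone and was used only to check that trivial deformations $\dth V$ solve the moduli equation). What the $(11)$ entry requires is a statement about the $\pi_7$ projection of $R(\theta)_{ab}$ in its \emph{form} indices, for each fixed pair $(a,b)$ separately, namely $R(\theta)_{ab}\lrcorner\varphi=-\tfrac14 (\dd H)_{cdab}\,\varphi^{cd}{}_e\,\dd x^e$. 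This does not follow from \eqref{eq:curvintG2}, which is only a trace of it. The actual mechanism (propositions in appendix \ref{sec:appcurvs}, in particular proposition \ref{prop:thetanotinst}) is the exchange-of-index-pairs relation for the torsionful connection, $(R(\Gamma)_{cd})_{ab}-(R(\theta)_{ab})_{cd}=\tfrac12(\dd H)_{abcd}$, combined with the fact that the $G_2$-compatible connection $\nabla$ with totally antisymmetric torsion $H$ \emph{is} an instanton in its form indices, $(R(\Gamma)_{cd})_{ab}\,\varphi^{cd}{}_e=0$ (a consequence of the covariantly constant spinor, not of the canonical complex). Without this pair-exchange identity the ``geometric remainder'' of $R(\theta)$ is not killed by the projection, and the cancellation against $-\dd H$ cannot be established; so this step, which you correctly flag as the main obstacle, is not closed by the lemma you cite.
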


\begin{proof}
Computing the square of equation \eqref{eq:calD} we have
\begin{equation}
  {\cal D}^2  
   =  
 \left(\,
 \begin{matrix}
 \dth^2 + {\cal \widetilde R}^2 - {\cal F}^2 
 & ~~~\dth{\cal \widetilde R} + {\cal \widetilde R}\dd_{\tilde\theta}~  &~~~-(\dth{\cal F} + {\cal F}\dd_A)~
\\ 
  {\cal \widetilde R}\dth + \dd_{\tilde\theta}{\cal \widetilde R}
 &~~{\cal \widetilde R}^2 + \dd_{\tilde\theta}^2
 &-  {\cal \widetilde R}{\cal F}
 \\
~{\cal F}\dth + \dd_A{\cal F}
&  {\cal F}{\cal \widetilde R}
& ~~- {\cal F}^2 + \dd_A^2~ 
 \end{matrix}
\, \right)~,\label{eq:nilD}
 \end{equation}
 We want to prove that  $\check{\cal D}^2 = 0$.
 
Consider first the condition corresponding to the $(31)$ entry of \eqref{eq:nilD}
\[
\check{\cal F}(\cdth y) + \check\dd_A\check{\cal F}(y) = 0~, \quad\forall~ y\in\Lambda^p(Y, T^*Y)~.
\]
This has already been proven (see equation \eqref{eq:idcohom} and its proof in theorem \ref{theorem:instanton}).
The condition for the $(21)$ entry
\[
{\check{\cal \widetilde R}}(\cdth  y) + \check\dd_{\tilde\theta}\,{\check{\cal \widetilde R}}(y) = 0
~, \quad\forall~ y\in\Lambda^p(Y, T^*Y)~,
\]
is similarly satisfied.

We already know that $\check\dd_A^2 = 0$,  and $\check\dd_{\tilde\theta}^2 = 0$ so  the conditions
for the entries $(22)$ and $(33)$ are respectively
\[
\check{\cal F}^2(\alpha) = 0~,\qquad {\check{\cal \widetilde R}}^2(\kappa) = 0~,
\] 
for any $\alpha\in\Lambda^1(Y,{\rm End}(V)$ and any $\kappa\in\Lambda^1(Y,{\rm End}(TY)$.
These equations are in fact is true.  For the first one
 \begin{equation}
 {\cal F}^2(\alpha) = 
- \frac{\alpha'}{4}\, g^{ac}\, \big(\tr(\alpha\wedge F_{ab}\,\dd x^b)\big)\wedge F_{cd}\, \dd x^d~.
\label{eq:calFsquared}
 \end{equation}
 By equation \eqref{eq:i14to14} we see inmediately that 
 \[\check{\cal F}^2(\alpha) = 0~.\]
 The proof that  ${\check{\cal \widetilde R}}^2(\kappa) = 0$ follows similarly.
 It also follows from equation \eqref{eq:i14to14}, that  the proof of the conditions corresponding to the entries $(23)$ and $(32)$ 
 \[
 {\check{\cal \widetilde R}}(\check{\cal F}(\alpha)) = 0~,\qquad
 \check{\cal F}({\check{\cal \widetilde R}}(\kappa)) = 0~.
 \]
 is completely analogous.
 
 Consider now the condition corresponding to the $(13)$ entry of \eqref{eq:nilD}.  For any $\alpha\in\Lambda^1(Y,{\rm End}(V)$, we have
  \begin{align*}
 \dth{\cal F}(\alpha)_a + {\cal F}(\dd_A\alpha)_a &=
 \dd {\cal F}(\alpha)_a - \theta_a{}^b\wedge {\cal F}(\alpha)_b 
 + (-1)^{p+1}\, \frac{\alpha'}{4}\, \tr\big((\dd_A\alpha)\wedge F_{ab}\, \dd x^b\big)
 \\
 &= (-1)^p\,\frac{\alpha'}{4}\,  \tr\Big(
 \dd_A(\alpha\wedge F_{ab}\, \dd x^b) 
-  \theta_a{}^b\wedge \alpha\wedge F_{bc}\, \dd x^c
 \\
 &\qquad \qquad\qquad
 - (\dd_A\alpha)\wedge F_{ab}\, \dd x^b
 \Big)
 \\
 &=
 \frac{\alpha'}{4}\, \tr\Big(
 \alpha\wedge (\dd_AF_{ab}\, \dd x^b
 - \theta_a{}^b \wedge F_{bc}\, \dd x^c)
 \Big)
 \\[5pt]
 &= \frac{\alpha'}{4}\, \tr\Big( 
 \alpha\wedge (
 - (\dd_A F)_a + \partial_{A\, a}F - \theta_a{}^b \wedge F_{bc}\, \dd x^c)
 \Big)
  \\[5pt]
 &= \frac{\alpha'}{4}\, \tr\Big(
 \alpha\wedge  \nabla_a^{A} F
 \Big)
~,
 \end{align*} 
 where the last two equalities follow equations \eqref{eq:covid}, \eqref{eq:cool}
 and the Bianchi identity $\dd_A F = 0$ 
 (see also lemma 3 in  \cite{delaOssa:2016ivz}).
 Contracting with $\varphi$ and using the fact that $F\lrcorner\varphi = 0$, we obtain
 \begin{align*}
 (\dth{\cal F}(\alpha)_a + {\cal F}(\dd_A\alpha)_a)\lrcorner\varphi =
   \frac{\alpha'}{4}\, \tr\Big(
 \alpha\lrcorner\big(
 (\nabla_a^{A} F )
 \lrcorner\varphi\big)
 \Big)
 = \frac{\alpha'}{4}\, \tr\Big(
 \alpha\lrcorner\big(
 \nabla_a^{A} (F \lrcorner\varphi) \big)
 \Big) = 0
 ~,
 \end{align*}
 as required.  Clearly the proof for the $(12)$ entry is similar, so
 \[
 \cdth{\check{\cal \widetilde R}}(\kappa) + {\check{\cal \widetilde R}}(\dd_{\tilde\theta}\kappa) = 0~.
 \]
 
 Finally, for the entry $(11)$ we need to prove that, for any $y\in\Lambda^p(Y, T^*Y)$,
 \begin{equation}
 \cdth^2\, y - \check{\cal F}^2(y) + {\check{\cal \widetilde R}}{}^2(y)= 0~,\label{eq:nil11}
 \end{equation}
 We have
 \begin{align*}
 \dth^2\, y_a &= - R(\theta)_a{}^b\wedge y_b~,
 \\
 {\cal F}^2(y)_a &= - \frac{\alpha'}{4}\, y^c\wedge \tr(F_{ab}\, \dd x^b\wedge F_{cd}\, \dd x^d)~,
 \\[5pt]
 {\cal \widetilde R}{}^2(y)_a 
 &= - \frac{\alpha'}{4}\, y^c\wedge \tr(\tilde R_{ab}\, \dd x^b\wedge \tilde R_{cd}\, \dd x^d)~,
 \end{align*}
 where $R(\theta)$ is the curvature of the connection $\theta$
 \[ R(\theta)_a{}^b = \dd \theta_a{}^b + \theta_c{}^b\wedge\theta_a{}^c~.
 \]
 Then
 \begin{align*}
 \dth^2\, y_a - ({\cal F}^2 -  {\cal \widetilde R}{}^2)(y)_a 
 &=
  y^c\wedge \Big(- R(\theta)_{ac}
 \\[2pt]
 &~
 + \frac{\alpha'}{4}\, \big(
 \tr (F_{ab}\, \dd x^b\wedge F_{cd}\, \dd x^d)
 - \tr (\tilde R_{ab}\, \dd x^b\wedge \tilde R_{cd}\, \dd x^d)
 \big)
 \Big)
 \end{align*}
 By the Bianchi identity of the anomaly cancelation condition \eqref{eq:BIanom}, we have that
 \[
 (\dd H)_{abcd}\, \dd x^{bd} = \alpha'\,
 \big(
 \tr (F_{ab}\, \dd x^b\wedge F_{cd}\, \dd x^d - F_{ac}\, F)
 - \tr (\tilde R_{ab}\, \dd x^b\wedge \tilde R_{cd}\, \dd x^d - \tilde R_{ac}\, \tilde R)
 \big)~,
\]
which implies
\begin{align*}
 \dth^2\, y_a - ({\cal F}^2 -  {\cal \widetilde R}{}^2)(y)_a 
 &=
  y^c\wedge \Big(- R(\theta)_{ac} + \frac{1}{4}\, (\dd H)_{abcd}\, \dd x^{bd}
 \\[5pt]
 &\qquad\qquad\qquad + \frac{\alpha'}{4}\, (\tr (F_{ac}\, F)  -  \tr(\tilde R_{ac}\, \tilde R))
   \Big)
 \end{align*}
 To prove equation \eqref{eq:nil11} we contract this result with $\varphi$ to find
 \begin{equation}
 (\dth^2\, y_a + ({\cal F}^2 -  {\cal \widetilde R}{}^2)(y)_a)\lrcorner\varphi
= - y^b\lrcorner 
\left(\Big( R(\theta)_{ab} + \frac{1}{4}\, (\dd H)_{abcd}\, \dd x^{cd}\Big)\lrcorner\varphi \right)
= 0~,
 \end{equation}
 by propositions in the appendix  \ref{sec:appcurvs}.

\end{proof}

This result is certainly very interesting  and leads to an equally interesting corollary. 
As an exterior covariant derivative defined on $\cal Q$, one can write $\cal D$ in terms of a one form connection $\cal A$ on $\cal Q$ so that
 \[ {\cal D} = \dd_{\cal A} = \dd + {\cal A}~.\]
Then theorem \ref{theorem:Dinstanton} is equivalent to the statement that for a heterotic $G_2$ system  
\begin{equation*}
F({\cal A})\wedge\psi=0\:,
\end{equation*}
where $F({\cal A})=\dd{\cal A}+{\cal A}\wedge{\cal A}\in \Lambda^2(Y, {\rm End}({\cal Q}))$ is the curvature of ${\cal A}$. In other words, the connection one form ${\cal A}$ defines an instanton connection on $\cal Q$.

\subsection{The infinitesimal deformations of heterotic $G_2$ systems}

Consider the action of $\cal D$ on $p$-forms with values in $Q$
\begin{align*}
{\cal D}
\left( 
\begin{matrix}
y\\ \kappa \\ \alpha
\end{matrix}
\right)
= 
\left( 
\begin{matrix}
\dth \, y + {\cal \widetilde R}(\kappa) - {\cal F}(\alpha)\\
\dd_{\tilde\theta}\kappa + {\cal \widetilde R}(y) \\
\dd_A\alpha + {\cal F}(y)
\end{matrix}
\right)
\end{align*}
The idea is to construct a differential operator $\cal D$ is such that $\check{\cal D}$-closed one forms with values in $Q$ give the equations for infinitesimal moduli of heterotic $G_2$ systems. Let $\cal D$ act on an element
\[ {\cal Z} = \left( 
\begin{matrix}
y_t\\ \kappa_t \\ \alpha_t
\end{matrix}
\right)\quad \in \Lambda^1(Y,{\cal Q)}~.
\] 
Then
\begin{equation*}
\check{\cal D}{\cal Z}
= 0
\end{equation*}
if and only if
\begin{align}
\cdth \, y_t + {\check{\cal \widetilde R}}(\kappa_t) - \check{\cal F}(\alpha_t) &= 0~,
\label{eq:modfull}
\\
\check\dd_{\tilde\theta}\kappa_t + {\check{\cal \widetilde R}}(y_t) &= 0~,
\label{eq:modTY}
\\
\check\dd_A\alpha_t + \check{\cal F}(y_t) &= 0
~.\label{eq:modbundle}
\end{align}
In these equations $y_t$ is a general one form with values in $T^*Y$. 
To relate these equations with those equations for moduli we have obtained in sections \ref{sec:42} and \ref{sec:instantons}, we
set 
\begin{equation}
y_{t\, a} = M_{t\, a} + z_{t\, a}~,
\label{eq:splity}
\end{equation}
where the one form $z$ with values in $T^*Y$ corresponds to a two form
\[ z_t  = \frac{1}{2}\, z_{t\, ab}\, \dd x^{ab} \in \Lambda_{14}^2(Y) ~,\]
and where the antisymmetric part of the $7\times 7$ matrix associated to $M_t$ 
forms  a two form $m_t\in \Lambda_7^2(Y)$.

Consider first equation \eqref{eq:modbundle}. Using equation \eqref{eq:splity} we
have
\[
0 = \check\dd_A\alpha_t + \check{\cal F}(y_t)
= \check\dd_A\alpha_t + \check{\cal F}(M_t) + \check{\cal F}(z_t)
~.\]
However, the last term vanishes by equation \eqref{eq:i14to14}, giving
\[
\check\dd_A\alpha_t + \check{\cal F}(M_t) = 0~.
\]
By identifying $M_t$ precisely with one forms in $T^*Y$ corresponding to deformations of the
$G_2$ structure $\partial_t\varphi$ as in equation \eqref{eq:varphi2}, 
we obtain equation \eqref{eq:AtiyahBis}.  This equation
gives the simultaneous deformations of $(Y,V)$ that preserve the integrable $G_2$ structure on $Y$ and the instanton constraint on $V$.
 Note that we have no freedom in this identification.   There is of course an analogous discussion for equation \eqref{eq:modTY}.
 
 Consider now equation \eqref{eq:modfull}.
We have
\begin{equation}
\dth \, y_{t\, a} + {\cal \widetilde R}(\kappa)_{t\, a} - {\cal F}(\alpha)_{t\, a}
= \dth \, y_{t\, a}
-  \frac{\alpha'}{4}\, \big(
\tr (\alpha_t\wedge F_{ab}\, \dd x^b) - \tr (\kappa_t\wedge \tilde R_{ab}\, \dd x^b) \big)~.
\label{eq:modmaybe1}
\end{equation}
This equation should be identified with the results in section \ref{sec:finitemod}.  To do so we need the variations of anomaly cancelation condition.

 \begin{proposition} 
  Let $\alpha_t\in\Lambda^1({\rm End}(V))$ and $\kappa_t\in\Lambda^1({\rm End}(TY))$ correspond, respectively, to covariant variations of the connections $A$ and $\tilde\theta$ (see equation \eqref{eq:covvarA}).
  The variation of equation \eqref{eq:Hdef} can be written as
 \begin{align}
 \partial_t H&= \dd {\cal B}_t + \frac{\alpha'}{2}\, ({\rm tr}(F\wedge\alpha_t) - {\rm tr}(\tilde R\wedge\kappa_t))
 \label{eq:defanom}
 ~,
 \end{align}
 where ${\cal B}_t$ is a well defined 2 form, that is, it is invariant under gauge transformations of the bundles
 $V$ and $TY$ \footnote{This proposition is a generalisation to the $G_2$ case of the considerations in \cite{delaOssa:2014cia} and \cite{Candelas:2016usb} where an invariant variation of the $B$ field was studied in the context of heterotic compatifications on six dimensional manifolds.  The proof is of course identical.}.  In this definition $\Lambda_t$ is a connection on the moduli space of instanton bundles on $V$ and  $\tilde\Lambda_t$ is a connection on the moduli space of instanton bundles on $TY$ (see discussion in section \ref{sec:instantons}). 
 \end{proposition}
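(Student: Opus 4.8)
The plan is to vary the anomaly condition \eqref{eq:Hdef} directly, the only nontrivial input being the first-order variation of a Chern--Simons three-form. For a connection $A$ with curvature $F$ one has
\be
\partial_t\, {\cal CS}[A] = 2\, \tr(F\wedge\partial_t A) - \dd\, \tr(A\wedge\partial_t A)~,
\label{eq:dCS}
\ee
which I would establish from ${\cal CS}[A] = \tr(A\wedge\dd A + \tfrac{2}{3}\, A\wedge A\wedge A)$ using the (graded) cyclicity of the trace on matrix-valued forms together with a single integration by parts. An identical formula holds for ${\cal CS}[\tilde\theta]$ under $F\to\tilde R$, $A\to\tilde\theta$, $\partial_t A\to\partial_t\tilde\theta$.

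Next I would insert the covariant variations. By \eqref{eq:covvarA} we have $\partial_t A = \alpha_t + \dd_A\Lambda_t$ and, analogously, $\partial_t\tilde\theta = \kappa_t + \dd_{\tilde\theta}\tilde\Lambda_t$. Substituting into \eqref{eq:dCS}, the piece $2\,\tr(F\wedge\dd_A\Lambda_t)$ is recast as a total derivative $2\,\dd\,\tr(F\wedge\Lambda_t)$ using the Bianchi identity $\dd_A F = 0$ together with $\tr\,\dd_A(\,\cdot\,) = \dd\,\tr(\,\cdot\,)$. Collecting the explicit curvature terms and the exact pieces separately, and doing the same on the $TY$ side, then yields
\be
\partial_t H = \dd{\cal B}_t + \frac{\alpha'}{2}\, (\tr(F\wedge\alpha_t) - \tr(\tilde R\wedge\kappa_t))~,
\ee
with the candidate two-form
\be
{\cal B}_t = \partial_t B + \frac{\alpha'}{2}\,(\tr(F\wedge\Lambda_t) - \tr(\tilde R\wedge\tilde\Lambda_t)) - \frac{\alpha'}{4}\,(\tr(A\wedge\partial_t A) - \tr(\tilde\theta\wedge\partial_t\tilde\theta))~.
\ee

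The substantive point, and the main obstacle, is to verify that ${\cal B}_t$ is a genuine (gauge invariant) two-form. Individually, $\partial_t B$, $\tr(A\wedge\partial_t A)$ and the $\Lambda_t$-dependent terms are all non-tensorial: $B$ is not a well-defined tensor but shifts under bundle gauge transformations, $\partial_t A$ transforms inhomogeneously, and $\Lambda_t$ transforms as ${}^\Phi\Lambda_t = \Phi(\Lambda_t - \Phi^{-1}\partial_t\Phi)\Phi^{-1}$. I would check under $A\mapsto{}^\Phi A$ that the anomalous parts of these three contributions cancel, exactly as the $B$-field shift was engineered to cancel the Chern--Simons gauge anomaly in \eqref{eq:Hdef}; this is the $G_2$ analogue of the computation carried out for six-dimensional heterotic compactifications in \cite{delaOssa:2014cia,Candelas:2016usb}. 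Consistency is in any case guaranteed a priori, since $\partial_t H$ and $\tr(F\wedge\alpha_t)-\tr(\tilde R\wedge\kappa_t)$ are manifestly gauge invariant, so $\dd{\cal B}_t$ is gauge invariant; the genuine content of the statement is that ${\cal B}_t$ itself, and not merely its exterior derivative, is well defined.
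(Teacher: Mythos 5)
Your proposal is correct and follows essentially the same route as the paper: vary the Chern--Simons forms, substitute the covariant variations $\partial_t A = \alpha_t + \dd_A\Lambda_t$ (and likewise for $\tilde\theta$), absorb the $\Lambda_t$-terms into an exact piece via the Bianchi identity, and arrive at the same candidate ${\cal B}_t$. Your closing remark is also well taken: the paper itself only argues that $\dd{\cal B}_t$ is gauge invariant (deferring the invariance of ${\cal B}_t$ proper to the six-dimensional computation of the cited references), so your insistence on checking the cancellation of the anomalous shifts of $\partial_t B$, $\tr(A\wedge\partial_t A)$ and the $\Lambda_t$-terms directly is, if anything, slightly more careful than the printed proof.
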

\begin{proof}
Consider the variations of \eqref{eq:Hdef}.  We compute first the variations of the Chern-Simons term for the gauge connection.
\begin{equation*}
 \partial_t {\cal CS}[A] = {\rm tr}\left(- \dd(A\wedge \partial_t A) + 2\, F\, \partial_t A\right)~,
 \end{equation*}
 and therefore
 \begin{equation}
 \begin{split}
 \partial_t H = &\dd\left(
 \partial_t B - \frac{\alpha'}{4}\, ({\rm tr}(A\wedge\partial_t A) - {\rm tr}(\tilde\theta\wedge\partial_t \tilde\theta))\right)
 \\[5pt] 
 &\qquad\qquad\qquad\qquad\qquad\qquad+ \frac{\alpha'}{2}\, ({\rm tr}(F\wedge\partial_t A) - {\rm tr}(\tilde R\wedge\partial_t \tilde\theta))
 \label{eq:prevarH}
 \end{split}
 \end{equation}
 To obtain the desired results we replace $\partial_t A$ and $\partial_t\tilde\theta$ with $\alpha_t$ and $\kappa_t$ at the expense of introducing connections $\Lambda_t$ and $\tilde\Lambda_t$ on the moduli space of instanton bundles on $V$ and $TY$ respectively as explained in section \ref{sec:instantons}.  We have for the second term in equation \eqref{eq:prevarH}
\[
\frac{\alpha'}{2}\, ({\rm tr}(F\wedge\partial_t A) 
= \frac{\alpha'}{2}\, ({\rm tr}(F\wedge(\alpha_t + \dd_A\Lambda_t))
= \frac{\alpha'}{2}\, ({\rm tr}(F\wedge\alpha_t) +  \frac{\alpha'}{2}\, \dd\tr(F\, \Lambda_t)~,
\]
where we have used equation \eqref{eq:covvarA} and the Bianchi identity $\dd_A F = 0$. A similar relation is obtained for the third term of equation \eqref{eq:prevarH}. 
Then equation \eqref{eq:prevarH} gives equation \eqref{eq:defanom}
\[
\partial_t H = \dd {\cal B}_t  + \frac{\alpha'}{2}\, ({\rm tr}(F\wedge\alpha_t) - {\rm tr}(\tilde R\wedge\kappa_t))~,
\]
where we have defined ${\cal B}_t $ such that
\[
\dd {\cal B}_t = \dd\left(
 \partial_t B - \frac{\alpha'}{4}\, (
 {\rm tr}(A\wedge\partial_t A  - 2 F\, \Lambda_t) 
 - {\rm tr}(\tilde\theta\wedge\partial_t \tilde\theta - 2 \tilde R\, \tilde\Lambda_t)
 )\right)~. 
 \]
 Note that, as both $\partial_t H$ and the second term in equation \eqref{eq:defanom} are gauge invariant,  then so is $\dd{\cal B}_t $.
 We can now manipulate this result to obtain
 \[
\dd {\cal B}_t = \dd\left(\partial_t B 
 - \frac{\alpha'}{4}\Big(
 {\rm tr}\big(A\wedge \alpha_t - \Lambda_t\,\dd A - \dd(A\Lambda_t)\big)
  - {\rm tr}\big(\tilde\theta\wedge \kappa_t - \tilde\Lambda_t\,\dd \tilde\theta - \dd(\tilde\theta\tilde\Lambda_t)\big)
 \Big)
 \right)~.
 \]
 
\end{proof}
In our considerations below, the explicit form of ${\cal B}_t$ is not needed.  However it is important to keep in mind that ${\cal B}_t$ is defined
{\it up to a gauge invariant closed form} leading to an extra symmetry of heterotic $G_2$ systems.  We discuss the meaning of this symmetry below.

Returning to equation \eqref{eq:modmaybe1}, using equation \eqref{eq:defanom} 
we have that
\begin{align*}
\frac{1}{4}\,\big(\dd {\cal B}_t - \partial_t H\big)_{abc}\, \dd x^{bc} 
 &= -\frac{\alpha'}{8}\,3 \,\big( \tr ( \alpha_{t\, [a}\, F_{bc]}) 
- \tr(\kappa_{t\, [a}\, \tilde R_{bc]})\big)\, \dd x^{bc}
\\[5pt]
&= \frac{\alpha'}{4}\,\big( \tr ( - \alpha_{t\, a}\, F + \alpha_t\wedge F_{ab}\, \dd x^b) 
- \tr(- \kappa_{t\, a}\, \tilde R +  \kappa_t\wedge \tilde R_{ab}\, \dd x^b)\big)~,
\end{align*}
which implies
\[\frac{\alpha'}{4}\,\big( \tr (\alpha_t\wedge F_{ab}\, \dd x^b) 
- \tr(\kappa_t\wedge \tilde R_{ab}\, \dd x^b)\big)
= \frac{1}{4}\,\big(\dd {\cal B}_t - \partial_t H\big)_{abc}\, \dd x^{bc}
+ \frac{\alpha'}{4}\,\big( \tr ( \alpha_{t\, a}\, F) 
- \tr(\kappa_{t\, a}\, \tilde R)\big)~. 
\]
Using this result into the right hand side of equation \eqref{eq:modmaybe1} we find
\begin{align*}
\dth \, y_{t\, a} + {\cal \widetilde R}(\kappa)_{t\, a} - {\cal F}(\alpha)_{t\, a}
&= \dth \, y_{t\, a}
- \frac{1}{4}\,\big(\dd {\cal B}_t - \partial_t H\big)_{abc}\, \dd x^{bc}
\\[5pt]
&\qquad \qquad \qquad \qquad \qquad - \frac{\alpha'}{4}\,\big( \tr ( \alpha_{t\, a}\, F) 
- \tr(\kappa_{t\, a}\, \tilde R)\big)
~.
\end{align*}
Contracting this with $\varphi$ we find
\begin{align*}
0 &= (\dth \, y_{t\, a} + {\cal \widetilde R}(\kappa)_{t\, a} - {\cal F}(\alpha)_{t\, a})\lrcorner\varphi
\nn\\
&= (\dth \, y_{t\, a})\lrcorner\varphi
- \frac{1}{4}\,\big(\dd {\cal B}_t - \partial_t H\big)_{abc}\, \varphi^{bc}{}_d\dd x^d ~,
\end{align*}
which can be written equivalently as
\be
0= \left(\dth \, y_{t\, a} + \frac{1}{2} \, \big(- (\dd{\cal B}_t)_a + \partial_t\hat H_a\big)\right)\lrcorner\varphi~.
\label{eq:genmod}
\ee
This result needs to be consistent with the analysis of the moduli of integrable $G_2$ structures.  We recall that in section \ref{sec:finitemod} we obtained instead
\[
0 = \left(\dth M_{t \, a} + (\dd m_t)_a + \frac{1}{2}\, \partial_t\hat H_a\right)\lrcorner\varphi~,
\]
where there  $\pi_{14}(m_t)$ drops out of this equation.  To be able to compare these equations, we use
\eqref{eq:splity} in equation \eqref{eq:genmod} and  we now have
\[
0 = \left(\dth \, (M_t + z_t)_a + \frac{1}{2} \, \big(- (\dd{\cal B}_t)_a + \partial_t\hat H_a\big)\right)\lrcorner\varphi
~,
\]
which by  lemma \ref{lemma:lemz} gives
\[
0 = \left(\dth \, M_{t\, a} + \frac{1}{2} \, \big(- (\dd(2 \, z_t +{\cal B}_t))_a + \partial_t\hat H_a\big)\right)\lrcorner\varphi
~.
\] 
Therefore we find
\begin{equation*}
0 = (\dd ( 2(m_t + z _t) - {\cal B}_t))_a\lrcorner \varphi~,
\end{equation*}
which implies
\begin{equation}
\dd ( 2(m_t + z _t) - {\cal B}_t)) = 0~.
\label{eq:zbeta}
\end{equation}
This equation {\it identifies} the degrees of freedom corresponding to the antisymmetric part of $y_t$, that is $m_t + z_t$, with 
the invariant variations of the $B$ field as follows
\be 2\, (z _t + m_t) + \mu_t = {\cal B}_t 
~, \label{eq:zbeta2}
\ee
where $\mu_t$ is a gauge invariant $\dd$-closed two form. This ambiguity in the definition of ${\cal B}_t $ has already been noted above. With this identification, we conclude that $\cal D$ is such that $\check{\cal D}$-closed one forms with values in $Q$ correspond to infinitesimal moduli of the heterotic vacua.

 \subsection{Symmetries and trivial deformations}

Let us now discuss trivial deformations. On the one hand, these should have an interpretation in terms of symmetries of the theory, {\it i.e.}diffeomorphisms and gauge transformations of $A$, $\theta$ and $B$. On the other hand, since $\check{\cal D}^2 = 0$, trivial deformations are given by 
\[
{\cal Z}_{\rm triv} = \check{\cal D} {\cal V} ~,
\]
where ${\cal V} =(v,\pi,\epsilon)^T$ is a section of ${\cal Q} = T^*Y\,  \oplus\, {\rm End}(TY)\oplus\, {\rm End}(V)$. We show that ${\cal Z}_{\rm triv}$ can indeed be interpreted in terms of symmetries of the theory:
 \[
{\cal Z}_{\rm triv} = \left( \begin{matrix}
y_{\rm triv}\\ \kappa_{\rm triv} \\ \alpha_{\rm triv}
\end{matrix}
\right)
= \left( \begin{matrix}
\dth \, v + {\cal \widetilde R}(\pi) - {\cal F}(\epsilon)\\
\dd_{\tilde\theta}\pi + {\cal \widetilde R}(v) \\
\dd_A\epsilon + {\cal F}(v)
\end{matrix}
\right) \, .
\]
Let us start with the last entry of this vector, where the first term, $ \dd_A\epsilon$, corresponds to gauge transformations of the gauge field. To interpret the second term, note that under diffeomorphisms, $F$ transforms as
\[
{\cal L}_v F = v\lrcorner \dd_A F + \dd_A(v \lrcorner F) = \dd_A(v \lrcorner F) = \dd_A({\cal F}(v)) \, ,
\]
where we have used the definition of the map ${\cal F}$ given at the beginning of this section.  Thus,  the second term corresponds to the change of the gauge field $A$ under diffeomorphism. Analogously, we may interpret $\dd_{\tilde\theta}\pi$ as a gauge transformation, and ${\cal \widetilde R}(v)$ as a diffeomorphism, of the connection $\tilde{\theta}$ on the tangent bundle.

We move on to show that 
\[
y_{{\rm triv} \, a}=\dth \, v_a + {\cal \widetilde R}(\pi)_a - {\cal F}(\epsilon)_a \; 
\]
corresponds to trivial deformations of the metric and $B$-field.
Thinking of $y_{\rm triv \, ab}$ as a matrix, the symmetric part corresponds to
\[
y_{{\rm triv} \, (ab)} = \dd_{\theta (b} \, v_{a)} = \nabla^{LC}_{(a} v_{b)} \, .
\]
Comparing with equation \eqref{eq:trivsym} and \eqref{eq:metricvar} (for more details see  proposition 3 and theorem 8 of \cite{delaOssa:2016ivz}),  one concludes that these are trivial deformations of the metric. For the antisymmetric part, it is useful to define a two-form 
\be \label{eq:ytrivas}
\begin{split}
y_{\rm triv}^{\rm antisym} &\equiv\frac{1}{2}y_{{\rm triv} \,[ab]} \dd x^{ab}
\\ &=
 \frac{1}{2}( \dth \, v_{a})_b \dd x^{ab} - \frac{\alpha'}{4} \left({\rm tr} [\epsilon F] -   {\rm tr} [\pi {\tilde R}] \right)\, 
  \\ &=
   \frac{1}{2}( \partial_b v_a - \Gamma_{ab}{}^c v_c ) \dd x^{ab}
 - \frac{\alpha'}{4} \left({\rm tr} [\epsilon F] -   {\rm tr} [\pi {\tilde R}] \right) \, 
 \\ &=
  - \frac{1}{2}( \dd v + v\lrcorner H ) 
  - \frac{\alpha'}{4} \left({\rm tr} [\epsilon F] -   {\rm tr} [\pi {\tilde R}] \right) \, .
 \end{split}
\ee
This equation should be equivalent to
\be \label{eq:ytrivas2}
y_{\rm triv}^{\rm antisym} = \frac{1}{2} ({\cal B}_{\rm triv}  - \mu_{\rm triv}) \; ,
\ee
as is required by \eqref{eq:splity} in combination with \eqref{eq:zbeta2}. To prove this we must specify what ${\cal B}_{\rm triv}$ and $\mu_{\rm triv}$ are. The latter is simple: since $\mu_{\rm t}$ is a closed two-form, $\mu_{\rm triv}$ must be exact. Physically, $\mu_{\rm triv}$ corresponds to  a gauge transformation of $B$ (this gauge transformation is not to be confused with gauge transformations of the bundles).

We may determine ${\cal B}_{\rm triv}$ by requiring that it corresponds to changes in the physical fields $B, A$ and $\tilde{\theta}$ that at most change $H$ by a diffeomorphism. Concordantly, we compare $\partial_{\rm triv} H$ from \eqref{eq:defanom} 
\[ \begin{split}
\partial_{\rm triv} H
&= \dd {\cal B}_{\rm triv} + \frac{\alpha'}{2}\, ({\rm tr}[F\wedge \alpha_{\rm triv}] - {\rm tr}[\tilde R\wedge\kappa_{\rm triv}]) \\
&= \dd {\cal B}_{\rm triv} + \frac{\alpha'}{2}\, ({\rm tr}[F\wedge (\dd_A\epsilon + {\cal F}(v))] - {\rm tr}[\tilde R\wedge(\dd_{\theta}\pi + \tilde{\cal R}(v))])\\
&= \dd \left( {\cal B}_{\rm triv} + \frac{\alpha'}{2}\, \left({\rm tr}[F\epsilon] - {\rm tr}[\tilde R\pi]\right)\right)- \frac{\alpha'}{2} \left( 
{\rm tr}[{\cal F}(v) \w F] - {\rm tr}[\tilde{{\cal R}}(v) \w \tilde{R}]
\right)
\end{split}
\]
with the Lie derivative of $H$:
\[ \begin{split}
{\cal L}_v H &= v\lrcorner \dd H + \dd(v \lrcorner H)
\\&= \frac{\alpha'}{4} v\lrcorner \left( 
{\rm tr}[F \w F] - {\rm tr}[\tilde{R} \w \tilde{R}]
\right)
+ \dd(v \lrcorner H)
\\&= \frac{\alpha'}{2} \left( 
{\rm tr}[g^{ab}v_a F_{bc} \dd x^c \w F] - {\rm tr}[g^{ab}v_a \tilde{R}_{bc} \dd x^c \w \tilde{R}]
\right)
+ \dd(v \lrcorner H)
\\&= \frac{\alpha'}{2} \left( 
{\rm tr}[{\cal F}(v) \w F] - {\rm tr}[\tilde{{\cal R}}(v) \w \tilde{R}]
\right)
+ \dd(v \lrcorner H) \, .
 \end{split}
\]
We find that trivial transformations of $H$ correspond to a diffeomorphism
\[
\partial_{\rm triv} H = {\cal L}_{-v} H
\]
provided that 
\[
{\cal B}_{\rm triv} = - v \lrcorner H - \frac{\alpha'}{2}\, \left( {\rm tr}(F\epsilon) - {\rm tr}(\tilde R\pi)\right) \; ,
\]
up to a closed two form. 
Inserting this in \eqref{eq:ytrivas2}, we thus reproduce \eqref{eq:ytrivas}. If follows that $y_{\rm triv}^{\rm antisym}$ corresponds to gauge transformations and diffeomorphisms of $H$. This concludes the proof that  ${\cal Z}_{\rm triv}$ can be interpreted in terms of symmetries of the theory.

\subsection{The tangent space to the moduli space and $\alpha'$ corrections}
\label{sec:Acorr}

We have shown so far that the tangent space $\cal TM$ to the moduli space $\cal M$ of heterotic $G_2$ structures $[ (Y, \varphi), (V, A), (TY,\tilde\theta),H]$ is given by
\[
{\cal TM} = H^1_{\check{\cal D}}(Y, {\cal Q})~,\]
where ${\cal D}$ is a covariant exterior derivative given in \eqref{eq:calD} which satisfies $\check{\cal D}^2 = 0$, or equivalently, the bundle $\cal Q$ has an instanton connection $\cal A$ such that 
\[ {\cal D} = \dd_{\cal A} = \dd + {\cal A}~.\] 

To close our analysis of the infinitesimal deformations of heterotic $G_2$ systems, we discuss how $\alpha'$ corrections might modify the results obtained above. In theorem \ref{theorem:Dinstanton} we have assumed that the connections $A$ and $\tilde\theta$ are instanton connections
on $V$ and $TY$ respectively, which we know to be true to first order in $\alpha'$.  
We want to see what happens when we relax these conditions. We note first that our discussion concerning the moduli of heterotic compactifications on integrable $G_2$ manifolds is accurate from a physical perspective to ${\cal O}(\alpha'^2)$, provided the connection $\dd_{\tilde\theta}$ satisfies the instanton condition \cite{delaOssa:2014msa}. The naturalness of the structure however makes it very tempting to conjecture that the analysis holds to higher orders in $\alpha'$ as well, as is also expected in compacifications to four dimensions \cite{delaOssa:2014msa, Candelas:2016usb, McOrist:2016cfl}. A detailed analysis of higher order $\alpha'$ effects is beyond the scope of the present paper.  However, in the following theorem we find a remarkable result, which amounts to the converse of theorem \ref{theorem:Dinstanton}, in particular the Bianchi identity of the anomaly cancelation condition is  {\it deduced} from the requirement that the operator $\cal D$ defined by equation \eqref{eq:calD} satisfies the condition $~\check{\cal D}^2= 0$.

\begin{theorem}
\label{tm:alpha}
Let $Y$ be a manifold with a $G_2$ structure, $V$ a bundle on $Y$ with connection $A$, and $TY$ the tangent bundle of $Y$ with connection $\tilde\theta$. Let $\theta$ be a metric connection compatible with the $G_2$ structure, that is $\nabla\varphi=0$ with connection symbols  $\Gamma$ such that
$\theta_a{}^b = \Gamma_{ac}{}^b\, \dd x^c$.  
Consider the exterior derivative $\cal D$ defined by equation \eqref{eq:calD} and assume that  $~\check{\cal D}^2= 0$. Then 
$\big( [Y,\varphi],[V,A],[TY,\tilde{\theta}],H \big)$ is a heterotic system.  This statement is true to all orders in the perturbative $\alpha'$ expansion.
\end{theorem}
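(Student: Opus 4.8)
The plan is to run the proof of Theorem \ref{theorem:Dinstanton} in reverse. That proof computed the nine entries of $\check{\cal D}^2$ displayed in \eqref{eq:nilD} one at a time, and for each it first derived an \emph{identity} expressing the $\lrcorner\varphi$-contraction of that entry as a manifestly geometric quantity (built from $F\lrcorner\varphi$, $\tilde R\lrcorner\varphi$, $R(\theta)$, $\dd H$ and the curvatures), and only afterwards invoked a heterotic condition to make it vanish. Since those identities used nothing beyond the second Bianchi identities $\dd_A F=0$, $\dd_{\tilde\theta}\tilde R=0$ (automatic for any connection) and the compatibility $\nabla\varphi=0$ (assumed in the hypotheses), they remain valid verbatim. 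Equivalently, by the corollary to Theorem \ref{theorem:Dinstanton}, $\check{\cal D}^2=0$ is the single instanton equation $F({\cal A})\wedge\psi=0$ on $\cal Q$, and I would read off the heterotic data block by block from this equation.

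First I would extract the instanton conditions on $A$ and $\tilde\theta$. The forward computation of the off-diagonal entries $(13)$ and $(31)$ reduced, after contraction with $\varphi$, to $\tfrac{\alpha'}{4}\tr\big(\alpha\lrcorner\nabla^A_a(F\lrcorner\varphi)\big)$ and to $*\big(M^a\wedge\nabla^A_a(F\wedge\psi)\big)$ respectively, using only $\dd_A F=0$ and $\nabla\varphi=0$. Setting these to zero for all $\alpha\in\Lambda^1(Y,\End(V))$ and all $M$, and invoking non-degeneracy of the trace pairing, forces $F\lrcorner\varphi$ to be $\nabla^A$-covariantly constant; the entries $(12),(21)$ do the same for $\tilde R\lrcorner\varphi$. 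I would then promote covariant constancy to vanishing — the genuine instanton equations $F\wedge\psi=0$ and $\tilde R\wedge\psi=0$ — using compactness of $Y$ together with the diagonal entries $(22),(33)$, in which the quadratic maps $\check{\cal F}^2,\check{\cal \widetilde R}{}^2$ are governed by \eqref{eq:i14to14}. With the instanton conditions in hand one has $\check{\cal F}^2=\check{\cal \widetilde R}{}^2=0$, so the diagonal entries collapse to $\check\dd_A^2=\check\dd_{\tilde\theta}^2=0$; by the bundle-valued complex theorem of Section \ref{sec:cancom} this is equivalent to $\tau_2=0$, i.e. the $G_2$ structure is integrable.

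The heart of the argument is the $(11)$ entry, from which the anomaly Bianchi identity is \emph{deduced}. Here I would reproduce the forward computation without assuming \eqref{eq:BIanom}, obtaining
\[
\dth^2 y_a-({\cal F}^2-{\cal \widetilde R}^2)(y)_a= y^c\wedge\Big(-R(\theta)_{ac}+\tfrac{\alpha'}{4}\big(\tr(F_{ab}\,\dd x^b\wedge F_{cd}\,\dd x^d)-\tr(\tilde R_{ab}\,\dd x^b\wedge \tilde R_{cd}\,\dd x^d)\big)\Big).
\]
Contracting with $\varphi$ and eliminating $R(\theta)\lrcorner\varphi$ by the curvature identity $\big(R(\theta)_{ab}+\tfrac14(\dd H)_{abcd}\,\dd x^{cd}\big)\lrcorner\varphi=0$ established in appendix \ref{sec:appcurvs} (valid for the integrable, $G_2$-compatible connection $\theta$), the vanishing of this entry becomes precisely the $\lrcorner\varphi$-projection of $\dd H-\tfrac{\alpha'}{4}\big(\tr(F\wedge F)-\tr(\tilde R\wedge\tilde R)\big)=0$. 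Because the instanton conditions already established kill the complementary ($\pi_7$ and $\pi_1$) components of the four-form $\tr(F\wedge F)-\tr(\tilde R\wedge\tilde R)$, this projection determines the full four-form, yielding \eqref{eq:BIanom}. Since \eqref{eq:BIanom} states exactly that $H-\tfrac{\alpha'}{4}({\cal CS}[A]-{\cal CS}[\tilde\theta])$ is closed, $H$ takes the anomaly form \eqref{eq:Hdef}, and the quadruple is a heterotic $G_2$ system.

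The all-orders claim is then immediate from the structure of the argument: nowhere is the $\alpha'$ series truncated, the identity \eqref{eq:BIanom} is exact, and the curvature identity of appendix \ref{sec:appcurvs} is a purely geometric consequence of $\nabla\varphi=0$. The step I expect to be the main obstacle is the second paragraph — upgrading the covariantly constant quantities $F\lrcorner\varphi$ and $\tilde R\lrcorner\varphi$ to genuine \emph{vanishing}, and correspondingly promoting the $\lrcorner\varphi$-projected anomaly relation to the full four-form identity. Both upgrades lean on the compactness of $Y$ and on the representation-theoretic content of \eqref{eq:i14to14}, and are the only places where genuine input beyond a mechanical reversal of Theorem \ref{theorem:Dinstanton} is required.
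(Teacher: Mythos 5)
Your reversal of the $(31)$, $(13)$ and $(11)$ identities is faithful to the forward computations, and your treatment of the $(11)$ entry (eliminating $R(\theta)\lrcorner\varphi$ via the appendix identity and then upgrading the $\varphi$-contracted relation to the full four-form) matches the paper's. But the step you yourself flag as ``the main obstacle'' is a genuine gap, and it is exactly where the paper's proof does something you have not reproduced. Covariant constancy of $F\lrcorner\varphi$ (which is all the off-diagonal entries give you) does not imply its vanishing: a parallel, nowhere-zero section of $\Lambda^1\otimes\End(V)$ is perfectly consistent with compactness, and compactness is in any case not among the theorem's hypotheses. Nor do you explain how the diagonal entries would combine with covariant constancy to force $\pi_7(F)=0$; the relation $[\pi_7(F),\alpha]=\check{\cal F}^2(\alpha)$ for all $\alpha$ has no obvious reason to admit only the trivial solution on its own.

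The paper's actual mechanism is an order-by-order induction in $\alpha'$ applied directly to the diagonal entries $(22)$ and $(33)$: since $\check{\cal F}^2$ carries an explicit factor of $\alpha'$, the identity $[\pi_7(F),\alpha]=\check{\cal F}^2(\alpha)$ forces $\pi_7(F)={\cal O}(\alpha')$; then \eqref{eq:i14to14} shows $\check{\cal F}^2={\cal O}(\alpha'^2)$, hence $\pi_7(F)={\cal O}(\alpha'^2)$, and iterating kills $\pi_7(F)$ to all orders. The off-diagonal entries play no role in extracting the instanton conditions. This also means your closing claim that the all-orders statement is ``immediate from the structure of the argument'' inverts the logic: the $\alpha'$ expansion is not a bookkeeping afterthought but the engine of the proof (the paper explicitly remarks after the proof that the theorem ``relies heavily on the $\alpha'$ expansion''). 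To repair your proposal you would need to replace the covariant-constancy-plus-compactness step with this perturbative induction, after which the remainder of your argument (integrability from $\check\dd_A^2=0$, and the anomaly Bianchi identity from the $(11)$ entry) goes through as written.
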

\begin{proof}
Consider again equation \eqref{eq:nilD} and assume now that  $\check{\cal D}^2= 0$.  We use the $\alpha'$ expansion to prove this theorem. 

 We begin with the $(33)$ entry of equation
\eqref{eq:nilD}, that is assume first that
\be {\check\dd}^2_A(\alpha) - \check{\cal F}^2(\alpha) = [\pi_7(F), \alpha] - \check{\cal F}^2(\alpha) = 0 ~,
\label{eq:iterative}
\ee
for all $\alpha\in \Lambda^p(Y, {\rm End}(V))$.  Because ${\cal F}^2$ is of order $\alpha'$ (see equation \eqref{eq:calFsquared}), it must be the case that 
$\pi_7(F)$ is at least of order $\alpha'$. Therefore, $F\in\Lambda_{14}^2(Y, {\rm End}(V))$ modulo $\alpha'$ corrections. By equation \eqref{eq:i14to14}, this in turn means for the second term in equation \eqref{eq:iterative}, that ${\check{\cal F}}^2(\alpha)=0$ modulo ${\cal O}(\alpha'^2)$, and hence the first term must also be ${\cal O}(\alpha'^2)$.  In other words, $F$ is in the $\bf 14$ representation modulo $\alpha'^2$ corrections.
Employing \eqref{eq:i14to14}, again we see that the second term  of \eqref{eq:iterative} is at least of ${\cal O}(\alpha'^3)$. Continuing this iterative procedure order by order in $\alpha'$ we find that
 \begin{equation}
\pi_7(F)=0\:.
\end{equation}
Therefore, the two terms of equation \ref{eq:iterative} vanish separately. In particular
$\check\dd_A^2 = 0$ if and only if $Y$ has an integrable $G_2$ structure and $A$ is an instanton connection  on $V$. 
The proof for the entry $(22)$ of  \eqref{eq:nilD} corresponding to the connection $\tilde\theta$ on $TY$ is similar, so $\tilde\theta$ is an instanton connection on $TY$.  With this result and the proof of theorem \ref{theorem:Dinstanton} all the other entries in \eqref{eq:nilD} vanish, except the entry $(11)$.

For the $(11)$ entry of \eqref{eq:nilD}, we now assume that
\[
\cdth^2\, y+ {\check{\widetilde{\cal R}}}{}^2(y) - \check{\cal F}^2(y) = 
0~,
\]
for all $y\in \Lambda^p(Y, T^*Y)$.  This is equivalent to 
\[
\Big(- R(\theta)_{ab}
 + \frac{\alpha'}{4}\, \big(
 \tr (F_a\, \wedge F_b)
 - \tr (\tilde R_a\,\wedge \tilde R_b)
 \big)
 \Big)\lrcorner\varphi = 0 ~,
\]
As the $G_2$ structure is integrable, we take $\nabla$ to be a connection with totally antisymmetric torsion $H$ (see equations \eqref{eq:dphi} and \eqref{eq:dpsi}).  This together with the identity
\[
R(\theta)_{ab}\lrcorner\varphi = - \frac{1}{4}\, (\dd H)_{cdab}\, \varphi^{cd}{}_e\, \dd x^e
\]
 in appendix \ref{sec:appcurvs}, gives
\begin{align*}
0 &= (\dd H)_{cdab}\, \varphi^{cd}{}_e\, \dd x^e
 + \alpha'\, \big(
 \tr (F_a\, \wedge F_b)
 - \tr (\tilde R_a\,\wedge \tilde R_b)
 \big)
 \lrcorner\varphi
 \\
 &=  \left((\dd H)_{cdab}
 + \alpha'\, \big(
 \tr (F_{ac}\, F_{bd})
 - \tr (\tilde R_{ac}\,\tilde R_{bc})
 \big)\right)\, \varphi^{cd}{}_e\, \dd x^e
 \\[5pt]
 &=  \left((\dd H)_{cdab}
 - 3\, \frac{\alpha'}{2}\, \big(
 \tr (F_{[cd}\, F_{ab]})
 - \tr (\tilde R_{[cd}\,\tilde R_{ab]})
 \big)\right)\, \varphi^{cd}{}_e\, \dd x^e
 \end{align*}
 where in the last equality we have used the fact that both $A$ and $\tilde\theta$ are instantons.
 Then
 \be
 0 = \left(\dd H
 - \frac{\alpha'}{4}\, \big(
 \tr (F\wedge F)
 - \tr (\tilde R\wedge\tilde R)
 \big)\right)_{cdab}\, \varphi^{cd}{}_e\, \dd x^e~.
 \label{eq:preanomaly}
 \ee
Consider the four form
\[
\Sigma = \dd H - \frac{\alpha'}{4}\, \big(\tr(F\wedge F) - \tr(R\wedge F)\big)
\]
and the associated three form $\Sigma_a$ with values in $T^*Y$. Then equation \eqref{eq:preanomaly} is equivalent to $\Sigma_a = 0$ 
to and hence $\Sigma = 0$.  Note that, in this way we have also proved that the Bianchi identity of the anomaly cancelation condition does
not receive higher order $\alpha'$ corrections.
\end{proof}

We remark that Theorem \ref{tm:alpha} relies heavily on the $\alpha'$ expansion.  Mathematically, there is no reason to assume that such an expansion exists. It is tempting to speculate that the form of the covariant derivative $\cal D$ on $\cal Q$ is the correct operator including all quantum corrections, also the non-pertubative ones. This would imply that the quantum corrected geometry is encoded in an instanton connection on $\cal Q$ even if the connections $A$ and $\tilde\theta$ need not be instantons anymore.

\section{Conclusions and outlook}
\label{sec:concl}

This paper has been devoted to the analysis of the infinitesimal ``massless'' deformations of heterotic string compactifications on a seven dimensional compact manifold $Y$ of integrable $G_2$ structure. We have seen that the heterotic supersymmetry conditions together with the heterotic Bianchi identity can be put in terms of a differential $\check{\cal D}$ on a bundle ${\cal Q}=T^*Y\oplus\End(TY)\oplus\End(V)$. That is,
\begin{equation}
\check{\cal D}\::\;\;\;\check\Lambda^p({\cal Q})\rightarrow\check\Lambda^{p+1}({\cal Q})\:,\;\;\;\check{\cal D}^2=0\:,
\end{equation}
where $\check\Lambda^p({\cal Q})$ is an appropriate sub-complex of ${\cal Q}$-valued forms. Furthermore, the space of infinitesimal deformations of such compactifications is parametrised by
\begin{equation}
{\cal TM}=\check H^1_{\check{\cal D}}({\cal Q})\:,
\end{equation}
where ${\cal TM}$ denotes the tangent space of the full moduli space.

Our deformation analysis naturally incorporates fluctuations of the heterotic $B$-field. In fact, due to the anomaly cancellation condition, we could only translate the heterotic $G_2$ system into  $\check{\cal D}$-closed ${\cal Q}$-valued one-forms if these one-forms included $B$-field fluctuations. Put differently, to disentangle geometric and $B$-field deformations we must  decompose the one forms with values in $TY$ into two sets  ${\cal S}(TY)$ and  ${\cal A}(TY)$, that correspond to symmetric and  antisymmetric matrices respectively.  This decomposition does not serve to simplify the analysis of the deformation, and in fact seems unnatural from the perspective of ${\cal Q}$. We should remark that for the $G_2$ holonomy, the inclusion of  ${\cal A}(TY)$ among the infinitesimal moduli is natural but not necessary \cite{delaOssa:2016ivz}.

Another interesting point regards the ${\mathcal O}(\alpha')$ corrections to the $H$-flux Bianchi identity, which arise as a consequence of an anomaly cancellation condition in the world-sheet description of the heterotic string. We observe that these ${\mathcal O}(\alpha')$ corrections are really imposed already in our geometric analysis of the supergravity system, as a necessary constraint to obtain a good deformation theory. This provides an alternative argument for why the $\alpha'$ corrections of heterotic supergravity take the form observed by  Bergshoeff--de Roo \cite{Bergshoeff1989439}, that could be of use when deriving higher order corrections, without need of analysing the world sheet description of the string.

The deformations of heterotic $G_2$ systems are similar to the deformations of the six dimensional holomorphic Calabi--Yau and Strominger--Hull system as it appears in the papers \cite{Anderson:2009nt,Anderson:2011ty,Anderson:2014xha, delaOssa:2014cia, Garcia-Fernandez:2015hja, delaOssa:2015maa, Candelas:2016usb, McOrist:2016cfl}, though there are some notable differences. In particular, in contrast to the Atiyah-like holomorphic extension bundle of the Strominger--Hull system, $\check{\cal D}$ is not upper triangular with respect to the components of $\cal Q$, and hence $(\cal Q,D)$ does not form an extension bundle in the usual sense. This obscures some properties of the three-dimensional low-energy effective field theory, {\it i.e.} the relation between the massless spectrum and cohomology groups which exist in the holomorphic case. Extension bundles also fit naturally into the heterotic generalised geometry developed in reference \cite{Clarke:2016qtg} (see also \cite{Coimbra:2014qaa}). We leave it as an open question whether an analogue of Schur's lemma can be used to bring $\check{\cal D}$ to the required form, {\it i.e.} by projecting the complex $\check\Lambda^p({\cal Q})$ onto further sub-representations. Deeper investigations into the properties of the connection $\cal D$ and the corresponding structure group of $(\cal Q, D)$ may provide a better understanding of the theory, that could clarify some of the points mentioned here.

An interesting connection between the heterotic $G_2$ system and the six dimensional Strominger--Hull system arises by embedding the latter into the former. This implies that  the seven dimensional structure unifies the holomorphic constraints,  conformally balanced condition and the Yang-Mills conditions of the Strominger--Hull system. We plan to study this unification, and the insight it may bring to the deformations of the Strominger--Hull and other six-dimensional heterotic systems, in the future. 

We have determined the infinitesimal moduli of heterotic $G_2$ systems, and a natural next question concerns that of higher order deformations and obstructions. On physical grounds, it is expected that the finite deformations can be parametrised as solutions ${\cal X}$ of a Maurer--Cartan equation
\begin{equation}
\check{\cal D}{\cal X}+\frac{1}{2}[{\cal X},{\cal X}]=0\:,\;\;\;{\cal X}\in\check\Lambda^1({\cal Q})\:,
\end{equation}
for some differential graded Lie algebra (DGLA). What exactly the Lie bracket 
\begin{equation}
[\:,\:]\::\;\;\;\check\Lambda^p({\cal Q})\times\check\Lambda^q({\cal Q})\rightarrow\check\Lambda^{p+q}({\cal Q})\:,
\end{equation}
and the corresponding DGLA is remains to be determined.\footnote{We expect that there exist a parametrisation where the deformation problem is governed by a DGLA, but from a mathematical standpoint this is not guaranteed. The deformations might instead be described by an $L_\infty$-algebra, including non-vanishing Jacobi identities and higher brackets.} In this paper we have laid the foundations for further investigations into such finite deformations, and we plan to exploit this groundwork in a future publication \cite{delaOssa17}.

\section*{Acknowledgements}
The authors would like to thank Anthony Ashmore, Andreas Braun, Philip Candelas, Marc-Antoine Fiset, Mario Garc\'ia-Fern\'andez, Chris Hull, Spiro Karigiannis, Ulf Lindstr\"om, Jock McOrist, Ruben Minasian, George Papadopoulos, Michela Petrini, Brent Pym, Carlos Shahbazi and Carl Tipler for interesting discussions. ML and XD thank the Mainz Institute for Theoretical Physics (MITP) for hospitality when part of this work was completed.  ML also acknowledges partial support from the COST Short Term Scientific Mission MP1210-32976, and thanks Oxford University for hospitality.  EES thanks the CERN short visitor program, and the CERN Theory Group for hospitality when part of this work was completed.

The work of XD is supported in part by the EPSRC grant EP/J010790/1. ML's research is financed by the Swedish Research Council (VR) under grant number 2016-03873. The work of EES, made within the Labex Ilp (reference Anr-10-Labx-63), was supported by French state funds managed by the Agence nationale de la recherche, as part of the programme Investissements d'avenir under the reference Anr-11-Idex-0004-02. 

This is a pre-print of an article published in Communications in Mathematical Physics. The final authenticated version is available online at:\\ https://doi.org/10.1007/s00220-017-3013-8.

\newpage
\appendix
\section{Identities and Lemmas}\label{app:formulas}

We have used a number of identities in this paper, and collect some of them in this appendix. Many of these formulas can be found in the literature, {\it e.g.}  \cite{Bryant:2005mz}, and further relevant formulas can be found in {\it e.g.}  \cite{delaOssa:2016ivz} and \cite{delaOssa:2014lma}.

The operator $\lrcorner$ denotes the contraction of forms, and is defined by
\be \alpha\lrcorner\beta = 
\frac{1}{k!\, p!}\ \alpha^{m_1\cdots m_k}\ \beta_{m_1\cdots m_k n_1\cdots n_p}
\dd x^{n_1}\cdots \dd x^{n_p}~,  \ee
where $\alpha$ is any $k$-form and $\beta$ is any $p+k$-form. It is easy to deduce the identity
\be \alpha\lrcorner \beta = (-1)^{p(d-p-k)}\, *(\alpha\wedge*\beta)~.\ee
For odd $d$ we have
\be \alpha\lrcorner \beta = (-1)^{pk}\, *(\alpha\wedge*\beta)~.\ee
\vskip10pt

\noindent
Contractions between $\varphi$ and $\psi$ give  \cite{Bryant:2005mz}
\begin{align}
\varphi^{abc}\, \varphi_{abc} &=42~,
\\
\varphi^{acd}\, \varphi_{bcd} &= 6\, \delta_b^a~,
\label{eq:g2ident2}\\
\varphi^{eab}\, \varphi_{ecd} &= 2\, \delta_{[c}^a\, \delta_{d]}^b
+ \psi^{ab}{}_{cd}~.
\label{eq:g2ident3}
\\[7pt]
\varphi^a{}^{d_1 d_2}\, \psi_{bc d_1 d_2} &= 4\, \varphi^a{}_{bc}~,
\label{eq:g2ident4}\\
\varphi^{abf}\, \psi_{cdef} &= - 6\, \delta^{[a}{}_{[c}\, \varphi^{b]}{}_{de]}~,
\label{eq:g2ident5}
\\[7pt]
\psi^{abcd}\psi_{abcd} &= 7\cdot 24 = 168~,
\\
\psi^{acde}\psi_{bcde} &= 24\, \delta_b^a~,\\
\psi^{abe_1e_2}\psi_{cde_1e_2} &= 8\, \delta_{[c}^a\, \delta_{d]}^b + 2\,\psi^{ab}{}_{cd}~,\\
\psi^{a_1a_2a_3c}\psi_{b_1b_2b_3c} &= 6\, \delta_{[b_1}^{a_1}\, \delta_{b_2}^{a_2} \, \delta_{b_3]}^{a_3}
+ 9\, \psi^{[a_1a_2}{}_{[b_1b_2}\, \delta^{a_3]}_{b_3]} - \varphi^{a_1a_2a_3}\,\varphi_{b_1b_2b_3}~,
\\
\psi^{a_1a_2a_3a_4}\psi_{b_1b_2b_3b_4} &= 24\, \delta_{[b_1}^{a_1}\, 
\delta_{b_2}^{a_2} \, \delta_{b_3}^{a_3}\, \delta_{b_4]}^{a_4}
\\[2pt]
&\qquad
+ 72\, \psi^{[a_1a_2}{}_{[b_1b_2}\, \delta_{b_3}^{a_3}\,\delta^{a_4]}_{b_4]}- 16\,  \varphi^{[a_1a_2a_3}\,\varphi_{[b_1b_2b_3}\, \delta^{a_4]}_{b_4]}~,
\\[10pt]
\frac{\sqrt{g}}{2}\, \varphi_{acd}\, \epsilon^{cdb_1 b_2 b_3 b_4 b_5}
&= 5\, \delta_a^{[b_1}\, \psi^{b_2 b_3 b_4 b_5]}~,\label{eq:g2phiep}
\\[5pt]
\frac{\sqrt{g}}{3!}\, \psi_{ac_1c_2c_3}\, \epsilon^{c_1c_2c_3b_1 b_2 b_3 b_4}
&= -4 \, \delta_a^{[b_1}\, \varphi^{b_2 b_3 b_4]}~.\label{eq:g2psiep}
\end{align}
Let $\alpha$ be a one form (possibly with values in some bundle)
\begin{align}
\varphi\lrcorner(\alpha\wedge\varphi) &= (\alpha\lrcorner\psi)\lrcorner\psi = -4\, \alpha~, 
\label{eq:onepsipsi}\\
\psi\lrcorner(\alpha\wedge\psi) &= (\alpha\lrcorner\varphi)\lrcorner\varphi =3\, \alpha~, 
\label{eq:onephiphi}\\
\varphi\lrcorner(\alpha\wedge\psi) &= 
(\alpha\lrcorner \varphi)\lrcorner \psi
= 2\, \alpha\lrcorner\varphi~.
\label{eq:onephipsi}
\end{align}

\noindent
Let $\alpha$ be a two form (possibly with values in some bundle)
\begin{align}
\varphi\lrcorner(\alpha\wedge\varphi) &= -\, (\alpha\lrcorner\psi)\lrcorner\psi = 2\, \alpha + \alpha\lrcorner\psi~,
\label{eq:twopsipsi}\\
\psi\lrcorner(\alpha\wedge\psi) &= (\alpha\lrcorner\varphi)\lrcorner\varphi = 3\, \pi_7(\alpha) = 
\alpha + \alpha\lrcorner\psi~,
\label{eq:twophiphi}\\[3pt]
(\alpha\lrcorner\varphi)\lrcorner\psi &= \frac{1}{2}\, \alpha_a{}^d\, \varphi_{bcd}\, \dd x^{abc}~.
\label{eq:twophipsi}
\end{align}

\noindent
Let $\alpha$ be a two form  in $\Lambda^2_{14}(Y)$ (possibly with values in some bundle). We have
\be
\alpha\wedge\psi_a  = - \alpha_a\wedge\psi~.
\label{eq:idtwo14}
\ee

\subsection*{Useful Lemmas}
In the main part of the paper we have used some formula's without proof  in order to ease the flow of the text. Here we proove some of the relevant formulas, collected in a couple of lemmas.

\begin{lemma}
Let $\lambda\in\Lambda^3_{27}$. Then
\begin{align}
*\lambda &= - \frac{1}{4}\, \lambda_{ab}{}^e\,  \varphi_{ecd}\, \dd x^{abcd}~,
\label{eq:star327one}
\\[5pt]
\lambda_{a[bc}\, \psi_{def]}{}^a&= 0~.
\label{eq:star327two}
\end{align}
\end{lemma}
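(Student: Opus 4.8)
The plan is to exploit the $G_2$-representation content of both sides, reducing each identity to an application of Schur's lemma with at most one explicit constant to fix. Recall from \eqref{eq:three27} that any $\lambda\in\Lambda^3_{27}$ is parametrised by a traceless symmetric tensor $h_{ab}$ via $\lambda = \tfrac12 h_a{}^d\varphi_{bcd}\,\dd x^{abc}$, equivalently in components $\lambda_{abc}=3\,h_{[a}{}^{g}\varphi_{bc]g}$, and that $\Lambda^3_{27}$ carries the irreducible $G_2$-representation $\mathbf{27}$. The key structural facts are the multiplicities of $\mathbf{27}$ in the relevant form spaces: $\Lambda^5(Y)\cong\Lambda^2(Y)=\Lambda^2_7\oplus\Lambda^2_{14}$ contains \emph{no} copy of $\mathbf{27}$, whereas $\Lambda^4(Y)=\Lambda^4_1\oplus\Lambda^4_7\oplus\Lambda^4_{27}$ contains it with multiplicity one.

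For the second identity \eqref{eq:star327two}, observe that the assignment $\lambda\mapsto T_{bcdef}:=\lambda_{a[bc}\,\psi_{def]}{}^a$ is manifestly a $5$-form (it is totally antisymmetric in $bcdef$) built solely from $\lambda$, the invariant form $\psi$, and the metric; hence $\lambda\mapsto T$ is a $G_2$-equivariant linear map $\Lambda^3_{27}\to\Lambda^5(Y)$. Since the domain is the irreducible $\mathbf{27}$ and the target $\Lambda^5\cong\mathbf{7}\oplus\mathbf{14}$ contains no $\mathbf{27}$, Schur's lemma forces $T\equiv 0$, which is precisely \eqref{eq:star327two}. Equivalently, one may substitute $\lambda_{abc}=3h_{[a}{}^{g}\varphi_{bc]g}$ and collapse the resulting $\varphi$-$\psi$ contractions with \eqref{eq:g2ident5}; the $\delta$-$\varphi$ terms recombine under the antisymmetrisation, and the tracelessness and symmetry of $h$ annihilate the remainder.

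For the first identity \eqref{eq:star327one}, the Hodge star $*\colon\Lambda^3_{27}\to\Lambda^4_{27}$ is a nonzero $G_2$-equivariant isomorphism, and the map $\lambda\mapsto \lambda_{ab}{}^{e}\varphi_{ecd}\,\dd x^{abcd}$ is likewise equivariant, landing by Schur in the unique $\mathbf{27}$ summand of $\Lambda^4$ (its components into $\Lambda^4_1$ and $\Lambda^4_7$ must vanish). Since $\operatorname{End}_{G_2}(\mathbf{27})=\mathbb{R}$, both maps are proportional, so $*\lambda = c\,\lambda_{ab}{}^{e}\varphi_{ecd}\,\dd x^{abcd}$ for a universal constant $c$. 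To pin down $c=-\tfrac14$, I would insert $\lambda_{abc}=3h_{[a}{}^{g}\varphi_{bc]g}$ into $\lambda_{ab}{}^{e}\varphi_{ecd}$, use the $\varphi$-$\varphi$ contraction \eqref{eq:g2ident3} to trade the double-$\varphi$ for its $\psi$-part (the $\delta$-$\delta$ part dropping out after antisymmetrisation over $abcd$ and by tracelessness of $h$), and match against the relation $*\chi=-\tfrac{1}{3!}\,h_a{}^{e}\psi_{ebcd}\,\dd x^{abcd}$ recorded just before the lemma.

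The genuinely routine part, and the only real obstacle, is this constant-fixing in \eqref{eq:star327one}: one must track the combinatorial factors from the antisymmetrisation implicit in $\dd x^{abcd}$ together with the index gymnastics in \eqref{eq:g2ident3}. I expect the cleanest bookkeeping to come from evaluating both sides on the standard model $G_2$ structure at a single point, taking $h$ to be one diagonal traceless entry; this sidesteps the general contraction algebra and determines $c$ at once. Apart from this, the representation-theoretic argument renders both identities essentially automatic.
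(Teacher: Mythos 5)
Your argument is correct in outline but takes a genuinely different route from the paper. For \eqref{eq:star327two} the paper simply asserts that it suffices to show $*(\lambda_{abf}\,\psi^f{}_{cde}\,\dd x^{abcde})=0$ and leaves the contraction to the reader, whereas your observation that this expression is an equivariant image of the irreducible $\mathbf{27}$ inside $\Lambda^5\cong\mathbf{7}\oplus\mathbf{14}$ kills it by Schur's lemma with no index work at all; this is cleaner and arguably the ``right'' reason the identity holds. For \eqref{eq:star327one} the paper works in the opposite direction: it applies $*$ to the right-hand side, uses the $\varphi$--$\epsilon$ contraction \eqref{eq:g2phiep} to convert it into a $\psi$-contraction of $\lambda$, then substitutes the parametrisation $\lambda=\tfrac12 h_a{}^d\varphi_{bcd}\,\dd x^{abc}$ and grinds through \eqref{eq:g2ident4}--\eqref{eq:g2ident5} to land back on $\lambda$, so that the coefficient $-\tfrac14$ emerges from the computation rather than being fixed a posteriori. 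Your multiplicity-one argument correctly reduces \eqref{eq:star327one} to determining a single universal constant, which is a genuine economy.

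The one caveat is that you never actually determine that constant: the entire quantitative content of \eqref{eq:star327one} is the factor $-\tfrac14$, and your proof defers exactly that step (``I would insert\dots'', ``I expect the cleanest bookkeeping\dots''). Both of your proposed methods for fixing it are sound --- the contraction via \eqref{eq:g2ident3} matched against $*\chi=-\tfrac{1}{3!}h_a{}^e\psi_{ebcd}\,\dd x^{abcd}$, or evaluation on a single diagonal traceless $h$ in the standard model at a point --- but until one of them is carried out the lemma is only established up to an undetermined nonzero scalar. To be a complete proof you should execute the pointwise evaluation; it is a few lines and closes the only gap.
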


\begin{proof}
\begin{align*}
*\left(- \frac{1}{4}\, \lambda_{ab}{}^e\,  \varphi_{ecd}\, \dd x^{abcd}\right)
&= \frac{\sqrt{g}}{4!}\, \lambda^{abf}\, \varphi_f{}^{cd}\, 
\epsilon_{abcde_1 e_2 e_3}\, \dd x^{e_1 e_2 e_3}
\\[5pt]
&= - \frac{5}{12}\, \lambda^{abc}\, g_{c[a}\, \psi_{b e_1 e_2 e_3]}\, \dd x^{e_1 e_2 e_3}
= - \frac{1}{4}\, \lambda^{ab}{}_{e_1}\, \psi_{e_2 e_3 ab}\, \dd x^{e_1 e_2 e_3} 
\end{align*}
where we have used equation \eqref{eq:g2phiep} in the first line.
Representing $\lambda$ in terms of a symmetric traceless matrix $h$ as
\[ \lambda = \frac{1}{2}\, h^a_b\, \varphi_{acd}\, \dd x^{bcd}~,\]
we have 
\begin{align*}
*\left(- \frac{1}{4}\, \lambda_{ab}{}^e\,  \varphi_{ecd}\, \dd x^{abcd}\right)
&= - \frac{3}{4}\,  h_{[a}^c\, \varphi_{b e_1]c}\,\psi^{ab}{}_{e_2 e_3}\, \dd x^{e_1 e_2 e_3}
\\[5pt]
&=  - \frac{1}{4}\,  (h_{e_1}^c\, \varphi_{abc}
+ 2\, h_a^c\, \varphi_{be_1c})
\,\psi^{ab}{}_{e_2 e_3}\, \dd x^{e_1 e_2 e_3}
\\[5pt]
&=  - \frac{1}{4}\,  ( 4\, h_{e_1}^c\, \varphi_{c e_2 e_3}
- 12\, h_c^a\, g _{e_1 d}\,\delta_{[a}{}^{[d}  \, \varphi^{c]}{}_{e_2 e_3]})
\, \dd x^{e_1 e_2 e_3}
\\[5pt]
&=  - 2\, \lambda 
+ h_c^a\, g _{e_1 d}\,(\delta_{a}{}^{[d}  \, \varphi^{c]}{}_{e_2 e_3}
+ 2\, \delta_{e_2}{}^{[d}  \, \varphi^{c]}{}_{e_3 a})
\, \dd x^{e_1 e_2 e_3}
\\[5pt]
&=  - 2\, \lambda + 3\, \lambda  = \lambda~,
\end{align*}
where we have used identities \eqref{eq:g2ident4} and \eqref{eq:g2ident5}.
The second identity follows easily by showing that 
\[ *(\lambda_{abf}\, \psi^f{}_{cde}\, \dd x^{abcde}) = 0~.\]

\end{proof}

\begin{lemma}
Let $\alpha\in\Lambda^2(Y, TY)$. Then
\begin{align}
\pi_7(i_\alpha(\psi)) &= - (\pi_{14}(\alpha^a))_{ab}\, \dd x^b\wedge\psi~,\label{eq:5insert7}
\\
\pi_{14}(i_\alpha(\psi)) &= i_{\pi_7(\alpha)}(\psi)~.\label{eq:5insert14}
\end{align}
\end{lemma}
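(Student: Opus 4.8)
The plan is to exploit the linearity of the insertion operator by splitting the two–form part of $\alpha$, for each value index $a$, into its $\Lambda^2_7$ and $\Lambda^2_{14}$ pieces, $\alpha^a=\pi_7(\alpha^a)+\pi_{14}(\alpha^a)$, so that
\[
i_\alpha(\psi)=i_{\pi_7(\alpha)}(\psi)+i_{\pi_{14}(\alpha)}(\psi).
\]
I will then argue that the first summand lies entirely in $\Lambda^5_{14}$ and the second entirely in $\Lambda^5_7$. Once this is established, applying the projectors $\pi_7$ and $\pi_{14}$ to the displayed sum produces \eqref{eq:5insert7} and \eqref{eq:5insert14} simultaneously.

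The $\pi_{14}$ piece is immediate. For each fixed $a$ the two–form $\pi_{14}(\alpha^a)$ lies in $\Lambda^2_{14}$, so I can apply the identity \eqref{eq:idtwo14} with its free index set equal to the value index $a$ and then sum, which gives
\[
i_{\pi_{14}(\alpha)}(\psi)=\pi_{14}(\alpha^a)\wedge\psi_a=-(\pi_{14}(\alpha^a))_{ab}\,\dd x^b\wedge\psi .
\]
This is a one–form wedged with $\psi$, hence manifestly an element of $\Lambda^5_7$, and it already coincides with the right–hand side of \eqref{eq:5insert7}.

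The substance of the proof is therefore to show that $i_{\pi_7(\alpha)}(\psi)\in\Lambda^5_{14}$. Using the five–form decomposition recalled in section \ref{sssec:formdecomp}, a five–form $\beta$ lies in $\Lambda^5_{14}$ precisely when $\psi\lrcorner\beta=0$, and its seven–dimensional part is $\frac{1}{3}(\psi\lrcorner\beta)\wedge\psi$; so it is enough to prove $\psi\lrcorner i_{\pi_7(\alpha)}(\psi)=0$. Writing $\pi_7(\alpha^a)=w^a\lrcorner\varphi$ with the one–forms $w^a=\frac{1}{3}\,\alpha^a\lrcorner\varphi$, I would expand $i_{\pi_7(\alpha)}(\psi)=(w^a\lrcorner\varphi)\wedge\psi_a$ in components and contract with $\psi$, reducing the whole expression to the quartic $\varphi$–$\psi$ and $\psi$–$\psi$ contraction identities collected in the appendix (in particular \eqref{eq:g2ident4}, \eqref{eq:g2ident5} and the $\psi\psi$ identities). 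I expect this contraction to be the main obstacle: the antisymmetrisations generate several terms that must cancel against one another, and the bookkeeping of combinatorial factors together with the trace/traceless split of the two–tensor $w^{ab}$ is where the care is required.

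With both facts in hand the conclusion is purely formal. Applying $\pi_7$ to $i_\alpha(\psi)=i_{\pi_7(\alpha)}(\psi)+i_{\pi_{14}(\alpha)}(\psi)$ annihilates the first summand (it is in $\Lambda^5_{14}$) and fixes the second, yielding $\pi_7(i_\alpha(\psi))=-(\pi_{14}(\alpha^a))_{ab}\,\dd x^b\wedge\psi$, which is \eqref{eq:5insert7}. Applying $\pi_{14}$ annihilates the second summand (it is in $\Lambda^5_7$) and fixes the first, yielding $\pi_{14}(i_\alpha(\psi))=i_{\pi_7(\alpha)}(\psi)$, which is \eqref{eq:5insert14}. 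As a consistency check, \eqref{eq:5insert7} is equivalent to $\psi\lrcorner i_\alpha(\psi)=-3\,(\pi_{14}(\alpha^a))_{ab}\,\dd x^b$, which one could also establish directly and which depends only on the $\pi_{14}$ part of $\alpha$, as it must.
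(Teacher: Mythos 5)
Your proposal is correct and follows essentially the same route as the paper: split $\alpha^a=\pi_7(\alpha^a)+\pi_{14}(\alpha^a)$, use \eqref{eq:idtwo14} to place $\pi_{14}(\alpha^a)\wedge\psi_a$ in $\Lambda^5_7$, and show $\psi\lrcorner\bigl(\pi_7(\alpha^a)\wedge\psi_a\bigr)=0$ so that term lies in $\Lambda^5_{14}$. The one step you leave as a sketch (the $\psi$-contraction killing the $\pi_7$ piece) is exactly the step the paper also asserts without expanding, and your proposed reduction to the quartic contraction identities does go through.
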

\begin{proof}
\[
i_\alpha(\psi) = \alpha^a\wedge\psi_a = \pi_7(\alpha^a)\wedge\psi_a + \pi_{14}(\alpha^a)\wedge\psi_a
= \pi_7(\alpha^a)\wedge\psi_a - (\pi_{14}(\alpha^a))_{ab}\, \dd x^b\wedge\psi~,
\]
where we have used  identity \eqref{eq:idtwo14}.   Contracting the first term with $\psi$ we find
\[ \psi\lrcorner(\pi_7(\alpha^a)\wedge\psi_a) = 0~,\]
hence equations \eqref{eq:5insert7} and \eqref{eq:5insert14} follow.
\end{proof}
\vskip10pt

\begin{lemma}
Let $\alpha$ and $\beta$ be two forms in $\Lambda_{14}^2$.  Then
\begin{equation}
\gamma = \frac{1}{2}\, \alpha^a\wedge\beta_{ab}\, \dd x^b \in \Lambda_{14}^2(Y)  ~,
\label{eq:i14to14}
\end{equation}
where
\[ \alpha^a = g^{ab}\, \alpha_{bc}\, \dd x^c~.\]
\end{lemma}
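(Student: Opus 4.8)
The claim is that for $\alpha,\beta\in\Lambda^2_{14}(Y)$, the two-form $\gamma=\tfrac12\,\alpha^a\wedge\beta_{ab}\,\dd x^b$ again lies in $\Lambda^2_{14}(Y)$. By the characterisation \eqref{eq:2F14}, this is equivalent to showing $\gamma\lrcorner\varphi=0$ (equivalently $\gamma\wedge\psi=0$, or $\gamma\lrcorner\psi=-\gamma$). The plan is to verify the condition $\gamma\lrcorner\varphi=0$ directly in index notation, exploiting that $\alpha,\beta\in\Lambda^2_{14}$ means $\alpha\lrcorner\varphi=0$ and $\beta\lrcorner\varphi=0$, i.e. $\alpha^{bc}\varphi_{abc}=0$ and likewise for $\beta$.

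**The main computation.** First I would write out the components of $\gamma$. Since $\alpha^a=g^{ab}\alpha_{bc}\,\dd x^c$ means $(\alpha^a)_c=\alpha^a{}_c$, the wedge $\alpha^a\wedge\beta_{ab}$ contributes $(\gamma)_{cd}=\tfrac12\big(\alpha^a{}_{[c}\,\beta_{|ab|,d]}\big)$-type terms; more precisely $\gamma_{cd}=\alpha^a{}_{[c}\beta_{ad]}$ after carefully tracking the free index $b$ that gets summed against the $\dd x^b$ slot of $\beta_{ab}$. So I expect $\gamma_{cd}\sim \alpha^a{}_{c}\,\beta_{ad}-\alpha^a{}_d\,\beta_{ac}$, i.e. $\gamma$ is built from the antisymmetrisation of the matrix product $\alpha^a{}_c\beta_{ad}$. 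Then I would compute $\gamma\lrcorner\varphi$, i.e. contract $\gamma_{cd}\varphi^{cd}{}_e$, obtaining an expression like $\alpha^{ac}\beta^d{}_a\,\varphi_{cde}$ (up to signs and factors). The goal is to show this vanishes using the fact that both $\alpha$ and $\beta$ lie in the $\mathbf{14}$. The key tool is the identity \eqref{eq:2F14} recast infinitesimally: for $\alpha\in\Lambda^2_{14}$ one has $\alpha^{bc}\psi_{abcd}=-2\alpha_{ad}$ rearrangements, and similarly the contraction $\alpha_{ab}\varphi^{ab}{}_c=0$. The combination $\alpha^{ac}\beta_a{}^d\varphi_{cde}$ is a scalar-type contraction that I expect to vanish by symmetry once the $\Lambda^2_{14}$ conditions are imposed on both factors.

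**Anticipated obstacle.** The main obstacle will be the bookkeeping in the index contraction: showing that $\alpha^{ac}\beta_a{}^d\varphi_{cde}=0$ is not immediate, because a priori this is a product of two $\mathbf{14}$-forms contracted with $\varphi$, and such products generically land in a reducible representation. The cleanest route is probably representation-theoretic: the contraction $\gamma\lrcorner\varphi$ is the $\pi_7$-projection of $\gamma$ (up to normalisation by \eqref{eq:proj2to7}), and I want to argue that the antisymmetric ``commutator-like'' pairing of two elements of the $\mathbf{14}$ (which is $\mathfrak{g}_2$ as a Lie algebra) produces something that, when contracted with $\varphi$, has no $\mathbf{7}$ component. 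Concretely I expect to need the $\psi$-eigenvalue characterisation: writing $\alpha\lrcorner\psi=-\alpha$ and $\beta\lrcorner\psi=-\beta$, and using \eqref{eq:g2ident3} and the $\psi\psi$ contraction identities to reduce $\alpha^{ac}\beta_a{}^d\varphi_{cde}$ to a multiple of itself with the wrong sign, forcing it to zero.

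**Summary of steps.** So the plan is: (i) compute $\gamma_{cd}$ explicitly as the antisymmetrised product $\alpha^a{}_{[c}\beta_{|a|d]}$; (ii) form the contraction $\gamma_{cd}\varphi^{cd}{}_e$; (iii) substitute the $\Lambda^2_{14}$ conditions $\alpha^{bc}\varphi_{abc}=0$, $\beta^{bc}\varphi_{abc}=0$ (and if needed the eigenvalue relations $\alpha\lrcorner\psi=-\alpha$) together with the $\varphi$–$\psi$ contraction identities \eqref{eq:g2ident3}–\eqref{eq:g2ident5}; and (iv) conclude that $\gamma\lrcorner\varphi=0$, hence $\gamma\in\Lambda^2_{14}(Y)$ by \eqref{eq:2F14}. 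If the direct index approach proves too tangled, the fallback is the observation that $\Lambda^2_{14}\cong\mathfrak{g}_2$ and the bracket-like pairing $[\alpha,\beta]$ stays in $\mathfrak{g}_2$, which is exactly the statement $\gamma\in\Lambda^2_{14}$; making this identification precise would give a representation-theoretic proof bypassing the raw contraction.
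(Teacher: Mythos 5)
Your plan matches the paper's proof: both establish $\gamma\lrcorner\varphi=0$ by writing $\gamma_{cd}$ as the antisymmetrised product $\alpha^a{}_{[c}\,\beta_{|a|d]}$, contracting with $\varphi$, invoking the $\Lambda^2_{14}$ conditions on $\alpha$ and $\beta$ (the paper's specific tool is the identity $\alpha_{b[a}\,\varphi^b{}_{cd]}=0$ for $\alpha\in\Lambda^2_{14}$, quoted from lemma 4 of \cite{delaOssa:2016ivz}, rather than the $\psi$-eigenvalue relations you suggest), and concluding with exactly the sign-flip argument $\gamma\lrcorner\varphi=-\gamma\lrcorner\varphi$ that you anticipate. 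Your fallback observation --- that $\gamma$ is, up to a factor, the matrix commutator $[\alpha,\beta]$ and that $\Lambda^2_{14}$, being the Lie algebra of $G_2$ inside $\Lambda^2\cong\mathfrak{so}(7)$, closes under the bracket --- is also correct and would give a cleaner representation-theoretic proof, though it is not the route the paper takes.
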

\begin{proof}
To prove equation \eqref{eq:i14to14}, we prove that $\gamma\lrcorner\varphi = 0$.
We have 
\begin{align*}
\gamma\lrcorner\varphi &= 
\frac{1}{2}\, 
\alpha_b^a\,\beta_{ac}\,\varphi^{bc}{}_d\, \dd x^d
= - \beta^{ac}\, \alpha_{ba} \,\varphi^b{}_{cd}\, \dd x^d
\\[5pt]
&= - \frac{1}{2}\,  \beta^{ac}\, (3\, \alpha_{b[a} \,\varphi^b{}_{cd]}
- \alpha_{bc} \,\varphi^b{}_{da} - \alpha_{bd} \,\varphi^b{}_{ac})\, \dd x^d~.
\end{align*}
The last term vanishes as $\beta\lrcorner\varphi = 0$.  The first term also vanishes by lemma 4 of \cite{delaOssa:2016ivz}. 
Then
\begin{equation*}
\gamma\lrcorner\varphi= \frac{1}{2}\,  \beta^{ac}\,  \alpha_{bc} \,\varphi^b{}_{da} \, \dd x^d
=  \frac{1}{2}\,  \alpha^c_b \,\beta_{ca}\,  \varphi^{ab}{}_d \, \dd x^d  = - \gamma\lrcorner\varphi~,
\end{equation*}
and therefore $\gamma\lrcorner\varphi=0$.
\end{proof}
\vskip10pt

\newpage
\section{Curvature identities}\label{sec:appcurvs}

In this appendix we prove curvature identities that hold for the connections on manifolds with $G_2$ structure. We focus on two connections: the $G_2$ holonomy connection $\nabla$ with totally antisymmetric torsion $H$, defined in section \ref{sec:inttor}, and the connection $\dth$, defined in section \ref{sec:covder}. We will, in particular, show that $\dth$ is not an instanton connection.  

Let $Y$ be a Riemannian manifold and $\nabla$ a connection on $Y$ with connection symbols $\Gamma$ and corresponding spin connetion $\Omega$. The curvature $R(\Gamma)$ of the connection $\nabla$ is defined by
\begin{align*}
R(\Gamma)_a{}^b &=  \frac{1}{2}\, (R(\Gamma)_a{}^b)_{cd}\, \dd x^{cd}
= (\partial_{c}\Gamma_{da}{}^b
+ \Gamma_{ce}{}^b\, \Gamma_{da}{}^e)\, \dd x^{cd}
\\[5pt]
&= - (\partial_c\Omega_{d\alpha\beta} 
+ \Omega_{c\alpha\gamma}\, \Omega_d{}^\gamma{}_\beta)
\, e_a{}^\alpha\, e^{b\beta}
~.
\end{align*}
If $\eta$ is a spinor on $Y$ we have
\[
 [\nabla_a, \nabla_b]\, \eta = - \frac{1}{4}\, (R(\Gamma)_{cd})_{ab}\, \gamma^{cd}\, \eta
- T_{ab}{}^c\, \nabla_c\eta~,
\]
where $T$ is the torsion of the connection and $\gamma^a$ are the $\gamma$ matrices generating the Clifford algebra of $Spin(7)$.

\begin{proposition}
Let  $Y$ be a Riemannian manifold, and let $\nabla$  be a metric connection on $Y$ with connection symbols 
\[ 
\Gamma_{ab}{}^c= \Gamma^{LC}_{\,ab}{}^{\,c} + A_{ab}{}^c~.
\]
 Then
\[
R(\Gamma)_a{}^b - R(\Gamma^{LC}) _a{}^b =
(  \nabla^{LC}_{\, c}\, A_{da}{}^b + A_{ce}{}^b\, A_{da}{}^e) \, \dd x^{cd}~.
\]
\end{proposition}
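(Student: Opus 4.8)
The plan is to substitute the decomposition $\Gamma_{ab}{}^c = \Gamma^{LC}_{\,ab}{}^{\,c} + A_{ab}{}^c$ directly into the coordinate expression for the curvature two-form recorded at the start of this appendix, namely $R(\Gamma)_a{}^b = (\partial_c\Gamma_{da}{}^b + \Gamma_{ce}{}^b\,\Gamma_{da}{}^e)\,\dd x^{cd}$, and then to expand and reorganise the resulting terms. First I would expand the product $\Gamma_{ce}{}^b\,\Gamma_{da}{}^e$ into four pieces: the pure Levi-Civita piece $\Gamma^{LC}_{ce}{}^b\,\Gamma^{LC}_{da}{}^e$, the two cross terms $\Gamma^{LC}_{ce}{}^b\,A_{da}{}^e$ and $A_{ce}{}^b\,\Gamma^{LC}_{da}{}^e$, and the quadratic piece $A_{ce}{}^b\,A_{da}{}^e$, while the derivative term splits as $\partial_c\Gamma^{LC}_{da}{}^b + \partial_c A_{da}{}^b$. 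Collecting the pieces built only from $\Gamma^{LC}$ reproduces exactly $R(\Gamma^{LC})_a{}^b$, so that subtracting it leaves
\[
R(\Gamma)_a{}^b - R(\Gamma^{LC})_a{}^b = \big(\partial_c A_{da}{}^b + \Gamma^{LC}_{ce}{}^b\,A_{da}{}^e + A_{ce}{}^b\,\Gamma^{LC}_{da}{}^e + A_{ce}{}^b\,A_{da}{}^e\big)\,\dd x^{cd}.
\]

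The crucial step is to recognise the three terms linear in $A$ as the components of the Levi-Civita covariant derivative of the $(1,2)$-tensor $A_{da}{}^b$. Writing out $\nabla^{LC}_c A_{da}{}^b = \partial_c A_{da}{}^b - \Gamma^{LC}_{cd}{}^f A_{fa}{}^b - \Gamma^{LC}_{ca}{}^f A_{df}{}^b + \Gamma^{LC}_{cf}{}^b A_{da}{}^f$, I would contract with $\dd x^{cd}$. Here two facts do the work: because $\dd x^{cd}$ is antisymmetric in $c,d$ and $\Gamma^{LC}$ is torsion-free (symmetric in its lower indices), the term $\Gamma^{LC}_{cd}{}^f A_{fa}{}^b$ drops out; and relabelling $c\leftrightarrow d$ together with the same antisymmetry converts $A_{ce}{}^b\,\Gamma^{LC}_{da}{}^e\,\dd x^{cd}$ into $-\Gamma^{LC}_{ca}{}^e A_{de}{}^b\,\dd x^{cd}$, which is precisely the $-\Gamma^{LC}_{ca}{}^f A_{df}{}^b$ term. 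The remaining term $\Gamma^{LC}_{ce}{}^b A_{da}{}^e$ matches $+\Gamma^{LC}_{cf}{}^b A_{da}{}^f$ on the nose. Hence the three linear terms assemble into $\nabla^{LC}_c A_{da}{}^b\,\dd x^{cd}$, while the quadratic term $A_{ce}{}^b A_{da}{}^e\,\dd x^{cd}$ is already in the desired form, giving the claimed identity.

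I expect the only real obstacle to be bookkeeping: keeping the index positions of the $(1,2)$-tensor $A_{da}{}^b$ straight through the covariant derivative, and correctly tracking the signs produced by the antisymmetrisation implicit in the $\dd x^{cd}$ contraction. Everything else is a direct, if slightly fiddly, expansion, and no structural input beyond the torsion-freeness of $\Gamma^{LC}$ is required.
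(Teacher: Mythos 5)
Your proposal is correct and follows essentially the same route as the paper: expand $R(\Gamma)_a{}^b$ with $\Gamma = \Gamma^{LC} + A$, subtract $R(\Gamma^{LC})_a{}^b$, and recognise the three terms linear in $A$ as $\nabla^{LC}_c A_{da}{}^b\,\dd x^{cd}$, using the antisymmetry of $\dd x^{cd}$ and the torsion-freeness of $\Gamma^{LC}$ to kill the $\Gamma^{LC}_{cd}{}^f A_{fa}{}^b$ term. The only difference is that you spell out the index bookkeeping for the covariant derivative, which the paper leaves implicit.
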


\begin{proof}
Consider first the curvature of the connection $\nabla$ with 
connection symbols $\Gamma$, which can be written   as
\[ \Gamma_{ab}{}^c = \Gamma^{LC}_{\, ab}{}^{\, c} + A_{ab}{}^c~.
\]
Then
\begin{align*}
R(\Gamma)_a{}^b - R(\Gamma^{LC})_a{}^b 
&=  
(\partial_{c}A_{da}{}^{b}
+ \Gamma^{LC}_{\, ce}{}^{\,b}\, A_{da}{}^e
+ A_{ce}{}^b\, \Gamma^{LC}_{\, da}{}^{\, e}
+ A_{ce}{}^b\, A_{da}{}^e)\, \dd x^{cd}
\\[5pt]
&=  
(\nabla^{LC}_{\, c}\, A_{da}{}^{b}
+ A_{ce}{}^b\, A_{da}{}^e)\, \dd x^{cd}~.
\end{align*}

\end{proof}

Suppose now that $Y$ admits a well defined nowhere vanishing Majorana spinor $\eta$, and therefore has a $G_2$ structure determined by
\[ \varphi_{abd} = -i\, \eta^\dagger\, \gamma_{abc}\, \eta~.\]
Suppose
\[ \nabla_a\,\eta = 0~,\] 
where $\nabla$ is a connection with $G_2$ holonomy on $Y$.
Then the curvarture of the connection $\nabla$ satisfies
\[
(R(\Gamma)_{ab})_{cd} \, \varphi^{ab}{}_e = 0~.
\]
Thus, $\nabla$ is an instanton connection on $Y$. In particular, this holds for the unique $G_2$ holonomy connection with totally antisymmetric torsion $A_{abc} = \frac{1}{2}H_{abc}$. We will restrict to this connection in the following.

On manifolds with a  $G_2$ structure we have defined a connection $\dth$ in terms of a $G_2$ compatible connection $\Gamma$ which acts on forms with values in $TY$ by
\[ 
\dth\Delta^a = \dd\Delta^a + \theta_b{}^a\wedge\Delta^b~,
\]
where $\theta_a{}^b = \Gamma_{ac}{}^b\, \dd x^c$. Note that this connection
is not compatible with the $G_2$ structure and that it is not necessarily metric either.
The curvature $R(\theta)$ of this connection is
\[ 
R(\theta)_a{}^b = \dd\theta_a{}^b + \theta_c{}^b\wedge\theta_a{}^c
= (\partial_c\Gamma_{ad}{}^b + \Gamma_{ec}{}^b\, \Gamma_{ad}{}^e)\, 
\dd x^{cd}
~.\]

\begin{proposition}
Let  $Y$ be a manifold with a $G_2$ structure determined by $\varphi$.  Let $\nabla$  be a metric connection compatible with the $G_2$ structure (that is $\nabla\varphi$ = 0) and with connection symbols 
\[ \Gamma_{ab}{}^c = \Gamma^{LC}_{\, ab}{}^{\, c} + \frac{1}{2} H_{ab}{}^c~.\]
  Then the curvature of the connection $\dth$ satisfies
  \[
R(\theta)_a{}^b - R(\Gamma^{LC}) _a{}^b = 
\frac{1}{4}(2 \nabla^{LC}_c\, H_{ad}{}^b 
+ H_{ec}{}^b\, H_{ad}{}^e)\, \dd x^{cd}~.
\]

\end{proposition}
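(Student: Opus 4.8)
The plan is to reduce the statement to the proposition proved just above, which gives $R(\Gamma)-R(\Gamma^{LC})$ for an \emph{arbitrary} connection $\Gamma=\Gamma^{LC}+A$ purely algebraically. The crucial point is that $R(\theta)$ is \emph{not} the curvature $R(\Gamma)$ of the covariant derivative $\nabla$ with the same symbols: comparing the two definitions in this appendix, $R(\Gamma)_a{}^b$ is built from $\partial_c\Gamma_{da}{}^b$ whereas $R(\theta)_a{}^b$ is built from $\partial_c\Gamma_{ad}{}^b$, i.e.\ the two lower indices of $\Gamma$ are interchanged. So first I would observe that $R(\theta)=R(\Gamma')$, where $\Gamma'$ is the connection obtained by swapping the lower indices, $\Gamma'_{ab}{}^c=\Gamma_{ba}{}^c$. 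This is immediate from inspection of the two curvature formulas, since the swap sends $\partial_c\Gamma'_{da}{}^b\mapsto\partial_c\Gamma_{ad}{}^b$ and $\Gamma'_{ce}{}^b\Gamma'_{da}{}^e\mapsto\Gamma_{ec}{}^b\Gamma_{ad}{}^e$. For the Levi-Civita piece the swap is invisible because $\Gamma^{LC}$ is symmetric in its lower indices, so $R(\Gamma^{LC})$ is unchanged and the left-hand side of the claim is untouched when we pass to $\Gamma'$.

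Next I would identify the contorsion of $\Gamma'$. Writing $\Gamma'=\Gamma^{LC}+A'$ we have $A'_{ab}{}^c=\Gamma_{ba}{}^c-\Gamma^{LC}_{ab}{}^c=\tfrac12 H_{ba}{}^c=-\tfrac12 H_{ab}{}^c$, where the middle equality uses the symmetry of $\Gamma^{LC}$ and the last uses the antisymmetry of $H$. The preceding proposition, applied to $\Gamma'$ (its proof is formal and does not use metricity), then gives at once
\[
R(\theta)_a{}^b - R(\Gamma^{LC})_a{}^b = \big(\nabla^{LC}_c A'_{da}{}^b + A'_{ce}{}^b\, A'_{da}{}^e\big)\,\dd x^{cd} = \big(-\tfrac12\nabla^{LC}_c H_{da}{}^b + \tfrac14 H_{ce}{}^b\, H_{da}{}^e\big)\,\dd x^{cd}.
\]

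The final step is to convert this into the advertised form using the total antisymmetry of $H$. Flipping the coordinate index from first position, $H_{da}{}^b=-H_{ad}{}^b$, $H_{da}{}^e=-H_{ad}{}^e$ and $H_{ce}{}^b=-H_{ec}{}^b$, the single flip in the derivative term turns $-\tfrac12\nabla^{LC}_c H_{da}{}^b$ into $+\tfrac12\nabla^{LC}_c H_{ad}{}^b$, while the double flip in the quadratic term leaves $\tfrac14 H_{ce}{}^b H_{da}{}^e=\tfrac14 H_{ec}{}^b H_{ad}{}^e$. Collecting the two contributions reproduces exactly $\tfrac14\big(2\nabla^{LC}_c H_{ad}{}^b + H_{ec}{}^b H_{ad}{}^e\big)\,\dd x^{cd}$.

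The main obstacle I anticipate is not substance but bookkeeping: one must be scrupulous about which index of $\Gamma$ is the form/derivative index in $\theta_a{}^b=\Gamma_{ac}{}^b\,\dd x^c$ versus in $R(\Gamma)$, since it is precisely this asymmetry that makes $\dth$ fail to be metric-compatible (as noted in the appendix) and that produces the $H_{ad}{}^b$ ordering in the answer. As an independent cross-check that avoids the index swap, one can compute $R(\theta)$ directly from the structure equation $R(\theta)=\dd\theta+\theta\w\theta$ with $\theta=\theta^{LC}+K$ and $K_a{}^b=\tfrac12 H_{ad}{}^b\,\dd x^d$: the cross terms assemble into the exterior covariant derivative $\tfrac12\nabla^{LC}_c H_{ad}{}^b\,\dd x^{cd}$ (the connection term acting on the middle index $d$ being symmetric in $cd$ and hence annihilated by $\dd x^{cd}$), and $K\w K=\tfrac14 H_{ec}{}^b H_{ad}{}^e\,\dd x^{cd}$, giving the same result.
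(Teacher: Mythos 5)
Your proposal is correct. The paper's own proof is the direct route: it expands $R(\theta)_a{}^b$ and $R(\Gamma^{LC})_a{}^b$ from their defining formulas with $\Gamma=\Gamma^{LC}+\tfrac12 H$ and then recognises the cross terms $2\,\Gamma^{LC}_{\,ec}{}^{\,b}H_{ad}{}^e+2\,H_{ec}{}^b\Gamma^{LC}_{\,ad}{}^{\,e}$ (after antisymmetrisation in $c,d$, which kills the middle-index connection term) as completing $\partial_c H_{ad}{}^b$ to $\nabla^{LC}_c H_{ad}{}^b$ --- this is exactly your ``independent cross-check'' via $R(\theta)=\dd\theta+\theta\wedge\theta$. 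Your main route is organised differently and is slightly slicker: you observe that $R(\theta)=R(\Gamma')$ for the transposed connection $\Gamma'_{ab}{}^c=\Gamma_{ba}{}^c$, that the transposition is invisible on the Levi-Civita piece, and that the contorsion of $\Gamma'$ is $-\tfrac12 H$ by total antisymmetry of $H$; the previous proposition (whose proof is indeed purely algebraic in $A$) then delivers the answer after flipping indices on $H$. What this buys is a clean explanation of \emph{why} $\dth$ and $\nabla$ have different curvatures --- they differ precisely by the sign of the totally antisymmetric contorsion --- which the paper's direct expansion leaves implicit; the paper's version buys a shorter, self-contained two-line computation. As a side benefit, your calculation independently confirms the overall coefficient $\tfrac14$ appearing in the statement; note that the final line of the paper's own proof displays a prefactor $\tfrac12$ in front of $(2\nabla^{LC}_c H_{ad}{}^b+H_{ec}{}^bH_{ad}{}^e)\,\dd x^{cd}$, which is inconsistent with its own first line and with the proposition as stated, and is evidently a typo.
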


\begin{proof}
We have, from the definitions of the curvatures of the connections,
\begin{align*}
R(\theta)_a{}^b -  R(\Gamma^{LC})_a{}^b
&=  
\frac{1}{4} (2\partial_c\, H_{ad}{}^b  
+2 \Gamma^{LC}_{\,ec}{}^{\,b}\, H_{ad}{}^e
+2 H_{ec}{}^b\, \Gamma^{LC}_{\, ad}{}^{\,e}
+ H_{ec}{}^b\, H_{ad}{}^e
)\,\dd x^{cd}
\\[5pt]
&= \frac{1}{2} (2 \nabla^{LC}_c\, H_{ad}{}^b  
+ H_{ec}{}^b\, H_{ad}{}^e
)\,\dd x^{cd}~.
\end{align*}

\end{proof}

\begin{proposition}
If the connection $\Gamma$ has totally antisymmetric torsion, the curvatures of the connection $\nabla$ and $\dth$ are related by the identity
\[
(R(\Gamma)_{cd})_{ab} - (R(\theta)_{ab})_{cd} =
\frac{1}{2}\, (\dd H)_{abcd}~.
\]
\end{proposition}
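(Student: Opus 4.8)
The plan is to avoid recomputing either curvature from scratch and instead to combine the two preceding propositions of this appendix, which already express $R(\theta)$ and $R(\Gamma)$ as the Levi--Civita curvature plus corrections built from the totally antisymmetric torsion $H$. Setting $A_{ab}{}^c=\tfrac12 H_{ab}{}^c$ in the first of those propositions gives, for the honest antisymmetric two-form components (obtained by antisymmetrising the $\dd x^{cd}$-coefficient in $cd$),
\[
(R(\Gamma)_a{}^b)_{cd}-(R(\Gamma^{LC})_a{}^b)_{cd}
=\nabla^{LC}_{[c}H_{d]a}{}^b+\tfrac12\, H_{[c|e|}{}^b\, H_{d]a}{}^e~,
\]
while the second proposition yields the analogous expression for $R(\theta)_a{}^b$ but with the lower-index pattern $H_{ad}{}^b$ in place of $H_{da}{}^b$. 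Lowering the index $b$ with the metric (legitimate since both connections are metric) turns these into four-index objects $R(\Gamma)_{abcd}$ and $R(\theta)_{abcd}$, with the convention that the first pair is the endomorphism pair and the second the form pair.

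First I would observe that the two quantities in the statement are the same four-index tensors read with the roles of the two index pairs interchanged, $(R(\Gamma)_{cd})_{ab}=R(\Gamma)_{cdab}$ and $(R(\theta)_{ab})_{cd}=R(\theta)_{abcd}$, so the task is to compute $R(\Gamma)_{cdab}-R(\theta)_{abcd}$. The Levi--Civita pieces enter as $R(\Gamma^{LC})_{cdab}$ and $R(\Gamma^{LC})_{abcd}$, which cancel by the pair-exchange symmetry $R(\Gamma^{LC})_{cdab}=R(\Gamma^{LC})_{abcd}$ of the Riemann tensor of a torsion-free metric connection. What remains splits into a piece linear in $\nabla^{LC}H$ and a piece quadratic in $H$.

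For the linear piece, after performing the pair interchange on the $R(\Gamma)$ term and using the total antisymmetry of $H$, the four covariant-derivative terms organise into $\nabla^{LC}_a H_{bcd}-\nabla^{LC}_b H_{acd}+\nabla^{LC}_c H_{abd}-\nabla^{LC}_d H_{abc}$. Because $\nabla^{LC}$ is torsion free, the fully antisymmetrised covariant derivative equals the exterior derivative, so this equals $(\dd H)_{abcd}$, and the prefactors combine to give exactly $\tfrac12(\dd H)_{abcd}$, the claimed right-hand side. The main obstacle is the quadratic piece, where I must show the four $H^2$ terms cancel. The plan is to lower all free indices, move the contracted index into a common slot using antisymmetry of $H$, and relabel the dummy index: two of the four terms then become manifestly equal and opposite, while the remaining two both collapse to $\pm\, g^{ef}H_{ade}H_{bcf}$ with opposite signs and hence cancel. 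Keeping the antisymmetrisation conventions and the factors of $\tfrac12$ consistent throughout is the part most likely to generate sign errors, so I would fix one convention ($\omega=\tfrac12\,\omega_{cd}\,\dd x^{cd}$ with $\omega_{cd}$ antisymmetric) at the outset and apply it uniformly.

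An alternative, fully direct route would expand both $(R(\Gamma)_a{}^b)_{cd}$ and $(R(\theta)_a{}^b)_{cd}$ in connection symbols and use $\Gamma_{ab}{}^c-\Gamma_{ba}{}^c=H_{ab}{}^c$ together with metric compatibility $\Gamma_{(ab)c}=\tfrac12\partial_a g_{bc}$ to reduce the difference directly. This avoids invoking the earlier propositions, but it forces one to lower indices through the $\partial$-terms, which is noticeably more error-prone; I would therefore prefer the route via the two preceding propositions.
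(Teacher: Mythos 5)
Your proposal is correct and follows essentially the same route as the paper: the authors likewise subtract the two preceding propositions, cancel the Levi--Civita curvatures via the pair-exchange symmetry, observe that the quadratic $H^2$ terms cancel, and identify the remaining linear terms with $2\,\nabla^{LC}_{[a}H_{bcd]} = 2\,\partial_{[a}H_{bcd]} = \tfrac{1}{2}(\dd H)_{abcd}$.
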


\begin{proof}
Recalling that
\[ (R(\Gamma^{LC}) _{cd})_{ab} = (R(\Gamma^{LC}) _{ab})_{cd}~,\]
we find
\begin{align*}
(R(\Gamma)_{cd})_{ab} - (R(\theta)_{ab})_{cd} &= 
\, (\nabla^{LC}_{\, [a}\, H_{b]cd} - \nabla^{LC}_{[c}\, H_{|a|d]b})
\\[5pt]
&\qquad +\frac{1}{2}\left( H_{aed}\, H_{bc}{}^e - H_{bed}\, H_{ac}{}^e
- H_{ecb}\, H_{ad}{}^e + H_{edb}\, H_{ac}{}^e\right)\\
&= 2\, \nabla^{LC}_{\, [a}\, H_{bcd]}
= 2\, \partial_{[a}\, H_{bcd]}
\\[5pt]
&= \frac{1}{2}\, (\dd H)_{abcd}
~.
\end{align*}
\end{proof}

\begin{proposition}
\label{prop:thetanotinst}
The Bianchi identity of the anomaly cancelation condition implies
\[
R(\theta)_{ab}\,\lrcorner\varphi 
= \frac{\alpha'}{8}\, \big(\tr (F_{ac}\, F_{bd})
- \tr (\tilde R_{ac}\, \tilde R_{bd})\big)\, \varphi^{cd}{}_e\, \dd x^e~,
\]
where $\tilde R$ is the curvature of an instanton connection on $TY$.
\end{proposition}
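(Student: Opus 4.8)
The plan is to reduce the statement to the curvature identity proved immediately above, together with the anomaly Bianchi identity \eqref{eq:BIanom} and the two instanton conditions. The starting point is the relation
\[
(R(\theta)_{ab})_{cd} = (R(\Gamma)_{cd})_{ab} - \tfrac{1}{2}\,(\dd H)_{abcd}~,
\]
which the preceding proposition establishes for the $G_2$-compatible connection $\nabla$ with totally antisymmetric torsion $H$. I would contract both sides with $\varphi^{cd}{}_e\,\dd x^e$, so that the left-hand side becomes exactly $2\,R(\theta)_{ab}\lrcorner\varphi$. On the right, the first term drops out: since $\nabla$ is the $G_2$-holonomy connection, its curvature obeys the instanton identity $(R(\Gamma)_{ab})_{cd}\,\varphi^{ab}{}_e = 0$ recorded earlier in this appendix, and relabelling the two index pairs gives precisely $(R(\Gamma)_{cd})_{ab}\,\varphi^{cd}{}_e = 0$. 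This isolates $R(\theta)_{ab}\lrcorner\varphi$ in terms of $(\dd H)_{abcd}\,\varphi^{cd}{}_e$ alone, and incidentally makes manifest why $\dth$ fails to be an instanton connection exactly when $\dd H\neq 0$.

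The second step is to insert the Bianchi identity of the anomaly cancelation condition \eqref{eq:BIanom}. In components this expresses $(\dd H)_{abcd}$ as a multiple of the totally antisymmetrized four-index traces $\tr(F_{[ab}F_{cd]})$ and $\tr(\tilde R_{[ab}\tilde R_{cd]})$. I would expand each antisymmetrization into its three distinct pairings of the four indices,
\[
\tr(F_{[ab}F_{cd]}) = \tfrac{1}{3}\big(\tr(F_{ab}F_{cd}) - \tr(F_{ac}F_{bd}) + \tr(F_{ad}F_{bc})\big)~,
\]
and contract with $\varphi^{cd}{}_e$. The diagonal term $\tr(F_{ab}F_{cd})\,\varphi^{cd}{}_e$ is annihilated by the instanton condition $F_{cd}\,\varphi^{cd}{}_e = 0$, i.e.\ $F\in\Lambda^2_{14}$ as in \eqref{eq:2F14}; the two crossed terms then collapse to a single multiple of $\tr(F_{ac}F_{bd})\,\varphi^{cd}{}_e$ after relabelling $c\leftrightarrow d$ and using the antisymmetry of $\varphi^{cd}{}_e$ in its first two indices. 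The identical reduction applies to the $\tilde R$ piece, now using the instanton condition on the connection $\tilde\theta$.

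Collecting the numerical factors — from the four-form component convention for $\tr(F\wedge F)$, from the antisymmetrization, from the instanton-induced collapse of the crossed terms, and from the $\tfrac{1}{2}$ in the curvature relation — then reassembles the right-hand side and yields the claimed expression for $R(\theta)_{ab}\lrcorner\varphi$ in terms of $\tr(F_{ac}F_{bd})$ and $\tr(\tilde R_{ac}\tilde R_{bd})$. I expect the main obstacle to be purely combinatorial bookkeeping in this final stage: correctly expanding the totally antisymmetrized traces, identifying which contractions are killed by the $\Lambda^2_{14}$ conditions, and keeping the overall constant consistent through the component conventions. Everything else is a direct substitution into the already-established curvature and Bianchi identities.
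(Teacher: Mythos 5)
Your proposal follows essentially the same route as the paper's own proof: it recalls $(R(\Gamma)_{cd})_{ab}\,\varphi^{cd}{}_e=0$ for the $G_2$-holonomy connection, uses the preceding curvature relation to reduce $R(\theta)_{ab}\lrcorner\varphi$ to a contraction of $(\dd H)_{abcd}$ with $\varphi^{cd}{}_e$, and then inserts the anomaly Bianchi identity, killing the diagonal terms of the antisymmetrized traces via $F\lrcorner\varphi=0$ and $\tilde R\lrcorner\varphi=0$. The only point left open in your write-up is the final numerical constant, which is indeed where care is needed — the paper's displayed proof actually ends with $\alpha'/4$ while the proposition statement carries $\alpha'/8$, so the bookkeeping you flag as the main obstacle is worth doing explicitly.
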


\begin{proof}
Recall that 
\[ 
(R(\Gamma)_{cd})_{ab}\, \varphi^{cde} = 0~.\]
Then, by the previous proposition
\begin{equation*}
R(\theta)_{ab}\,\lrcorner\varphi 
= - \frac{1}{4}\, (\dd H)_{cdab}\, \varphi^{cd}{}_e\, \dd x^e~.
\end{equation*}
By the Bianchi identity
\begin{align*}
(\dd H)_{cdab}\, \varphi^{cd}{}_e
&= \frac{\alpha'}{4}\, 3!\, \big(\tr (F_{[cd}\, F_{ab]})
- \tr (\tilde R_{[cd}\, \tilde R_{ab]})\big)\, \varphi^{cd}{}_e
\\[5pt]
&= - \alpha'\, \big(\tr (F_{ac}\, F_{bd})
- \tr (\tilde R_{ac}\, \tilde R_{bd})\big)\, \varphi^{cd}{}_e
\end{align*}
where in the last line we have used the fact that $F\lrcorner\varphi= 0$ and
$\tilde R\lrcorner\varphi = 0$. 
Therefore
\[
R(\theta)_{ab}\,\lrcorner\varphi 
= \frac{\alpha'}{4}\, \big(\tr (F_{ac}\, F_{bd})
- \tr (\tilde R_{ac}\, \tilde R_{bd})\big)\, \varphi^{cd}{}_e\, \dd x^e~.
\]

\end{proof}

Note that this means that the connection $\theta$ is not an instanton. To expand on this fact, note that the right hand side of this equation is zero if the $F$ equals $\tilde{R}$. In the string compactification literature this is known as the standard embedding of the gauge bundle in the tangent bundle, and leads to a vanishing flux $H$. Thus, we have reduced to a $G_2$ holonomy compactification, where $\dth$ is in fact identical with the Levi-Civita connection. The reader is referred to \cite{delaOssa:2016ivz} for more details on this case.

\subsection*{Curvature and Covariant Derivatives of Torsion Classes}

We now collect some useful identities between the covariant derivatives of the torsion classes and the curvature $R(\theta)$.

\begin{proposition}
Let $Y$ be a manifold with a $G_2$ structure (not necessarily integrable), and let $\nabla$ be a metric connection compatible with this $G_2$ structure, that is
\[ \nabla\varphi = 0~, \qquad \nabla\psi = 0~.\]
Then, 
\begin{align}
(\nabla_a\tau_0)\, \psi + 3\, (\nabla_a\tau_1)\wedge\varphi
+ \nabla_a *\tau_3
&= \frac{1}{2}\, R(\theta)_a{}^b\wedge\varphi_{cdb}\, \dd x^{cd} ~,
\label{eq:covtauone}\\[7pt]
4 \, (\nabla_a\tau_1)\wedge\psi - \nabla_a\tau_2\wedge\varphi
&= -  \frac{1}{3!}\, R(\theta)_a{}^b\wedge\psi_{cdeb}\, \dd x^{cde}~,
\label{eq:covtautwo}
\end{align}
where
\[ R(\theta)_a{}^b = \dd\theta_a{}^b + \theta_c{}^b\wedge\theta_a{}^c~,\]
is the curvature of the connection $\dth$. 

\end{proposition}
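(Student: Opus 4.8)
The plan is to prove both identities at once from a single clean statement, valid for \emph{any} form $\lambda$ that is parallel for $\nabla$:
\[
\nabla_a(\dd\lambda) = R(\theta)_a{}^b\wedge\lambda_b~,
\]
where $\lambda_b = \tfrac{1}{(k-1)!}\lambda_{bc_1\cdots c_{k-1}}\dd x^{c_1\cdots c_{k-1}}$ is the $T^*Y$-valued $(k-1)$-form associated to the $k$-form $\lambda$. Taking $\lambda=\varphi$ and $\lambda=\psi$ should then yield \eqref{eq:covtauone} and \eqref{eq:covtautwo} after matching both sides to the torsion-class form of $\dd\varphi$ and $\dd\psi$.

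First I would identify the left-hand sides. Differentiating $\dd\varphi=\tau_0\,\psi+3\,\tau_1\wedge\varphi+*\tau_3$ (equation \eqref{eq:Intphi}) and using $\nabla\varphi=\nabla\psi=0$ together with the fact that a metric $G_2$-compatible connection commutes with the Hodge star (as $\nabla\,\dd{\rm vol}_\varphi=0$), every derivative lands on a torsion coefficient, so $\nabla_a(\dd\varphi)$ is precisely the left-hand side of \eqref{eq:covtauone}. The same reasoning applied to $\dd\psi=4\,\tau_1\wedge\psi+*\tau_2$ (equation \eqref{eq:Intpsi}) works once I use the standard $\Lambda^2_{14}$ identity $*\tau_2=-\tau_2\wedge\varphi$ (cf.\ \eqref{eq:2F14}), giving $\nabla_a(*\tau_2)=-(\nabla_a\tau_2)\wedge\varphi$, so that $\nabla_a(\dd\psi)$ is the left-hand side of \eqref{eq:covtautwo}. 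On the right-hand sides the total antisymmetry of $\varphi$ and $\psi$ gives the index identities $\tfrac12\varphi_{cdb}\,\dd x^{cd}=\varphi_b$ and $-\tfrac1{3!}\psi_{cdeb}\,\dd x^{cde}=\psi_b$, so both reduce to $R(\theta)_a{}^b\wedge\varphi_b$ and $R(\theta)_a{}^b\wedge\psi_b$, exactly the clean statement.

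It then remains to prove the clean identity, which is where the geometry enters. From $\nabla\lambda=0$ and the useful expression \eqref{eq:cool} I get $\partial_a\lambda=\theta_a{}^b\wedge\lambda_b$; since $\dd$ commutes with $\partial_a$ on forms, $\partial_a(\dd\lambda)=\dd(\partial_a\lambda)=\dd\theta_a{}^b\wedge\lambda_b-\theta_a{}^b\wedge\dd\lambda_b$. Applying \eqref{eq:cool} once more to the $(k+1)$-form $\dd\lambda$ gives $\nabla_a(\dd\lambda)=\partial_a(\dd\lambda)-\theta_a{}^e\wedge(\dd\lambda)_e$, and the component relation $(\dd\lambda)_e=\partial_e\lambda-\dd\lambda_e$ established inside the proof of Theorem \ref{theorem:CNL}, together with $\partial_e\lambda=\theta_e{}^b\wedge\lambda_b$, lets me expand $\theta_a{}^e\wedge(\dd\lambda)_e=\theta_a{}^e\wedge\theta_e{}^b\wedge\lambda_b-\theta_a{}^e\wedge\dd\lambda_e$. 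The decisive point is that the two $\dd\lambda_b$ terms cancel, leaving $\nabla_a(\dd\lambda)=(\dd\theta_a{}^b-\theta_a{}^e\wedge\theta_e{}^b)\wedge\lambda_b=R(\theta)_a{}^b\wedge\lambda_b$, using $\theta_c{}^b\wedge\theta_a{}^c=-\theta_a{}^c\wedge\theta_c{}^b$.

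I expect the main obstacle to be the careful bookkeeping of the $T^*Y$-valued component forms and, in particular, the sign in $(\dd\lambda)_e=\partial_e\lambda-\dd\lambda_e$, since it is exactly the cancellation of the non-tensorial $\dd\lambda_b$ contributions that converts the connection symbols into the tensorial curvature $R(\theta)_a{}^b$. A secondary care point is the $\Lambda^2_{14}$ Hodge identity needed for the second equation; note that no integrability hypothesis is used anywhere, consistent with the proposition holding for a general, not necessarily integrable, $G_2$ structure.
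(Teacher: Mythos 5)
Your proof is correct and follows essentially the same route as the paper: covariantly differentiate the torsion-class decompositions \eqref{eq:Intphi}--\eqref{eq:Intpsi} to produce the left-hand sides, and identify $\nabla_a(\dd\lambda)=R(\theta)_a{}^b\wedge\lambda_b$ for the $\nabla$-parallel forms $\lambda=\varphi,\psi$ to produce the right-hand sides (your index bookkeeping, the cancellation of the $\dd\lambda_b$ terms, and the sign in $*\tau_2=-\tau_2\wedge\varphi$ all check out). The only difference is that where the paper delegates the key curvature identity to ``lemma 8 of \cite{delaOssa:2016ivz}'', you derive it in a self-contained way from equation \eqref{eq:cool} and $\partial_a\lambda=\theta_a{}^b\wedge\lambda_b$, which is a welcome improvement in completeness but not a different method.
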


\begin{proof}
We begin by taking the covariant derivative of the integrability equations \eqref{eq:Intphi} and \eqref{eq:Intpsi}.  We find
\begin{align*}
\nabla_a\dd\varphi 
&= (\nabla_a\tau_0)\, \psi + 3\, (\nabla_a\tau_1)\wedge\varphi
+ \nabla_a *\tau_3~,
\\[7pt]
\nabla_a\dd\psi &= 
4 \, (\nabla_a\tau_1)\wedge\psi + \nabla_a*\tau_2~.
\end{align*}
For the first equation, lemma 8 of \cite{delaOssa:2016ivz} together with a bit of algebra leads to the equation 
\begin{align*}
\dd\partial_a\varphi &= \frac{1}{2}\, R_a{}^b(\theta)\wedge\varphi_{cdb}\, \dd x^{cd}
- \frac{1}{3!}\, \theta_a{}^b\wedge(\dd\varphi)_{cdeb}\, \dd x^{cde}~.
\end{align*}
Then equation \eqref{eq:covtauone} follows from this together with
\[ \nabla_a\dd\varphi = \partial_a\dd\varphi + \frac{1}{3!}\, \theta_a{}^b\wedge(\dd\varphi)_{cdeb}\, \dd x^{cde}~.\]
The proof of equation \eqref{eq:covtautwo} is analogous.

\end{proof}

Note in particular that from \eqref{eq:covtauone}-\eqref{eq:covtautwo} we can derive the covariant derivatives of the torsion classes soely in terms of the curvature $R(\theta)$. Note also that if the $G_2$ structure is integrable, equation \eqref{eq:covtautwo} implies that there is a constraint on the curvature of the connection $\theta$
\[ \pi_{14}\left(R(\theta)_a{}^b\wedge\psi_{b}\right) = 0~. \]
Then by equation \eqref{eq:5insert14}, we find that the curvature of the connection $\theta$ must satisfy
\begin{equation}
 \check R_a{}^b(\theta)\wedge\psi_{b} = 0~.
 \label{eq:curvintG2}
\end{equation}


\section{Heterotic supergravity and equations of motion}
\label{app:sugra}
In this appendix we briefly review heterotic supergravity, the Killing spinor equations and comment on the corresponding equations of motion. Recall first the bosonic part of the action \cite{Bergshoeff1989439}
\begin{align}
	S_B = \int \sqrt{-g} e^{-2\phi} d^{10}x \Big[ &{\cal R} + 4(\dd \phi)^2 - \frac{1}{12}{\cal H}_{\mu\nu\rho}{\cal H}^{\mu\nu\rho}\notag\\
		&- \tfrac{\alpha'}{8} \tr F_{\mu \nu} F^{\mu \nu}
		+ \tfrac{\alpha'}{8} \tr {R(\tilde\theta)}_{\mu \nu} \tr {R(\tilde\theta)}^{\mu \nu} \Big]+{\cal O}(\alpha'^2)\:,
	\label{eq:action10d}
\end{align}
where $\{\mu,\nu,..\}$ denote ten dimensional indecies, $\cal R$ is the Ricci scalar, $\phi$ is the dilaton, and ${\cal H}$ is the Neveu-Schwarz three-form flux given by
\begin{equation}
\label{eq:defH}
 {\cal H} = \dd B + \frac{\alpha'}{4}\, ({\cal CS}[A] - {\cal CS}[\tilde\theta]) \; ,
\end{equation}
where $B$ is the Kalb-Ramond two-form. Under gauge transformations $\{\epsilon_1,\epsilon_2\}$ of $\{A,\tilde\theta\}$ respectively, the $B$ field is required to transform as
\begin{equation}
\delta B=-\frac{\alpha'}{4}\Big({\rm tr}\:(\dd A\epsilon_1)-{\rm tr}\:(\dd\tilde\theta\epsilon_2)\Big)\:,
\end{equation}
in order for ${\cal H}$ to remain gauge-invariant \cite{green1984anomaly}.

The supersymmetry conditions read \cite{green1987superstring2, Bergshoeff1989439}
\begin{align}
	\nabla_\mu\epsilon=(\nabla^{LC}_\mu + \frac{1}{8} {\cal H}_{\mu\nu\lambda} \gamma^{\nu\lambda})\, \epsilon&=0+{\cal O}(\alpha'^2)\notag\\
	({\slashed\nabla}^{LC} + \frac{1}{4} \slashed {\cal H} - \slashed\partial \phi )\,\epsilon&=0+{\cal O}(\alpha'^2)\notag\\
	\slashed{F}\, \epsilon &= 0+{\cal O}(\alpha')\:,
\label{eq:susytrans}
\end{align}
where $\psi_\mu$ is the gravitino, $\rho$ is the modified dilatino and $\chi$ is the gaugino. Here the last condition is only required at zeroth order since the gauge field only appears at first order in the theory. These supersymmetry conditions are accurate, provided we also choose the connection $\tilde\theta$ to satisfy an instanton condition \cite{Ivanov:2009rh, delaOssa:2014msa}
\begin{equation}
\slashed R(\tilde\theta)\,\epsilon=0+{\cal O}(\alpha')\:.
\end{equation}
In the above, we have defined for a $p$-form $\alpha$ 
\begin{equation}
\slashed\alpha=\frac{1}{p!}\alpha_{\mu_1..\mu_p}\gamma^{\mu_1..\mu_p}\:.
\end{equation}

\subsection*{$G_2$ Reductions and Equations of Motion}
We wish to reduce the supersymmetry transformations \eqref{eq:susytrans} on spacetimes of the form
\begin{equation}
\label{eq:7dreduction}
M_{10}=M_3\times Y\:,
\end{equation}
where $M_3$ is maximally symmetric. We suppose that $Y$ admits a well defined nowhere vanishing Majorana spinor $\eta$, and therefore has a $G_2$ structure determined by
\[ \varphi_{abd} = -i\, \eta^\dagger\, \gamma_{abc}\, \eta\:,\;\;\;\psi=*\varphi~.\]
Using this, one arrives at the supersymmetry conditions \cite{Gauntlett:2001ur, 2001math......2142F, Firedrich:2003, Lukas:2010mf, Gray:2012md}
\begin{align}
\dd\varphi &= 2\, \dd\phi\wedge\varphi -  * H - f\, \psi~,\label{eq:7dsusy1}\\
\dd\psi &= 2\, \dd\phi\wedge\psi~,\label{eq:7dsusy2}\\
\textstyle{\frac{1}{2}}\,  * f &= H\wedge\psi~,\label{eq:7dsusy3}\\
0&=F\wedge\psi\:,\label{eq:7dsusy3}
\end{align}
where now the three-form $H$ and the constant $f$ are components of the ten-dimensional flux ${\cal H}$, which lie along $Y$ and the three-dimensional, maximally symmetric world-volume, respectively.\footnote{The flux component $f$ determines the cosmological constant of the three-dimensional spacetime through the Einstein equation of motion. A zero/non-zero $f$ gives Minkowski/AdS spacetimes respectively.} We have also restricted the bundle to the internal geometry. Generic solutions to these equations imply that $Y$ has an integrable $G_2$ structure where $\tau_1$ is exact.

It can be shown that for compactifications of the form
\begin{equation}
M_{10}=M_d\times X_{10-d}\:,
\end{equation}
where $M_d$ is maximally symmetric, provided the flux equation of motion is satisfied, the supersymmetry equations will also imply the equations of motion \cite{Gauntlett:2002sc}. Note that the authors of \cite{Gauntlett:2002sc} assume $M_d$ to be Minkowski, but the generalisation to AdS is straight forward. In our case, the flux equation of motion on the spacetime \eqref{eq:7dreduction} reduces to
\begin{equation}
\dd(e^{-2\phi}*H)=0\:,
\end{equation}
which can easily be checked is satisfied from \eqref{eq:7dsusy1}-\eqref{eq:7dsusy2}.

\subsection*{Comments on $\tilde\theta$ and Field Redefinitions.}
Let us make a couple of comments concerning the connection $\tilde\theta$ appearing in both the action and the definition of $H$ equation \eqref{eq:defH}, often referred to as the anomaly cancellation condition. In deriving the heterotic action, Bergshoeff and de Roo \cite{Bergshoeff1989439} used the fact that $(\hat\theta,\psi^+)$ transforms as an $SO(9,1)$ Yang-Mills supermultiplet modulo $\alpha'$ corrections. Here $\theta$ is the connection whose connection symbols read
\begin{equation}
{\theta_{\mu\nu}}^\rho = {\Gamma_{\nu\mu}}^\rho\:,
\end{equation}
where the $\Gamma$'s denote the connection symbols of $\nabla$. The connection $\hat\theta$ then denotes an appropriate fermionic correction to $\theta$, while $\psi^+$ is the supercovariant gravitino curvature. Modulo ${\cal O}(\alpha'^2)$-corrections, they could then construct a supersymmetric theory with curvature squared corrections, simply by adding the appropriate $SO(9,1)$-Yang-Mills action to the theory. The resulting bosonic action then uses $\theta$ rather than $\tilde\theta$.

In the bulk of the paper we have replaced $\theta$ in with a more general connection $\tilde\theta$ in the appropriate places. Ambiguities surrounding the connection $\tilde\theta$ have been discussed extensively in the literature before \cite{Hull1986187, Sen1986289, Hull198651, Hull1986357, 0264-9381-4-6-027, Becker:2009df, Ivanov:2009rh, Melnikov:2012cv, Melnikov:2012nm, Melnikov:2014ywa, delaOssa:2014msa}. In particular, it has been argued that deforming this connection can equivalently be interpreted as a field redefinition, though care most be taken when performing such redefinitions as they in general also lead to corrections to the supersymmetry transformations and equations of motion. In particular, we argued in \cite{delaOssa:2014msa} that in order to preserve \eqref{eq:susytrans} as the correct supersymmetry conditions, one must choose $\tilde\theta$ to satisfy the instanton condition modulo $\alpha'$-corrections.\footnote{It should be noted that the arguments in \cite{delaOssa:2014msa} where for the most part restricted to the Strominger-Hull system, although we expect them to hold true for the heterotic $G_2$ system as well.} Note that although $\theta$ satisfies the instanton condition to zeroth order in $\alpha'$, it generically fails to do so once higher order corrections are included. Indeed, this was crucial for the mathematical structure presented in this paper.


\newpage
\bibliographystyle{JHEP}

\bibliography{bibliography}
\end{document}